\theoremstyle{definition}
\newlength\min@xx
\newtheorem{theorem}{Theorem}
\newtheorem{lemma}{Lemma}
\newtheorem{corollary}{Corollary}
\newtheorem{algorithm}{Algorithm}
\def\ket#1{\left| #1 \right\rangle}
\newcommand{\codepar}[1]{\ensuremath{[\![#1]\!]}}
\DeclareMathOperator{\wt}{wt}
\title{Synchronizable hybrid subsystem codes}
\author{Theerapat Tansuwannont}
\affiliation{Center for Quantum Information and Quantum Biology, Osaka University, Toyonaka, Osaka 560-0043, Japan}
\email{t.tansuwannont.qiqb@osaka-u.ac.jp}
\author{Andrew Nemec}
\affiliation{Department of Computer Science, University of Texas at Dallas, Richardson, TX 75080, USA}
\affiliation{Duke Quantum Center, Duke University, Durham, NC 27701, USA}
\affiliation{Department of Electrical and Computer Engineering, Duke University, Durham, NC 27708, USA}
\email{andrew.nemec@utdallas.edu}
\begin{document}

\maketitle

\begin{abstract}
Quantum synchronizable codes are quantum error correcting codes that can correct not only Pauli errors but also errors in block synchronization. The code can be constructed from two classical cyclic codes $\mathcal{C}$, $\mathcal{D}$ satisfying $\mathcal{C}^{\perp} \subset \mathcal{C} \subset \mathcal{D}$ through the Calderbank-Shor-Steane (CSS) code construction. In this work, we establish connections between quantum synchronizable codes, subsystem codes, and hybrid codes constructed from the same pair of classical cyclic codes. We also propose a method to construct a synchronizable hybrid subsystem code which can correct both Pauli and synchronization errors, is resilient to gauge errors by virtue of the subsystem structure, and can transmit both classical and quantum information, all at the same time. The trade-offs between the number of synchronization errors that the code can correct, the number of gauge qubits, and the number of logical classical bits of the code are also established. In addition, we propose general methods to construct hybrid and hybrid subsystem codes of CSS type from classical codes, which cover relevant codes from our main construction.
\end{abstract}

\section{Introduction} \label{sec:intro}

Large-scale quantum computing, which requires quantum error correction, fault-tolerant quantum computation, and quantum communications as key ingredients, has advanced rapidly in the past few decades. Recent breakthroughs include good quantum low-density parity-check (LDPC) codes \cite{BE21,HHJO21,EKZ22,LZ22,PK22a,PK22b,DHLV23}, fault-tolerant computation with low overheads \cite{FGL20,CKBB22,TDB22,BCGM+24,XBPR+24,YK24,ZZCB+24}, and experimental realizations of small-sized codes with logical error rates close to the break-even point \cite{EDNR+21,EPMP+21,RBLG+21,PHPR+22,RBAA+22,google23,SERS+23,BEGL+24,DRBC+24,Google2024}. Modular architectures for various platforms \cite{MRRB+14,HIVC+15,BKM16,BCRW+19,BKLN+21,GPSR+21,BDGG22,BBBC+23,NZLQ+23,SB23,RSBV24} have also been proposed to scale-up the system size. To connect between modules, communication through noisy quantum channels is thus required.

In addition to the effects of decoherence and faulty circuits, quantum communications can be affected by errors in block synchronization. In both classical and quantum communications, information is often encoded into an error-correcting code block, and multiple blocks are often sent consecutively. To decode the information correctly, it is important that the receiver correctly identifies the boundaries between each block of information; that is, the first bit (or qubit) of the block defined by the sender and the first bit (or qubit) of the block marked by the receiver must be the same. If the blocks become misaligned and the boundary is misidentified, the information may be read or decoded incorrectly. The process of block synchronization can be challenging if the sender and the receiver are far apart and an external mechanism to correct block alignment cannot be simply applied.

One solution to the block synchronization problem in the classical setting is to insert marker bits into the code block or use some bit pattern as a preamble (see \cite{sklar2021,bregni2002} for example). However, these procedures cannot be directly applied in the quantum setting as they require qubit measurement, which could potentially destroy quantum information in the code block. One solution in the case of quantum information is to make use of quantum synchronizable codes (QSCs), first introduced by Fujiwara in \cite{Fujiwara2013}. Here, the author provides a construction of QSCs from a nested pair of self-orthogonal classical cyclic codes via the Calderbank-Shor-Steane (CSS) construction of quantum codes \cite{Calderbank1996, Steane1996}. Since then, multiple families of QSCs have been studied \cite{Fujiwara2013-Jul, Fujiwara2013-Nov, Xie2014, Xie2016, Guenda2017, Luo2018}.

Other variants of quantum error-correcting codes that specialize in different aspects of error correction have also received recent attention. Subsystem codes \cite{Kribs2005, Poulin2005, Bacon2006} allow for greater flexibility in the decoding of quantum codes due to a gauge subsystem, while hybrid quantum-classical codes \cite{Kremsky2008, Grassl2017, Nemec2018, Nemec2021} simultaneously encode both quantum and classical information together in a single code. Recently, these two code families have been unified into hybrid subsystem codes \cite{Nemec2022, Dauphinais2024}.

%In this paper, we generalize the notion of QSCs to both hybrid synchronizable codes and subsystem synchronizable codes, which we unify into [hybrid subsystem synchronizable codes?]. We show that there is a trade-off between the number of synchronization errors the code can correct, the number of gauge qubits of the code, and the amount of classical information encoded. We also show how QSCs can be described in terms of hybrid codes, where no classical information is initially transmitted but the subspace contains information about synchronization errors upon decoding.

In this work, we generalize the notion of QSCs in stabilizer formalism and show that from the same pair of classical cyclic codes that can be used to construct a QSC, it is possible to construct hybrid codes, subsystem codes, and hybrid subsystem codes that can also protect against block synchronization errors; our main results are stated in \cref{thm:QSC_unified}. These codes are unified into a family of synchronizable hybrid subsystem codes. It is possible to show that QSCs developed in \cite{Fujiwara2013} can be described in terms of hybrid codes, where no classical information is initially transmitted but the subspace contains classical information about synchronization errors upon decoding. We also establish trade-offs between the number of synchronization errors the code can correct, the number of gauge qubits of the code, and the amount of encoded classical information in \cref{cor:QSC_tradeoff}. 

In addition to the construction of synchronizable hybrid subsystem codes, we also propose general methods to construct hybrid and hybrid subsystem codes of CSS type from classical codes in \cref{thm:CSS_hybrid,thm:CSS_hybrid_subsystem}. The classical codes that can be used to construct hybrid and hybrid subsystem codes by these theorems are not limited to classical cyclic codes, although the resulting quantum codes might not be able to correct synchronization errors. We can show that \cref{thm:CSS_hybrid,thm:CSS_hybrid_subsystem} also cover the constructions of non-synchronizable codes obtained by \cref{thm:QSC_unified}.

This paper is organized as follows. \cref{sec:prelim} covers background information in classical and quantum error correction, including the notions of classical cyclic codes, the CSS code construction, and hybrid subsystem codes. \cref{sec:QSC_stabilizer} reviews the QSC construction from \cite{Fujiwara2013} as well as their encoding and decoding in stabilizer formalism. In \cref{sec:cyclic_to_Pauli}, we construct anticommuting pairs of Pauli operators from the generators of classical cyclic codes, which are useful in our main construction. In \cref{sec:SHSC_main}, we construct a family of synchronizable hybrid subsystem codes and show trade-offs between the number of correctable synchronization errors, the number of gauge qubits, and the amount of encoded classical information. General methods to construct subsystem codes, hybrid codes, and hybrid subsystem codes of CSS type are provided in \cref{sec:hybrid_subsystem_construction}. Finally, we discuss these results and give some concluding remarks in \cref{sec:discussion}.

\section{Preliminaries} \label{sec:prelim}

In this section, we provide a brief overview of certain families of classical and quantum codes, which is necessary for understanding the main construction in this work.

\subsection{Classical cyclic codes} \label{subsec:pre_cyclic}

Consider the binary field $\mathbb{Z}_2=\{0,1\}$. We denote the set of polynomials in $x$ with binary coefficients by $\mathbb{Z}_2[x]$, and let $R_n=\mathbb{Z}_2[x]/(x^n-1)$ be a polynomial ring over $\mathbb{Z}_2$. Any polynomial $c(x)\in R_n$ is of the form $c(x)=c_0+c_1x+\dots+c_{n-1}x^{n-1}$, where $c_0,c_1,\dots,c_{n-1} \in \mathbb{Z}_2$. We define $\mathcal{V}:R_n \rightarrow \mathbb{Z}_2^n$ to be the vectorization function such that $\mathcal{V}\left(c(x)\right)=(c_0,c_1,\dots,c_{n-1})$.

Let $\mathbf{c} = (c_0,c_1,\dots,c_{n-1}) \in \mathbb{Z}_2^n$ be an $n$-bit binary row vector. We define $\mathcal{O}: \mathbb{Z}_2^n \times \mathbb{Z}_n \rightarrow \mathbb{Z}_2^n$ to be the right cyclic shift function such that $\mathcal{O}(\mathbf{c},\alpha)\!=\!(c_{n-\alpha},c_{n-\alpha+1},\dots,c_{n-1},c_0,\dots,c_{n-\alpha-1})$, i.e., $\mathcal{O}(\mathbf{c},\alpha)$ is the vector obtained by cyclically shifting $\mathbf{c}$ to the right by $\alpha$ positions. For convenience, we also write $\mathcal{O}(\mathbf{c},-\alpha)=\mathcal{O}(\mathbf{c},n-\alpha)$, which refers to the vector obtained by cyclically shifting $\mathbf{c}$ to the left by $\alpha$ positions.

Let $\mathcal{C}$ be an $[n,k_c,d_c]$ classical binary linear code that encodes $k_c$ bits into $n$ bits and has code distance $d_c$ (where the distance of a classical code is defined by the minimum Hamming weight of non-zero code words). $\mathcal{C}$ is called a \emph{cyclic code} if for all $\mathbf{c} \in \mathcal{C}$ and for all $\alpha \in \{0,\dots,n-1\}$, $\mathcal{O}(\mathbf{c},\alpha) \in \mathcal{C}$; i.e., a cyclic shift of any code word in $\mathcal{C}$ is also a code word in $\mathcal{C}$. For each cyclic code $\mathcal{C}$ of length $n$, there is a principal ideal $I_{\mathcal{C}}$ of $R_n$ such that $I_{\mathcal{C}}$ and $\mathcal{C}$ are isomorphic, denoted by $I_{\mathcal{C}}\cong\mathcal{C}$. Note that $I_{\mathcal{C}}$ is a set of polynomials in $x$ with binary coefficients while $\mathcal{C}$ is a set of $n$-bit binary vectors, and there is a one-to-one correspondence between a polynomial $c(x)$ in $I_{\mathcal{C}}$ and a binary vector $\mathbf{c} =\mathcal{V}\left(c(x)\right)$ in $\mathcal{C}$. Observe that $\mathcal{V}\left(x^\alpha c(x)\right)=\mathcal{O}(\mathcal{V}\left(c(x)\right),\alpha)$; i.e., multiplying a polynomial $c(x) \in I_{\mathcal{C}}$ by $x^\alpha$ corresponds to cyclically shifting its corresponding code word $\mathbf{c} =\mathcal{V}\left(c(x)\right) \in \mathcal{C}$ to the right by $\alpha$ positions.

For each $I_{\mathcal{C}}$, there is a unique monic polynomial $p(x)$ of minimal degree in $I_{\mathcal{C}}$. $p(x)$ is called a \emph{generator polynomial} of $I_{\mathcal{C}}$, and any 
polynomial $c(x)\in I_{\mathcal{C}}$ can be written as $c(x)=i(x)p(x)$ for some polynomial $i(x)$ of degree $<k_c$. Suppose that $I_{\mathcal{C}}$ corresponds to a cyclic code $\mathcal{C}$ with parameter $[n,k_c,d_c]$. A generator polynomial $p(x)$ of $I_{\mathcal{C}}$ has degree $n-k_c$ and can be written as $p(x)=p_0+p_1x+\dots+p_{n-k_c}x^{n-k_c}$ for some $p_0,\dots,p_{n-k_c}$. We define $\mathbf{p}_i= \mathcal{V}\left(x^{i-1}p(x)\right)=\mathcal{O}(\mathcal{V}\left(p(x)\right),i-1)$ for $i \in \{1,\dots,k_c\}$, which are generators (or basis vectors) of $\mathcal{C}$. Therefore, we can write $\mathcal{C}=\langle \mathbf{p}_i \rangle$, $i \in \{1,\dots,k_c\}$. A generator matrix of $\mathcal{C}$ is
\begin{equation}
G_\mathcal{C} =
    \begin{pmatrix}
        \mathbf{p}_1 \\
        \mathbf{p}_2 \\
        \dots \\
        \mathbf{p}_{k_c}
    \end{pmatrix} =
    \begin{pmatrix}
        p_0 & p_1 & p_2 & \cdots & p_{n-k_c} & 0 & \dots & 0 \\
        0 & p_0 & p_1 & \cdots & p_{n-k_c-1} & p_{n-k_c} & \dots & 0 \\
        \dots &   &   &   &   &   &   & \dots \\
        0 & \dots & 0 & p_0 & p_1 & p_2 & \dots  & p_{n-k_c}
    \end{pmatrix}.
\end{equation}

We can define $\tilde{p}(x)=(x^n-1)/p(x)$ to be the check polynomial of $I_{\mathcal{C}}$. This comes from the fact that for any polynomial $c(x)\in I_{\mathcal{C}}$ of the form $c(x)=i(x)p(x)$, $\tilde{p}(x)c(x)=\tilde{p}(x)i(x)p(x)=0 \;\left(\mathrm{mod}\;(x^n-1)\right)$. $\tilde{p}(x)$ has degree $k_c$ and can be written as $\tilde{p}(x)=\tilde{p}_0+\tilde{p}_1x+\dots+\tilde{p}_{k_c}x^{k_c}$ for some $\tilde{p}_0,\dots,\tilde{p}_{k_c}$. Here we define a polynomial $\tilde{p}_\mathcal{R}(x)=\tilde{p}_{k_c}+\tilde{p}_{k_c-1}x+\dots+\tilde{p}_{0}x^{k_c}$ by sorting the coefficients of the 0-th degree term to the $k_c$-th degree term of $\tilde{p}(x)$ in reverse order. Note that $\tilde{p}_\mathcal{R}(x)$ also has degree $k_c$\footnote{$\tilde{p}_\mathcal{R}(x)$ has the same degree as $\tilde{p}(x)$ since $\tilde{p}_{0}$ are not zero; if $\tilde{p}_{0}$ is zero, both $\tilde{p}(x)$ and $x^{-1}\tilde{p}(x)$ generate the same ideal $I_{\mathcal{R}(\mathcal{C}^\perp)}$ (the principal ideal corresponds to the cyclic code $\mathcal{R}(\mathcal{C}^\perp)$ obtained by reversing the bit order of $\mathcal{C}^\perp$), which contradicts the fact that $\tilde{p}(x)$ is of minimal degree in $I_{\mathcal{R}(\mathcal{C}^\perp)}$.}. We also define $\tilde{\mathbf{p}}_i= \mathcal{V}\left(x^{i-1}\tilde{p}_\mathcal{R}(x)\right)=\mathcal{O}(\mathcal{V}\left(\tilde{p}_\mathcal{R}(x)\right),i-1)$ for $i \in \{1,\dots,n-k_c\}$. A check matrix of $\mathcal{C}$ is 
\begin{equation}
H_\mathcal{C} =
    \begin{pmatrix}
        \tilde{\mathbf{p}}_1 \\
        \tilde{\mathbf{p}}_2 \\
        \dots \\
        \tilde{\mathbf{p}}_{n-k_c}
    \end{pmatrix} =
    \begin{pmatrix}
        \tilde{p}_{k_c} & \tilde{p}_{k_c-1} & \tilde{p}_{k_c-2} & \cdots & \tilde{p}_{0} & 0 & \dots & 0 \\
        0 & \tilde{p}_{k_c} & \tilde{p}_{k_c-1} & \cdots & \tilde{p}_{1} & \tilde{p}_{0} & \dots & 0 \\
        \dots &   &   &   &   &   &   & \dots \\
        0 & \dots & 0 & \tilde{p}_{k_c} & \tilde{p}_{k_c-1} & \tilde{p}_{k_c-2} & \dots  & \tilde{p}_{0}
    \end{pmatrix}.
\end{equation}

A check matrix of $\mathcal{C}$ is also a generator matrix of the dual code $\mathcal{C}^\perp=\{\mathbf{w} \in \mathbb{Z}_2^n\,|\,\mathbf{c}\cdot\mathbf{w}=0 \;\forall \mathbf{c}\in \mathcal{C}\}$ where $\mathbf{c} \cdot \mathbf{w}=\sum_{i=0}^{n-1} c_i w_i$. We can write $\mathcal{C}^\perp=\langle \tilde{\mathbf{p}}_i \rangle$, $i \in \{1,\dots,n-k_c\}$. We also define $I_{\mathcal{C}^\perp}$ to be the principal ideal of $R_n$ that corresponds to $\mathcal{C}^\perp$, whose generator polynomial is $\tilde{p}_\mathcal{R}(x)$. Similar to $I_{\mathcal{C}}$, multiplying a polynomial $w(x) \in I_{\mathcal{C}^\perp}$ by $x^\alpha$ corresponds to cyclically shifting its corresponding code word $\mathbf{w} =\mathcal{V}\left(w(x)\right) \in \mathcal{C}^\perp$ to the right by $\alpha$ positions.

Let $\mathcal{D}$ be another cyclic code with parameter $[n,k_d,d_d]$. We denote a generator polynomial of the corresponding principal ideal $I_\mathcal{D}$ by $q(x)$. The notations of mathematical objects related to cyclic codes $\mathcal{C}, \mathcal{C}^\perp, \mathcal{D}, \mathcal{D}^\perp$ that will be used throughout this work are summarized in \cref{tab:code_notation}. 

\begin{table}[htbp]
    \centering
    \begin{tabular}{|c|c|c|c|c|}
        \hline
         Cyclic code & Parameters & Generators & Principal ideal & Generator polynomial \\
         \hline
         \multirow{2}{*}{$\mathcal{D}^\perp$} & \multirow{2}{*}{$[n,n-k_d]$} & $\tilde{\mathbf{q}}_i$ & \multirow{2}{*}{$I_{\mathcal{D}^\perp}$} & $\tilde{q}_\mathcal{R}(x)$ \\
            &   & $i \in \{1,\dots,n-k_d\}$  &   & of degree $k_d$  \\
         
         \hline
         \multirow{2}{*}{$\mathcal{C}^\perp$} & \multirow{2}{*}{$[n,n-k_c]$} & $\tilde{\mathbf{p}}_i$ & \multirow{2}{*}{$I_{\mathcal{C}^\perp}$} & $\tilde{p}_\mathcal{R}(x)$ \\
            &   & $i \in \{1,\dots,n-k_c\}$  &   & of degree $k_c$  \\
         \hline
         \multirow{2}{*}{$\mathcal{C}$} & \multirow{2}{*}{$[n,k_c,d_c]$} & $\mathbf{p}_i$ & \multirow{2}{*}{$I_\mathcal{C}$} & $p(x)$ \\
            &   & $i \in \{1,\dots,k_c\}$  &   & of degree $n-k_c$  \\
         \hline
         \multirow{2}{*}{$\mathcal{D}$} & \multirow{2}{*}{$[n,k_d,d_d]$} & $\mathbf{q}_i$ & \multirow{2}{*}{$I_\mathcal{D}$} & $q(x)$ \\
            &   & $i \in \{1,\dots,k_d\}$  &   & of degree $n-k_d$  \\
         \hline
    \end{tabular}
    \caption{Summary of notations related to cyclic codes that will be used throughout this work.}
    \label{tab:code_notation}
\end{table}

Suppose that $\mathcal{C} \subset \mathcal{D}$. Then, the generator polynomial $q(x)$ of $I_{\mathcal{D}}$ divides every polynomial in $I_\mathcal{C}$. Therefore, $p(x)=f_1(x)q(x)$ for some polynomial $f_1(x)$ of degree $k_d-k_c$. In addition, $\mathcal{C}^\perp \subseteq \mathcal{C}$ implies that $\tilde{p}_\mathcal{R}(x)=f_2(x)p(x)$ for some polynomial $f_2(x)$ of degree $2k_c-n$. Also, $\mathcal{C} \subset \mathcal{D}$ implies that $\mathcal{D}^\perp \subset \mathcal{C}^\perp$. Consequently, we have that $\tilde{q}_\mathcal{R}(x)=f_3(x)\tilde{p}_\mathcal{R}(x)$ for some polynomial $f_3(x)$ of degree $k_d-k_c$.

\subsection{Stabilizer codes and the CSS construction} \label{subsec:pre_CSS}

An \codepar{n,k,d} \emph{stabilizer code} \cite{Gottesman97} is a quantum code that uses $n$ physical qubits to encode $k$ logical qubits and has code distance $d$. The code can correct up to $\lfloor(d-1)/2\rfloor$ Pauli errors. A stabilizer code is determined by an Abelian group called the \emph{stabilizer group} which is generated by $n-k$ commuting Pauli operators and does not contain scalar multiples of the identity matrix $I^{\otimes n}$. The code space is the simultaneous $+1$ eigenspace of all \emph{stabilizers}, the elements of the stabilizer group.

Next, we consider the transformation of the single-qubit Pauli operators $X_i,Z_i$ ($i=1,\dots,n$) under the encoder $U$ of the stabilizer code. These new operators $\bar{X}_i=UX_iU^\dagger$ and $\bar{Z}_i=UZ_iU^\dagger$ satisfy the normal commutation and anticommutation relations of single-qubit Pauli operators; i.e., $\bar{X}_i$ and $\bar{Z}_i$ anticommute (denoted by $\{\bar{X}_i,\bar{Z}_i\}=0$) for all $i$, and $\bar{X}_i$ and $\bar{Z}_j$ commute (denoted by $[\bar{X}_i,\bar{Z}_j]=0$) for $i\neq j$ ($i,j=1,\dots,n$). These operators operate on the ``virtual'' qubits of the code. In particular, the $\bar{Z}_i$ operators corresponding to the virtual ancilla qubits indexed by $k+1,\dots, n$ generate the stabilizer group,
\begin{equation}
    \mathcal{S}=\left\langle \bar{Z}_{k+1}, \dots, \bar{Z}_n \right\rangle,
\end{equation}
while the corresponding $\bar{X}_i$ operators on the virtual ancilla qubits are often referred to as \emph{destabilizers}. These operators generate the pure errors of the code, which map the code to its orthogonal subspaces.

The centralizer $C(\mathcal{S})$ of the stabilizer group $\mathcal{S}$ are all Pauli operators that commute with the entire stabilizer group, and is generated as follows:
\begin{equation}
    C(\mathcal{S})=\left\langle iI^{\otimes n}, \bar{Z}_1, \dots, \bar{Z}_n, \bar{X}_1, \dots, \bar{X}_k\right\rangle.
\end{equation}
Removing those Pauli operators in the stabilizer group (up to some phase), we get the set $C(\mathcal{S})\setminus\left\langle iI^{\otimes n}, \mathcal{S}\right\rangle$, which are all the logical operators that act nontrivially on the code. The minimum distance of the code is defined as the minimum weight of the elements of this set (where the weight of a Pauli operator is the number of its non-identity tensor factors):
\begin{equation}
    d=\min\wt\left(C(\mathcal{S})\setminus\left\langle iI^{\otimes n}, \mathcal{S}\right\rangle\right).
\end{equation}

If we associate the qubits indexed by $1, \dots, k$ with the input state $\ket{\psi}$, then the set $\{\bar{Z}_1, \dots, \bar{Z}_k, \bar{X}_1, \dots, \bar{X}_k\}$ is a particularly nice choice of generators for the logical operators of the code. The logical group is defined as,
\begin{equation}
    \mathcal{L} = \left\langle iI^{\otimes n}, \bar{Z}_1, \dots, \bar{Z}_k, \bar{X}_1, \dots, \bar{X}_k\right\rangle = C(\mathcal{S})/\mathcal{S}.
\end{equation}

A Calderbank-Shor-Steane (CSS) code \cite{CS96,Steane96b} is a stabilizer code in which its stabilizer generators can be chosen to be purely $X$ or purely $Z$ type. To ease our explanation, we use the following notations throughout this work: Let $\mathbf{v}=(v_0,v_1,\dots,v_{n-1})$ be an $n$-bit binary row vector. We define $X(\mathbf{v})$ to be an $n$-qubit Pauli-$X$ operator of the form $X^{v_0}\otimes\cdots\otimes X^{v_{n-1}}$. We also define an $n$-qubit Pauli-$Z$ operator $Z(\mathbf{v})$ in a similar fashion. Let $\mathbf{u} \cdot \mathbf{v} = u_0 v_0+\dots+u_{n-1} v_{n-1} \;\left(\mathrm{mod}\;2\right)$ be the dot product between two binary vectors $\mathbf{u}$ and $\mathbf{v}$. We have that $X(\mathbf{u})$ and $Z(\mathbf{v})$ anticommute if and only if $\mathbf{u} \cdot \mathbf{v}=0$. Let $\mathcal{C}$ be a classical code with generators $\{\mathbf{v}_i\}$. We let $X(\mathcal{C})$ denote the set $\{X(\mathbf{v})\;|\;\mathbf{v} \in \mathcal{C}\}$ and let $\left\langle X(\mathcal{C})\right\rangle$ denote the group generated by $\{X(\mathbf{v}_i)\}$. $Z(\mathcal{C})$ and $\left\langle Z(\mathcal{C})\right\rangle$ are defined similarly. 

The CSS construction gives a systematic way to construct stabilizer codes from a pair of classical linear codes $\mathcal{C}_x$ and $\mathcal{C}_z$. The construction is provided below.
\begin{theorem} \cite{CS96,Steane96b} \label{thm:CSS}
    Let $\mathcal{C}_x$ and $\mathcal{C}_z$ be classical linear codes with parameters $\left[n,k_x\right]$ and $\left[n,k_z\right]$ respectively. Let $\mathcal{C}_z^\perp\subseteq \mathcal{C}_x$. Then there exists a quantum stabilizer code with stabilizer group $\left\langle X(\mathcal{C}_z^\perp),Z(\mathcal{C}_x^\perp)\right\rangle$ and parameters $\codepar{n,k,d}$, where 
    \begin{align*}
        k & = k_x+k_z-n, \\
        d_x & = \min\wt\left(\mathcal{C}_x\setminus \mathcal{C}_z^\perp\right), \\
        d_z & = \min\wt\left(\mathcal{C}_z\setminus \mathcal{C}_x^\perp\right), \\
        d & =\min\left\{d_x, d_z\right\}.
    \end{align*}
\end{theorem}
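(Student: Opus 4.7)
The plan is to verify each of the four claims in turn, using the symplectic characterization of $n$-qubit Paulis introduced just before the theorem (namely that $X(\mathbf{u})$ and $Z(\mathbf{v})$ commute iff $\mathbf{u}\cdot\mathbf{v}=0$).

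First I would check that $\left\langle X(\mathcal{C}_z^\perp), Z(\mathcal{C}_x^\perp)\right\rangle$ is a legitimate stabilizer group. Abelianness within the $X$-type generators and within the $Z$-type generators is automatic; across types, $X(\mathbf{u})$ with $\mathbf{u}\in\mathcal{C}_z^\perp$ commutes with $Z(\mathbf{v})$ with $\mathbf{v}\in\mathcal{C}_x^\perp$ iff $\mathbf{u}\cdot\mathbf{v}=0$, which follows from the hypothesis $\mathcal{C}_z^\perp\subseteq\mathcal{C}_x$ combined with $\mathbf{v}\in\mathcal{C}_x^\perp$. The group contains no nontrivial scalar multiple of $I^{\otimes n}$ because every product of generators has the form $\pm X(\mathbf{a})Z(\mathbf{b})$ with $\mathbf{a}\in\mathcal{C}_z^\perp$ and $\mathbf{b}\in\mathcal{C}_x^\perp$, and this equals $\pm I^{\otimes n}$ only when $\mathbf{a}=\mathbf{b}=\mathbf{0}$, in which case the sign is forced to be $+1$.

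Next I would count the independent generators. Choosing bases for $\mathcal{C}_z^\perp$ and $\mathcal{C}_x^\perp$ gives $n-k_z$ $X$-type and $n-k_x$ $Z$-type generators; since pure $X$-type and pure $Z$-type products can coincide only at $I^{\otimes n}$, the two sets are jointly independent, producing $2n-k_x-k_z$ stabilizer generators and hence $k=k_x+k_z-n$ encoded qubits.

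For the distance, I would characterize the centralizer by the same symplectic rule: $X(\mathbf{u})Z(\mathbf{v})\in C(\mathcal{S})$ iff $\mathbf{u}\in(\mathcal{C}_x^\perp)^\perp=\mathcal{C}_x$ and $\mathbf{v}\in(\mathcal{C}_z^\perp)^\perp=\mathcal{C}_z$; it represents a non-trivial logical iff $\mathbf{u}\notin\mathcal{C}_z^\perp$ or $\mathbf{v}\notin\mathcal{C}_x^\perp$, with coset freedom $\mathbf{u}\mapsto\mathbf{u}+\mathcal{C}_z^\perp$, $\mathbf{v}\mapsto\mathbf{v}+\mathcal{C}_x^\perp$ coming from absorbing stabilizers. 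Since $\wt(X(\mathbf{u})Z(\mathbf{v}))$ equals the size of the union of supports of $\mathbf{u}$ and $\mathbf{v}$, it is bounded below by $\max(\wt(\mathbf{u}),\wt(\mathbf{v}))$; hence any logical with $\mathbf{u}\notin\mathcal{C}_z^\perp$ has weight at least $\min\wt(\mathcal{C}_x\setminus\mathcal{C}_z^\perp)=d_x$, and symmetrically one gets $d_z$ whenever $\mathbf{v}\notin\mathcal{C}_x^\perp$. Both bounds are achieved by pure $X$- or pure $Z$-type logicals, giving $d=\min\{d_x,d_z\}$. The main obstacle I would anticipate is precisely this lower bound for mixed $X$-$Z$ logicals: one must rule out the possibility that a representative could beat both $d_x$ and $d_z$ by some clever cancellation, but the $\max$-of-weights monotonicity makes this automatic, so the remainder of the proof is routine bookkeeping.
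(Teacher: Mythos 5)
Your proof is correct; the paper itself only cites this theorem from the original CSS references without reproducing a proof, but your argument is the standard one and matches in structure the paper's own proof of the more general hybrid version (its CSS hybrid code theorem), which uses the same three ingredients: commutativity via even support overlap from $\mathcal{C}_z^\perp\subseteq\mathcal{C}_x$, generator counting to get $k=k_x+k_z-n$, and the observation that $\wt\bigl(X(\mathbf{u})Z(\mathbf{v})\bigr)\geq\max\{\wt(\mathbf{u}),\wt(\mathbf{v})\}$ to reduce the distance to $\min\{d_x,d_z\}$. No gaps.
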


In this work, we let $\mathcal{Q}_\mathcal{C}$ denote the CSS code constructed from classical codes $\mathcal{C}_x = \mathcal{C}_z = \mathcal{C}$. The stabilizer generators of $\mathcal{Q}_\mathcal{C}$ correspond to the rows of the parity check matrix of $\mathcal{C}$. If $\mathcal{C}$ is the classical cyclic code described in \cref{subsec:pre_cyclic}, then the stabilizer group of $\mathcal{Q}_\mathcal{C}$ is $\left\langle X(\tilde{\mathbf{p}}_i),Z\left(\tilde{\mathbf{p}}_i\right) \right\rangle$, $i \in \{1,\dots,n-k_c\}$. Throughout this work, we also assume that $\mathcal{C}^\perp$ is a proper subset of $\mathcal{C}$, denoted by $\mathcal{C}^\perp \subset \mathcal{C}$, since we are interested only in the case where $k = 2k_c-n \geq 1$. Similarly, $\mathcal{Q}_\mathcal{D}$ is the CSS code with stabilizer group $\left\langle X(\tilde{\mathbf{q}}_i),Z(\tilde{\mathbf{q}}_i) \right\rangle$, $i \in \{1,\dots,n-k_d\}$.

\subsection{Subsystem codes, hybrid codes, and hybrid subsystem codes} \label{subsec:pre_hybrid}

In this section, we give some background on two generalizations of stabilizer codes\footnote{Here we restrict ourselves to subsystem, hybrid, and hybrid subsystem codes within the stabilizer formalism.}, namely subsystem codes and hybrid quantum-classical codes, as well as their unification, the hybrid subsystem codes.

\subsubsection{Subsystem codes}\label{subsubsec:subsystem_codes}
Subsystem codes \cite{Kribs2005, Poulin2005, Bacon2006}, also known as operator quantum error-correcting codes, are stabilizer codes with unused logical qubits known as gauge qubits. These additional gauge degrees of freedom can be useful during decoding, as errors only need to be corrected up to a gauge operator. Using our description of the virtual operators of stabilizer codes from the previous section, an \codepar{n,k,r,d} subsystem code with $k$ (used) logical qubits and $r$ (unused) gauge qubits is determined by the following groups:
\begin{align}
    \mathcal{L} & = \left\langle iI^{\otimes n}, \bar{Z}_1, \dots, \bar{Z}_k, \bar{X}_1, \dots, \bar{X}_k\right\rangle, \\
    \mathcal{G} & = \left\langle iI^{\otimes n}, \bar{Z}_{k+1}, \dots, \bar{Z}_{n}, \bar{X}_{k+1}, \dots, \bar{X}_{k+r}\right\rangle, \\
    \mathcal{S} & = \left\langle \bar{Z}_{k+r+1}, \dots, \bar{Z}_n \right\rangle,
\end{align}
where the gauge group $\mathcal{G}$ is generated by the stabilizer generators in addition to the anticommuting pairs of gauge operator generators $\bar{Z}_j,\bar{X}_j$ for $k+1\leq j\leq k+r$. As mentioned above, we can view this as moving $r$ pairs of logical operators from $\mathcal{L}$ to our new gauge group $\mathcal{G}$. An alternative way of viewing this is that we have taken $r$ stabilizers and ``unfixed'' them, turning them and their corresponding destabilizers into gauge operators and potentially lowering the minimum distance of the code.

These three groups are connected by,
\begin{equation}
    C(\mathcal{G})=\left\langle\mathcal{L}, \mathcal{S}\right\rangle,
\end{equation}
and in particular we have,
\begin{align}
    \left\langle iI^{\otimes n}, \mathcal{S}\right\rangle & = \mathcal{G}\cap C(\mathcal{G}), \\
    C(\mathcal{S}) & = \left\langle\mathcal{G},C(\mathcal{G})\right\rangle.
\end{align}
The logical operators in $C(\mathcal{G})$ are referred to as the bare logical operators, which act nontrivially on the logical qubits and trivially on the gauge qubits, while the logical operators in $C(\mathcal{S})$ are referred to as the dressed logical operators, which act nontrivially on both the logical and gauge qubits. The minimum distance $d$ of a subsystem code is given by the minimum weight of a nontrivial dressed logical operator, that is
\begin{equation}
    d=\min\wt\left(C(\mathcal{S})\setminus\mathcal{G}\right). \label{eq:dist_subsys}
\end{equation}

A CSS construction for subsystem codes has been studied in \cite{Aly2006, Liu2024}, which we discuss in detail in \cref{sec:hybrid_subsystem_construction}.

\subsubsection{Hybrid codes}\label{subsubsec:hybrid_codes}

Hybrid quantum-classical codes \cite{Kremsky2008, Grassl2017, Nemec2018, Nemec2021}, or more briefly hybrid codes, are those that simultaneously encode both quantum and classical information in a single code. A hybrid code with parameters $\codepar{n,k\!:\!m,d}$ uses $n$ physical qubits to transmit $k$ logical qubits and $m$ logical classical bits simultaneously. One way to view a hybrid code $\mathcal{Q}$ is as a collection of $2^m$ mutually orthogonal quantum codes $Q_i$ of dimension $2^k$ indexed by the classical information. The encoding process involves encoding the quantum message into the quantum code indexed by the classical message.

In terms of the operators on the virtual qubits, a hybrid code can be defined by the following groups:
\begin{align}
    \mathcal{L} & = \left\langle iI^{\otimes n},\bar{Z}_1,\dots,\bar{Z}_k,\bar{X}_1,\dots,\bar{X}_k\right\rangle, \\
    \mathcal{T} & = \left\langle iI^{\otimes n},\bar{X}_{k+1},\dots,\bar{X}_{k+m}\right\rangle, \\
    \mathcal{S}_C & = \left\langle \bar{Z}_{k+1},\dots,\bar{Z}_{k+m}\right\rangle, \\
    \mathcal{S}_Q & = \left\langle \bar{Z}_{k+m+1},\dots,\bar{Z}_n\right\rangle,
\end{align}
where $\mathcal{T}$ is the group of translation operators, which are classical logical operators that translate code words between the constituent quantum codes $Q_i$ of the hybrid code $\mathcal{Q}$, $\mathcal{S}_C$ is the group of classical stabilizers, and $\mathcal{S}_Q$ is the group of quantum stabilizers.

There are multiple ways this hybrid code construction can be interpreted. First, we can understand this as taking pairs of logical operators from $\mathcal{L}$ and dividing them into classical logical operators $\mathcal{T}$ and additional stabilizers $\mathcal{S}_C$. A second way this can be viewed is as picking destabilizers to be classical logical operators, while removing the corresponding stabilizer from the stabilizer group of the code. A third way hybrid codes can be understood is as a gauge-fixing of a subsystem code, where one element in each pair of gauge operators is fixed to become a stabilizer, while its anticommuting partner becomes a translation operator.

When the hybrid code $\mathcal{Q}$ is viewed as a stabilizer code, its stabilizer group corresponds to $\mathcal{S}_Q$, which we also refer to as the outer stabilizer group $\mathcal{S}$. We refer to the group $\mathcal{S}_0=\left\langle\mathcal{S}_C,\mathcal{S}_Q\right\rangle$ as the inner stabilizer group of the code, which stabilizes one of the constituent quantum codes (we can always choose this code to be $Q_0$). Each of the constituent quantum codes can be generated by adding phases to the stabilizer generators in $\mathcal{S}_C$.

The centralizers of the inner and outer codes are given by
\begin{align}
    C(\mathcal{S}) & = \left\langle \mathcal{L}, \mathcal{T}, \mathcal{S}_C, \mathcal{S}_Q\right\rangle, \\
    C(\mathcal{S}_0) & = \left\langle \mathcal{L}, \mathcal{S}_C, \mathcal{S}_Q\right\rangle.
\end{align}
The minimum distance $d$ of the hybrid code is the minimum weight of a nontrivial logical operator (either quantum, classical, or a product of both), that is
\begin{equation}
    d = \min\wt\left(C(\mathcal{S})\setminus\left\langle iI^{\otimes n}, \mathcal{S}_0\right\rangle\right). \label{eq:dist_hybrid}
\end{equation}

In \cref{sec:hybrid_subsystem_construction}, we present a new CSS construction of hybrid codes.

\subsubsection{Hybrid subsystem codes}\label{subsubsec:hybrid_subsystem_codes}

Hybrid subsystem codes were introduced by Dauphinais et al. \cite{Dauphinais2024} to unify subsystem and hybrid codes. A hybrid subsystem code with parameters $\codepar{n,k\!:\!m,r,d}$ uses $n$ physical qubits to transmit $k$ logical qubits and $m$ logical classical bits of information, while having $r$ gauge qubits. As with the previous codes, we can define a hybrid subsystem code by the following groups:
\begin{align}
    \mathcal{L} & = \left\langle iI^{\otimes n},\bar{Z}_1,\dots,\bar{Z}_k,\bar{X}_1,\dots,\bar{X}_k\right\rangle, \\
    \mathcal{T} & = \left\langle iI^{\otimes n},\bar{X}_{k+1},\dots,\bar{X}_{k+m}\right\rangle, \\
    \mathcal{G} & = \left\langle iI^{\otimes n},\bar{Z}_{k+m+1},\dots,\bar{Z}_{n},\bar{X}_{k+m+1},\dots,\bar{X}_{k+m+r}\right\rangle, \\
    \mathcal{S}_C & = \left\langle \bar{Z}_{k+1},\dots,\bar{Z}_{k+m}\right\rangle, \\
    \mathcal{S}_Q & = \left\langle \bar{Z}_{k+m+r+1},\dots,\bar{Z}_{n}\right\rangle,
\end{align}
where similar to the subsystem case in \cref{subsubsec:subsystem_codes}, the gauge group $\mathcal{G}$ is generated by the pairs of gauge operators $\bar{Z}_j,\bar{X}_j$ for $k+m+1\leq j\leq k+m+r$, as well as the stabilizer generators from $\mathcal{S}_Q$.

As with the hybrid codes in \cref{subsubsec:hybrid_codes}, we can define an outer stabilizer group $\mathcal{S}=\mathcal{S}_Q$ and an inner stabilizer group $\mathcal{S}_0=\left\langle\mathcal{S}_C,\mathcal{S}_Q\right\rangle$. We can also define an outer gauge group to be $\mathcal{G}$, and define an inner gauge group $\mathcal{G}_0 = \left\langle \mathcal{G},\mathcal{S}_C\right\rangle$.\footnote{Our choice to denote the outer and inner gauge groups by $\mathcal{G}$ and $\mathcal{G}_0$ respectively is a straightforward extension of the notation used in \cite{Grassl2017, Nemec2022}. In the notation of Dauphinais et al. \cite{Dauphinais2024}, $\mathcal{G}$ refers to what we call the inner gauge group and $\mathcal{G}_0$ refers to the gauge operator generators excluding stabilizer generators.}

Similar to the subsystem codes, the centralizers of the outer and inner gauge groups are 
\begin{align}
    C(\mathcal{G})&=\left\langle\mathcal{L}, \mathcal{T}, \mathcal{S}_C, \mathcal{S}_Q\right\rangle,\\
    C(\mathcal{G}_0)&=\left\langle\mathcal{L}, \mathcal{S}_C, \mathcal{S}_Q\right\rangle,
\end{align}
and from this we can derive the relations
\begin{align}
    \left\langle iI^{\otimes n},\mathcal{S}\right\rangle & = \mathcal{G} \cap C(\mathcal{G}), \\
    C(\mathcal{S}) & = \left\langle\mathcal{G}, C(\mathcal{G})\right\rangle, \\
    \left\langle iI^{\otimes n},\mathcal{S}_0\right\rangle & = \mathcal{G}_0 \cap C(\mathcal{G}_0), \\
    C(\mathcal{S}_0) & = \left\langle\mathcal{G}_0, C(\mathcal{G}_0)\right\rangle.
\end{align}
We follow the terminology from subsystem codes, and refer to elements in $C(\mathcal{G})$ as bare logical operators, which act nontrivially on both the logical qubits and classical bits while acting trivially on the gauge qubits, and elements in $C(\mathcal{S})$ as dressed logical operators, which act nontrivially on the logical qubits, logical classical bits, and the gauge qubits. The minimum distance $d$ of a hybrid subsystem code\footnote{Our minimum distance is a significant simplification of the one given by Dauphinais et al. \cite{Dauphinais2024} (see \cref{eq:dauphinais_min_dist}), as we have restricted ourselves to hybrid subsystem codes in the stabilizer framework. See \cref{subsec:css_hybrid_subsystem_proof} for a proof.} is the minimum weight of a nontrivial dressed logical operator, that is
\begin{equation}
    d = \min\wt\left(C(\mathcal{S})\setminus\mathcal{G}_0\right). \label{eq:dist_hybrid_subsys}
\end{equation}

In \cref{sec:hybrid_subsystem_construction}, we present a new CSS construction of hybrid subsystem codes.

\section{Quantum synchronizable codes in stabilizer formalism} \label{sec:QSC_stabilizer}

In quantum communications, quantum information is often transferred through a noisy channel in the form of blocks of quantum error-correcting codes. It is important that the sender and the receiver mutually understand which qubit is the first qubit of each block, otherwise the intended message could be misread (assuming that there is no qubit insertion or deletion). However, in the case that both parties cannot rely on an external mechanism, block synchronization could be challenging. One solution to this problem is to use quantum synchronizable codes (QSCs) introduced by Fujiwara \cite{Fujiwara2013}. In this section, we briefly review the ideas behind the QSCs, including the encoding procedure and the procedure to correct Pauli and synchronization errors. The main difference between the description in \cite{Fujiwara2013} and our description is that the original work describes QSCs by an evolution of quantum states (i.e., the Schr\"{o}dinger picture), while this work adopts the stabilizer formalism \cite{Gottesman97} and describe QSCs by an evolution of stabilizer generators \cite{Gottesman98} (i.e., the Heisenberg picture). 

First, let us consider a situation where a block of $n$ qubits is sent through a quantum channel, and suppose that $a_{l}$ qubits are sent before the block and $a_{r}$ qubits are sent after the block. A sequence of qubits can be described by $(q_{1},\dots,q_{a_{l}},q_{a_{l}+1},\dots,q_{a_{l}+n},q_{a_{l}+n+1},\dots,q_{a_{l}+n+a_{r}})$. Ideally, the receiver would like to decode the sub-sequence $(q_{a_{l}+1},\dots,q_{a_{l}+n})$, or the \emph{main block}, to obtain the correct encoded information. That is, the first qubit of the main block should be qubit $q_{a_{l}+1}$. However, if the receiver marks the first qubit of the main block too early or too late, the information after decoding could be incorrect. If the first qubit of the main block marked by the receiver is qubit $q_{a_{l}+1-\alpha}$ (or $q_{a_{l}+1+\alpha}$), we say that the receiver marks the first qubit \emph{too early} (or \emph{too late}) by $\alpha$ positions.

By the notation from \cite{Fujiwara2013}, a quantum synchronizable $(a_l,a_r)$-\codepar{n+a_l+a_r,k,d} code encodes $k$ logical qubits into $n+a_l+a_r$ physical qubits and can correct misalignment by up to $a_l$ qubits to the left and up to $a_r$ qubits to the right. Comparing to the situation above and assuming that the leftmost qubit is sent first, the QSC can detect that misalignment occurs and find the original block of qubits intended to be decoded even if the receiver marks the first qubit of the main block too early by up to $a_l$ positions or too late by up to $a_r$ positions. For convenience, we define the \emph{synchronization distance} $d_\textrm{sync}$ to be the possible number of qubits in which, if marked as the first qubit of the main block by the receiver, the synchronization errors (or misalignment) can be corrected. That is, the synchronization distance of an $(a_l,a_r)$-\codepar{n+a_l+a_r,k,d} code is $d_\textrm{sync}=a_l+a_r+1$ (the actual first qubit of the main block is also counted). By this definition, a quantum code that cannot correct any synchronization errors, or a \emph{non-synchronizable code}, has synchronization distance 1.

The following theorem from \cite{Fujiwara2013} provides a construction of QSCs from nested pairs of self-orthogonal classical cyclic codes via the CSS construction.
\begin{theorem} \cite{Fujiwara2013} \label{thm:QSC_original}
    Let $\mathcal{C}$ and $\mathcal{D}$ be $[n,k_c,d_c]$ and $[n,k_d,d_d]$ classical cyclic codes, respectively. Suppose that $\mathcal{C}$ and $\mathcal{D}$ satisfy $\mathcal{C}^\perp \subset \mathcal{C} \subset \mathcal{D}$ and $k_c < k_d$. For any non-negative integers $a_l, a_r$ satisfying $a_l+a_r<k_d-k_c$, there exists an $(a_l,a_r)$-\codepar{n+a_l+a_r,2k_c-n,d_d} quantum synchronizable code that can correct up to $\lfloor(d_d-1)/2\rfloor$ bit-flip errors and up to $\lfloor(d_c-1)/2\rfloor$ phase-flip errors.
\end{theorem}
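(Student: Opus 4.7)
The plan is to construct the synchronizable code explicitly in the stabilizer formalism by extending $\mathcal{Q}_\mathcal{C}$ to an $N$-qubit code with $N = n + a_l + a_r$, where the $a_l + a_r$ extra qubits serve as a synchronization buffer. On these $N$ qubits I would take $n-k_c$ X-type and $n-k_c$ Z-type stabilizer generators, built from the generators $\tilde{\mathbf{p}}_i$ of $\mathcal{C}^\perp$ suitably padded into the enlarged register, together with $a_l + a_r$ additional ``synchronization'' stabilizers whose support extends nontrivially into the ancillary positions. Counting gives $N - 2(n - k_c) - (a_l + a_r) = 2k_c - n$ logical qubits, matching the target. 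I would manufacture the synchronization generators from cyclic shifts of a polynomial in $I_{\mathcal{C}^\perp} \setminus I_{\mathcal{D}^\perp}$, exploiting the $(k_d-k_c)$-dimensional coset structure $I_{\mathcal{C}^\perp}/I_{\mathcal{D}^\perp}$ as the only source of ``nontrivial shift data'' that is visible to $\mathcal{C}^\perp$-checks but not to $\mathcal{D}^\perp$-checks.

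Next I would verify commutativity of all generators and derive the claimed distances. Commutativity reduces to orthogonality statements between cyclic shifts of $p(x)$, $\tilde{p}_\mathcal{R}(x)$, $q(x)$, and $\tilde{q}_\mathcal{R}(x)$, which follow from the divisibility relations $p(x) = f_1(x)q(x)$, $\tilde{p}_\mathcal{R}(x) = f_2(x)p(x)$, and $\tilde{q}_\mathcal{R}(x) = f_3(x)\tilde{p}_\mathcal{R}(x)$ recorded in \cref{subsec:pre_cyclic}. For the Pauli distances I would argue that any minimum-weight nontrivial logical $X$ operator restricts (after stripping off buffer-supported factors) to an element of $\mathcal{D}$ outside $\mathcal{D}^\perp$, forcing weight at least $d_d$, while any minimum-weight logical $Z$ operator restricts to an element of $\mathcal{C}$ outside $\mathcal{C}^\perp$, forcing weight at least $d_c$. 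Since $\mathcal{C} \subset \mathcal{D}$ implies $d_d \leq d_c$, the overall minimum distance is $d_d$, and the advertised asymmetric error-correction capabilities follow immediately.

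The synchronization analysis is where the real work lies. A misalignment by $\alpha \in \{-a_l, \ldots, a_r\}$ causes the receiver's frame to sample a cyclic shift of the transmitted block, which in the stabilizer picture is equivalent to measuring cyclically-shifted generators. The synchronization stabilizers must be arranged so that, for each admissible $\alpha$, the syndrome pattern on them is distinct and determines $\alpha$ uniquely, after which the correction procedure applies an inverse shift and then ordinary Pauli decoding with respect to the original stabilizers. The hypothesis $a_l + a_r < k_d - k_c$ enters precisely here: since $I_{\mathcal{C}^\perp}/I_{\mathcal{D}^\perp}$ has dimension $k_d - k_c$ and acts nontrivially under cyclic shifts, there is just enough room to embed the $a_l + a_r + 1$ possible shifts injectively into the syndrome space. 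I expect the main obstacle to be a clean verification that the shift-decoding and Pauli-decoding steps are compatible, namely that unwinding the misalignment first does not inflate the Pauli error weight beyond the correction radii $\lfloor(d_d-1)/2\rfloor$ and $\lfloor(d_c-1)/2\rfloor$; this requires care at the boundary qubits where the buffer meets the main block, and is likely the reason Fujiwara's original argument separates the synchronization-recovery and Pauli-recovery stages.
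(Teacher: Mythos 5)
Your overall architecture --- a buffer of $a_l+a_r$ ancillas, $2(n-k_c)$ CSS-type generators built from $\tilde{\mathbf{p}}_i$, $a_l+a_r$ extra stabilizers, and the count $2k_c-n$ of logical qubits --- matches the paper's construction, and your commutativity and distance remarks are serviceable. But there is a genuine gap in the synchronization mechanism, which is the heart of the theorem. You propose to carry the shift information in $a_l+a_r$ additional ``synchronization stabilizers'' manufactured from cyclic shifts of elements of $I_{\mathcal{C}^\perp}\setminus I_{\mathcal{D}^\perp}$ with support reaching into the buffer. That is not where the shift data can live: in the actual construction the $a_l+a_r$ extra stabilizers are merely the weight-two $Z\otimes Z$ couplings produced by CNOT-ing $\ket{0}$ ancillas to the boundary qubits (\cref{eq:QSC_stab3}); their role is to make every admissible $n$-qubit window look like a valid $\mathcal{Q}_\mathcal{C}$ block, and they carry no information about $\alpha$. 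The shift is instead encoded in the \emph{phases} of the $Z(\tilde{\mathbf{p}}_j)$ generators, imprinted by applying the marker $X(\mathbf{q}_1)$ (a coset representative of $\mathcal{D}/\mathcal{C}$) before the buffer is attached: the receiver's window sees stabilizers $Z(\tilde{\mathbf{p}}_j)(-1)^{\tilde{\mathbf{p}}_j\cdot\mathcal{O}(\mathbf{q}_1,-\alpha)}$ (\cref{cor:equiv_stab_group_QSC}), and \cref{lem:sync_value} shows that $\alpha\mapsto H_\mathcal{C}\,\mathcal{O}(\mathbf{q}_1,-\alpha)^T$ is injective over the admissible range because $q(x)(1+x^{\beta-\gamma})$ has degree less than $n-k_c$ when $0<\beta-\gamma<k_d-k_c$ and hence cannot lie in $I_\mathcal{C}$. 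Without such a symmetry-breaking marker the encoded stabilizer group is (up to the buffer bookkeeping) covariant under cyclic shifts, so no choice of stabilizer generators --- buffer-supported or otherwise --- can produce distinct syndromes for distinct $\alpha$. Your proposal never introduces a marker, so the injectivity you hope to extract from $\dim\bigl(I_{\mathcal{C}^\perp}/I_{\mathcal{D}^\perp}\bigr)=k_d-k_c$ has nothing to act on.

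A second, related misattribution: the obstacle you flag at the end (compatibility of shift-recovery with Pauli recovery) is resolved not by controlling weight inflation at the buffer boundary but by the \emph{ordering} of the two stages. The outcomes of $Z(\tilde{\mathbf{p}}_j)$ are contaminated by both $X$ errors and the unknown phase $\mathcal{O}(\mathbf{q}_1,-\alpha)$, so one must first correct $X$ errors using only the coarse, phase-insensitive checks $Z(\tilde{\mathbf{q}}_i)$, which are genuine $+1$ stabilizers in every window because $\tilde{\mathbf{q}}_i\cdot\mathcal{O}(\mathbf{q}_1,-\alpha)=0$. This restricted syndrome access is exactly why the guaranteed bit-flip radius is $\lfloor(d_d-1)/2\rfloor$ rather than $\lfloor(d_c-1)/2\rfloor$; your argument instead derives $d_d$ from logical $X$ operators lying in $\mathcal{D}\setminus\mathcal{D}^\perp$, which is a true but misdirected bound (the nontrivial logical $X$ operators in fact lie in $\mathcal{C}\setminus\mathcal{C}^\perp$, and the $d_d$ in the statement reflects the decoder's limitations before synchronization, not the intrinsic $X$-distance).
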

Note that the maximum synchronization distance of the QSC in \cref{thm:QSC_original} is $k_d-k_c$.

Next, we describe how the sender can encode quantum information into a QSC, and how the receiver can correct the bit-flip, phase-flip, and synchronization errors, as well as decoding the quantum information.

\subsection{Encoding procedure} \label{subsec:QSC_enc}

Step 1. Let $\mathcal{Q}_\mathcal{C}$ be the \codepar{n,2k_c-n,d_c} stabilizer code constructed from a check matrix of the classical cyclic code $\mathcal{C}$ through the CSS construction, and let $U_\mathcal{C}$ be a unitary encoder of $\mathcal{Q}_\mathcal{C}$. The encoding procedure of a QSC starts by applying $U_\mathcal{C}$ to a $(2k_c-n)$-qubit state $\ket{\psi}$ containing the quantum information plus $2n-2k_c$ ancilla qubits. At this stage, the quantum state is described by the stabilizer generators,
\begin{align}
    &X(\tilde{\mathbf{p}}_j), \\
    &Z(\tilde{\mathbf{p}}_j), 
\end{align}
where $i \in \{1,\dots,n-k_c\}$.

Step 2. The operator $X(\mathbf{q}_1)$ is applied to the $n$-qubit code block. Thus, the stabilizer generators are transformed to,
\begin{align}
    &X(\tilde{\mathbf{p}}_j), \\
    &Z(\tilde{\mathbf{p}}_j)(-1)^{\tilde{\mathbf{p}}_j \cdot \mathbf{q}_1}. 
\end{align}

Step 3. $a_l$ ancilla qubits are attached to left of the $n$-qubit code block and $a_r$ ancilla qubits are attached the right, where $a_l+a_r < k_d-k_c$. Here we label the qubits as follows: The left ancilla qubits, the $n$-qubit code block, and the right ancilla qubits are qubits 1 to $a_l$, qubits $a_l+1$ to $a_l+n$, and qubits $a_l+n+1$ to $a_l+n+a_r$, respectively. All ancilla qubits are initialized in the physical $\ket{0}$ state. With the stabilizer generators describing the fixed state of each ancilla qubit, the whole stabilizer group is described by the stabilizer generators, 
\begin{align}
    &X(\underbracket[0.5pt]{\vphantom{\tilde{\mathbf{p}}_j}\,\mathbf{0}\,}_{\mathclap{a_l}}|\underbracket[0.5pt]{\,\tilde{\mathbf{p}}_j\,}_{\mathclap{n}}|\underbracket[0.5pt]{\vphantom{\tilde{\mathbf{p}}_j}\,\mathbf{0}\,}_{\mathclap{a_r}}), \\
    &Z(\underbracket[0.5pt]{\vphantom{\tilde{\mathbf{p}}_j}\,\mathbf{0}\,}_{\mathclap{a_l}}|\underbracket[0.5pt]{\,\tilde{\mathbf{p}}_j\,}_{\mathclap{n}}|\underbracket[0.5pt]{\vphantom{\tilde{\mathbf{p}}_j}\,\mathbf{0}\,}_{\mathclap{a_r}})(-1)^{\tilde{\mathbf{p}}_j \cdot \mathbf{q}_1}, \\
    &Z\left( \begin{array}{c|ccc|c}
    \mathbb{1}_{a_l} & \mathbb{0}_{a_l,a_r} & \mathbb{0}_{a_l,(n-a_l-a_r)} & \mathbb{0}_{a_l,a_l} & \mathbb{0}_{a_l,a_r} \\
    \mathbb{0}_{a_r,a_l} & \mathbb{0}_{a_r,a_r} & \mathbb{0}_{a_r,(n-a_l-a_r)} & \mathbb{0}_{a_r,a_l} & \mathbb{1}_{a_r}
    \end{array}
    \right),
\end{align}
where $|\cdots|$ covers a block of $n$ qubits, $\underbracket[0.5pt]{\,\mathbf{v}\,}_{\mathclap{q}}$ indicates that the vector $\mathbf{v}$ has $q$ bits (which corresponds to $q$ qubits), $\mathbb{1}_{p}$ denotes $p \times p$ identity matrix, $\mathbb{0}_{p,q}$ denotes $p \times q$ zero matrix, and $Z(M)$ of a matrix $M$ describes $Z$-type operators constructed from the rows of $M$.

Step 4. CNOT gates with the following pairs of control and target qubits $(c,t)$ are applied: $(a_l+1,a_l+n+1),\dots,(a_l+a_r,a_l+n+a_r)$ and $(n+a_l,a_l),\dots,(n+1,1)$. With these coupling operations, the stabilizer generators are transformed to,
\begin{align}
    &X(\overbracket[0.5pt]{\underbracket[0.5pt]{\,\tilde{\mathbf{p}}_j\,}_{\mathclap{a_l}}}^{\text{last}}|\underbracket[0.5pt]{\,\tilde{\mathbf{p}}_j\,}_{\mathclap{n}}|\overbracket[0.5pt]{\underbracket[0.5pt]{\,\tilde{\mathbf{p}}_j\,}_{\mathclap{a_r}}}^{\text{first}}), \label{eq:QSC_stab1}\\
    &Z(\underbracket[0.5pt]{\vphantom{\tilde{\mathbf{p}}_j}\,\mathbf{0}\,}_{\mathclap{a_l}}|\underbracket[0.5pt]{\,\tilde{\mathbf{p}}_j\,}_{\mathclap{n}}|\underbracket[0.5pt]{\vphantom{\tilde{\mathbf{p}}_j}\,\mathbf{0}\,}_{\mathclap{a_r}})(-1)^{\tilde{\mathbf{p}}_j \cdot \mathbf{q}_1}, \label{eq:QSC_stab2}\\
    &Z\left( \begin{array}{c|ccc|c}
    \mathbb{1}_{a_l} & \mathbb{0}_{a_l,a_r} & \mathbb{0}_{a_l,(n-a_l-a_r)} & \mathbb{1}_{a_l} & \mathbb{0}_{a_l,a_r} \\
    \mathbb{0}_{a_r,a_l} & \mathbb{1}_{a_r} & \mathbb{0}_{a_r,(n-a_l-a_r)} & \mathbb{0}_{a_r,a_l} & \mathbb{1}_{a_r}
    \end{array}
    \right), \label{eq:QSC_stab3}
\end{align}
where $\overbracket[0.5pt]{\underbracket[0.5pt]{\,\mathbf{v}\,}_{\mathclap{q}}}^{\text{first}}$ (or $\overbracket[0.5pt]{\underbracket[0.5pt]{\,\mathbf{v}\,}_{\mathclap{q}}}^{\text{last}}$) denotes the vector constructed from the first $q$ bits (or the last $q$ bits) of $\mathbf{v}$. The stabilizer generators in \cref{eq:QSC_stab1,eq:QSC_stab2,eq:QSC_stab3} generate the stabilizer group of the $(a_l,a_r)$-\codepar{n+a_l+a_r,2k_c-n} QSC in \cref{thm:QSC_original}.

After these steps, the block of $n+a_l+a_r$ qubits is ready to be transmitted through a noisy quantum channel. The entire encoding procedure is illustrated in \cref{fig:encoder}.

\begin{figure}[htbp]
    \centering
    \resizebox{.48\linewidth}{!}{
    $
		\Qcircuit @C=1em @R=.7em {
		& \qw & \qw & \qw & \qw & \qw & \qw & \qw & \qw & \targ & \qw & \qw \\
            & \cdots &  &  &  &  &  &  & \cdots &  &  & \\
		& \qw & \qw & \qw & \qw & \qw & \qw & \targ & \qw & \qw & \qw & \qw
            \inputgroupv{1}{3}{.8em}{1em}{\ket{0}^{\otimes a_l}\quad} \\           
		& \qw & \multigate{8}{U_\mathcal{C}} & \multigate{8}{X(\mathbf{q}_1)} & \ctrl{9} & \qw & \qw & \qw & \qw & \qw & \qw & \qw \\
            & \cdots & \nghost{U_\mathcal{C}} & \ghost{X(\mathbf{q}_1)} &  & \cdots &  &  & \cdots &  &  & \\
            & \cdots & \nghost{U_\mathcal{C}} & \ghost{X(\mathbf{q}_1)} & \qw & \qw & \ctrl{9} & \qw & \qw & \qw & \qw & \qw \\
		& \qw & \ghost{U_\mathcal{C}} & \ghost{X(\mathbf{q}_1)} & \qw & \qw & \qw & \qw & \qw & \qw & \qw & \qw
            \inputgroupv{4}{7}{0.8em}{2.4em}{\ket{\psi}\quad} \\
		& \qw & \ghost{U_\mathcal{C}} & \ghost{X(\mathbf{q}_1)} &  & \cdots &  &  & \cdots &  &  & \\
            & \cdots & \nghost{U_\mathcal{C}} & \ghost{X(\mathbf{q}_1)} & \qw & \qw & \qw & \qw & \qw & \qw & \qw & \qw \\
            & \cdots & \nghost{U_\mathcal{C}} & \ghost{X(\mathbf{q}_1)} & \qw & \qw & \qw & \qw & \qw & \ctrl{-9} & \qw & \qw \\
            & \cdots & \nghost{U_\mathcal{C}} & \ghost{X(\mathbf{q}_1)} &  & \cdots &  &  & \cdots &  &  & \\
            & \qw & \ghost{U_\mathcal{C}} & \ghost{X(\mathbf{q}_1)} & \qw & \qw & \qw & \ctrl{-9} & \qw & \qw & \qw & \qw 
            \inputgroupv{8}{12}{0.8em}{3.2em}{\ket{0}^{\otimes (2n-2k_c)}\qquad\qquad} \\
		& \qw & \qw & \qw & \targ & \qw & \qw & \qw & \qw & \qw & \qw & \qw \\
		& \cdots &  &  &  & \cdots &  &  &  &  &  & \\
		& \qw & \qw & \qw & \qw & \qw & \targ & \qw & \qw & \qw & \qw & \qw
		\inputgroupv{13}{15}{.8em}{1em}{\ket{0}^{\otimes a_r}\quad}
		} \nonumber
    $}
    \caption{A circuit diagram describing the entire encoding procedure of a QSC.}
    \label{fig:encoder}
\end{figure}
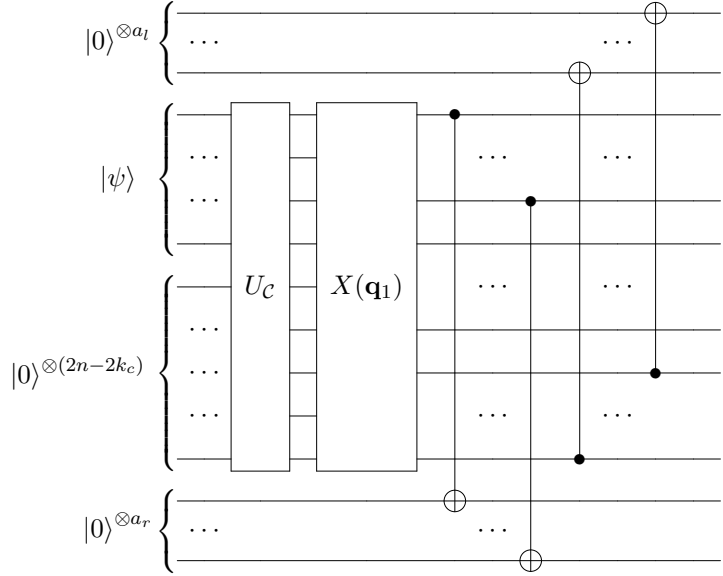

\subsection{Correcting bit-flip, phase-flip, and synchronization errors} \label{subsec:QSC_correction}

We start by considering how the synchronization error can be corrected in the absence of bit-flip and phase-flip errors. First, we present the following corollary.
\begin{corollary} \label{cor:equiv_stab_group_QSC}
Let $a_l,a_r$ be non-negative integers satisfying $a_l+a_r < k_d-k_c$. For all integers $\alpha$ such that $-a_l \leq \alpha \leq a_r$,
\begin{align}
    &X(\overbracket[0.5pt]{\underbracket[0.5pt]{\,\tilde{\mathbf{p}}_j\,}_{\mathclap{a_l+\alpha}}}^{\text{last}}|\underbracket[0.5pt]{\,\tilde{\mathbf{p}}_j\,}_{\mathclap{n}}|\overbracket[0.5pt]{\underbracket[0.5pt]{\,\tilde{\mathbf{p}}_j\,}_{\mathclap{a_r-\alpha}}}^{\text{first}}), \\
    &Z(\underbracket[0.5pt]{\vphantom{\tilde{\mathbf{p}}_j}\,\mathbf{0}\,}_{\mathclap{a_l+\alpha}}|\underbracket[0.5pt]{\,\tilde{\mathbf{p}}_j\,}_{\mathclap{n}}|\underbracket[0.5pt]{\vphantom{\tilde{\mathbf{p}}_j}\,\mathbf{0}\,}_{\mathclap{a_r-\alpha}})(-1)^{\tilde{\mathbf{p}}_j \cdot \mathcal{O}(\mathbf{q}_1,-\alpha)},  \\
    &Z\left( \begin{array}{c|ccc|c}
    \mathbb{1}_{a_l} & \mathbb{0}_{a_l,a_r} & \mathbb{0}_{a_l,(n-a_l-a_r)} & \mathbb{1}_{a_l} & \mathbb{0}_{a_l,a_r} \\
    \mathbb{0}_{a_r,a_l} & \mathbb{1}_{a_r} & \mathbb{0}_{a_r,(n-a_l-a_r)} & \mathbb{0}_{a_r,a_l} & \mathbb{1}_{a_r}
    \end{array}
    \right), 
\end{align}
generate the same stabilizer group, where $j \in \{1,\dots,n-k_c\}$.
\end{corollary}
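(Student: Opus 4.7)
The plan is to prove both inclusions by showing that each generator listed for shift $\alpha$ lies in the stabilizer group generated by the original (shift $0$) generators of \cref{eq:QSC_stab1,eq:QSC_stab2,eq:QSC_stab3}; the reverse inclusion then follows from the same argument with $\alpha$ replaced by $-\alpha$, so the two sets generate the same group.

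For the $X$-type generator, I would first note that its Pauli string on the full $(n+a_l+a_r)$-qubit block is just a window of the cyclic repetition of $\tilde{\mathbf{p}}_j$ shifted by $\alpha$ positions relative to the original. A one-line index comparison shows that its entry at position $i$ equals $(\tilde{\mathbf{p}}_j)_{(i-a_l-\alpha-1)\bmod n} = (\mathcal{O}(\tilde{\mathbf{p}}_j,\alpha))_{(i-a_l-1)\bmod n}$, so it coincides with the original $X$-string built from the cyclically-shifted codeword $\mathcal{O}(\tilde{\mathbf{p}}_j,\alpha)$. Since $\mathcal{C}^\perp$ is cyclic, $\mathcal{O}(\tilde{\mathbf{p}}_j,\alpha)\in\mathcal{C}^\perp$ expands as a $\mathbb{Z}_2$-linear combination of $\tilde{\mathbf{p}}_{j'}$ for $j'\in\{1,\dots,n-k_c\}$, so the shifted $X$-generator is a product of original $X$-generators. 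The third set of generators does not depend on $\alpha$ and therefore appears identically in both collections.

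The nontrivial case is the middle $Z$-type generator, whose $Z$-string is translated by $\alpha$ positions and overspills the main block into one side's ancilla region. My plan is to multiply this shifted generator by an appropriate product of the third-set generators---the bottom block when $\alpha>0$, the top block when $\alpha<0$---with exponents chosen so as to cancel the $Z$'s on the affected ancillas (the third-set generators couple precisely those ancillas to the correct main-block qubits). A short bookkeeping of which bits are transferred back from ancilla into main block shows that the residual $Z$-string on the main block reassembles into $\mathcal{O}(\tilde{\mathbf{p}}_j,\alpha)$, so the resulting Pauli operator is supported only on the main block and equals the original $Z$-generator built from the codeword $\mathcal{O}(\tilde{\mathbf{p}}_j,\alpha)\in\mathcal{C}^\perp$, which is a product of original $Z$-generators.

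The step I expect to be the main obstacle, and which I would save for last, is verifying that the phases agree. After the rearrangement above, the product carries the phase $(-1)^{\tilde{\mathbf{p}}_j\cdot\mathcal{O}(\mathbf{q}_1,-\alpha)}$ inherited from the shifted generator (the third-set generators contribute $+1$), while the same Pauli string as a product of original $Z$-generators carries the phase $(-1)^{\mathcal{O}(\tilde{\mathbf{p}}_j,\alpha)\cdot\mathbf{q}_1}$. These agree if and only if the cyclic-shift identity
\begin{equation*}
    \mathcal{O}(\mathbf{a},\alpha)\cdot\mathbf{b} \;\equiv\; \mathbf{a}\cdot\mathcal{O}(\mathbf{b},-\alpha) \pmod{2}
\end{equation*}
holds, which I would prove in one line by reindexing the defining sum $\sum_k a_{(k-\alpha)\bmod n}\,b_k$. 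Applying it with $\mathbf{a}=\tilde{\mathbf{p}}_j$ and $\mathbf{b}=\mathbf{q}_1$ matches the phases and completes the inclusion; the reverse inclusion is obtained by the symmetric argument, yielding the equality of stabilizer groups claimed in the corollary.
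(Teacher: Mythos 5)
Your proposal is correct and follows essentially the same route as the paper: the paper obtains this corollary as the special case $\mathbf{v}=\mathbf{0}$, $\mathbf{w}=\mathbf{q}_1$ of \cref{thm:sync_proof_subspace}, whose proof likewise uses the cyclicity of $\mathcal{C}^\perp$ for the $X$-type generators, multiplies the $Z$-type generators by the ancilla stabilizers to relocate their support onto the main block, and invokes the same shift-invariance identity $\mathcal{O}(\mathbf{a},\alpha)\cdot\mathbf{b}=\mathbf{a}\cdot\mathcal{O}(\mathbf{b},-\alpha)$ to match the phases. The only difference is organizational: you argue generator-by-generator with two inclusions, while the paper compares the full generated sets directly.
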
 
\cref{cor:equiv_stab_group_QSC} states that there are many choices of stabilizer generators that describe the stabilizer group of the same QSC, including the choice described by \cref{eq:QSC_stab1,eq:QSC_stab2,eq:QSC_stab3} (which corresponds to $\alpha=0$). \cref{cor:equiv_stab_group_QSC} is obtained from \cref{thm:sync_proof_subspace} (to be formulated later in \cref{sec:SHSC_main}) with $\mathbf{v}=\mathbf{0}$ and $\mathbf{w}=\mathbf{q}_1$. A proof of \cref{thm:sync_proof_subspace} is provided in \cref{subsec:sync_subspace_proof}.

Suppose that the receiver handles a certain subblock of $n$-consecutive qubits in the block of $n+a_l+a_r$ qubits. \cref{cor:equiv_stab_group_QSC} implies that there is some $\alpha$ such that $Z(\tilde{\mathbf{p}}_j)(-1)^{\tilde{\mathbf{p}}_j \cdot \mathcal{O}(\mathbf{q}_1,-\alpha)}$ defining on such $n$ qubits are stabilizer generators of the QSC. That is, if the receiver measures eigenvalues of $Z(\tilde{\mathbf{p}}_j)$ for all $j \in \{1,\dots,n-k_c\}$, they will obtain the values of $(-1)^{\tilde{\mathbf{p}}_j \cdot \mathcal{O}(\mathbf{q}_1,-\alpha)}$. These values can be used to determine $\alpha$, which provided information about misalignment. Since $\tilde{\mathbf{p}}_j$ are rows of the check matrix $H_\mathcal{C}$ of the code $\mathcal{C}$, we can also describe the eigenvalues of $Z(\tilde{\mathbf{p}}_j)$ by the column vector $H_\mathcal{C}\mathcal{O}(\mathbf{q}_1,-\alpha)^T$.

Next, we prove the possible values of $\alpha$ such that $H_\mathcal{C}\mathcal{O}(\mathbf{q}_1,-\alpha)^T$ are distinct.
\begin{lemma} \label{lem:sync_value}
    Suppose that $a_l$ and $a_r$ are non-negative integers satisfying $a_l+a_r < k_d-k_c$. For any integer $\alpha$ such that $-a_l \leq \alpha \leq a_r$, $H_\mathcal{C}\mathcal{O}(\mathbf{q}_1,-\alpha)^T$ is distinct. Or equivalently, for any integers $\beta,\gamma$ such that $\beta > \gamma$, if $\beta-\gamma < k_d-k_c$, then these exists $j \in \{1,\dots,n-k_c\}$ such that $\tilde{\mathbf{p}}_j\cdot \mathcal{O}(\mathbf{q}_1,\beta) \neq \tilde{\mathbf{p}}_j\cdot \mathcal{O}(\mathbf{q}_1,\gamma)$.
\end{lemma}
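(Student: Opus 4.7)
\medskip
\noindent\textbf{Proof proposal.} The plan is to establish the equivalent (contrapositive) formulation by translating the orthogonality statement into a polynomial divisibility condition and then closing by a degree count. Suppose for contradiction that $\tilde{\mathbf{p}}_j \cdot \mathcal{O}(\mathbf{q}_1,\beta) = \tilde{\mathbf{p}}_j \cdot \mathcal{O}(\mathbf{q}_1,\gamma)$ for \emph{every} $j\in\{1,\dots,n-k_c\}$. Over $\mathbb{Z}_2$ this is equivalent to the vector $\mathbf{s} := \mathcal{O}(\mathbf{q}_1,\beta)+\mathcal{O}(\mathbf{q}_1,\gamma)$ being orthogonal to every generator $\tilde{\mathbf{p}}_j$ of $\mathcal{C}^\perp$. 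Since $\{\tilde{\mathbf{p}}_j\}$ span $\mathcal{C}^\perp$, this forces $\mathbf{s}\in(\mathcal{C}^\perp)^\perp=\mathcal{C}$, i.e., $\mathbf{s}\in\mathcal{C}$.

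Next I would pass to polynomials in $R_n=\mathbb{Z}_2[x]/(x^n-1)$. Under the isomorphism $I_\mathcal{C}\cong\mathcal{C}$, the shift identity $\mathcal{V}(x^\alpha c(x))=\mathcal{O}(\mathcal{V}(c(x)),\alpha)$ gives that $\mathbf{s}$ corresponds to the element $(x^\beta+x^\gamma)q(x)=x^\gamma(x^{\beta-\gamma}+1)q(x)\in I_\mathcal{C}$ of $R_n$. Recall from the discussion of $\mathcal{C}\subset\mathcal{D}$ that $p(x)=f_1(x)q(x)$ in $\mathbb{Z}_2[x]$ with $\deg f_1 = k_d-k_c\geq 1$.

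Here is where the argument needs the ring-theoretic subtlety: I would observe that because $p(x)\mid x^n-1$, membership in $I_\mathcal{C}=(p(x))\subset R_n$ is equivalent to divisibility by $p(x)$ in the UFD $\mathbb{Z}_2[x]$, for any polynomial representative. Hence $p(x)=f_1(x)q(x)$ divides $x^\gamma(x^{\beta-\gamma}+1)q(x)$ in $\mathbb{Z}_2[x]$. Since $p(x)\mid x^n-1$ has constant term $1$, so do its factors $f_1(x)$ and $q(x)$; in particular $\gcd(f_1(x),x^\gamma)=1$. Cancelling $q(x)$ in the integral domain $\mathbb{Z}_2[x]$ and then removing the unit factor $x^\gamma$ yields
\begin{equation*}
    f_1(x)\,\bigm|\,x^{\beta-\gamma}+1 \quad\text{in } \mathbb{Z}_2[x].
\end{equation*}

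The conclusion then follows by comparing degrees. The hypothesis $1\leq\beta-\gamma<k_d-k_c$ together with $k_d-k_c<n$ (a consequence of $\mathcal{C}^\perp\subset\mathcal{C}\subset\mathcal{D}$, which forces $k_c\geq n/2$ and hence $k_d-k_c\leq n-k_c<n$) guarantees that $x^{\beta-\gamma}+1$ is a nonzero polynomial of degree $\beta-\gamma$ strictly less than $\deg f_1 = k_d-k_c$. This contradicts the divisibility above, completing the proof. The main obstacle I anticipate is the careful bookkeeping between the quotient ring $R_n$ and the polynomial ring $\mathbb{Z}_2[x]$: one must justify that lifting $x^\gamma$ to a genuine polynomial factor (rather than a mere unit in $R_n$) is harmless, which ultimately rests on the coprimality $\gcd(f_1(x),x)=1$ derived from $p(x)\mid x^n-1$.
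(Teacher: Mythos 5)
Your proof is correct and follows essentially the same route as the paper's: reduce the equality of all inner products to membership of $(x^\beta+x^\gamma)q(x)$ in $I_\mathcal{C}$, then rule this out by a degree count using $\beta-\gamma<k_d-k_c$. The only cosmetic difference is that you phrase the final contradiction as $f_1(x)\mid x^{\beta-\gamma}+1$ after writing $p(x)=f_1(x)q(x)$ and cancelling, whereas the paper directly compares the degree of $q(x)(1+x^{\beta-\gamma})$ with the minimal degree $n-k_c$ of a nonzero element of $I_\mathcal{C}$; these are the same degree count.
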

\begin{proof}
For any integers $\beta,\gamma$ such that $\beta > \gamma$, the following statements are equivalent.
\begin{align}
    &\tilde{\mathbf{p}}_j\cdot \mathcal{O}(\mathbf{q}_1,\gamma) =  \tilde{\mathbf{p}}_j\cdot \mathcal{O}(\mathbf{q}_1,\beta) \;\forall j\\
    \Leftrightarrow\; & \tilde{\mathbf{p}}_j\cdot \left(\mathcal{O}(\mathbf{q}_1,\gamma)+\mathcal{O}(\mathbf{q}_1,\beta)\right) = 0 \;\forall j\\
    \Leftrightarrow\;  & H_\mathcal{C}\left(\mathcal{O}(\mathbf{q}_1,\gamma)+\mathcal{O}(\mathbf{q}_1,\beta)\right)^T = \mathbf{0}^T \\
    \Leftrightarrow\;  & \mathcal{O}(\mathbf{q}_1,\gamma)+\mathcal{O}(\mathbf{q}_1,\beta) \in \mathcal{C} \\
    \Leftrightarrow\;  & \mathbf{q}_1+\mathcal{O}(\mathbf{q}_1,\beta-\gamma) \in \mathcal{C} \\
    \Leftrightarrow\;  & q(x)\left(1+x^{\beta-\gamma}\right) \in I_\mathcal{C}.
\end{align}
Observe that a nontrivial polynomial in $I_\mathcal{C}$ (which corresponds to a nonzero binary vector in $\mathcal{C}$) has degree at least $n-k_c$. Also, $q(x)$ has degree $n-k_d$. If $\beta-\gamma < k_d-k_c$, then $q(x)\left(1+x^{\beta-\gamma}\right)$ is not in $I_\mathcal{C}$. This implies that these exists $j \in \{1,\dots,n-k_c\}$ such that $\tilde{\mathbf{p}}_j\cdot \mathcal{O}(\mathbf{q}_1,\beta) \neq \tilde{\mathbf{p}}_j\cdot \mathcal{O}(\mathbf{q}_1,\gamma)$, or equivalently, $H_\mathcal{C}\mathcal{O}(\mathbf{q}_1,\beta)^T \neq H_\mathcal{C}\mathcal{O}(\mathbf{q}_1,\gamma)^T$. Note that the generators $\mathbf{q}_1,\dots,\mathbf{q}_{k_d-k_c}$ of $\mathcal{D}$ are linearly independent, thus $q(x) \neq x^{\beta-\gamma}q(x)$ when $\beta-\gamma < k_d-k_c$.

Let $a_l+a_r < k_d-k_c$. For any two integers $\alpha,\alpha'$ such that $-a_l \leq \alpha < \alpha' \leq a_r$, we have that $(-\alpha)-(-\alpha') = \alpha'-\alpha < k_d-k_c$ and thus $H_\mathcal{C}\mathcal{O}(\mathbf{q}_1,-\alpha)^T \neq H_\mathcal{C}\mathcal{O}(\mathbf{q}_1,-\alpha')^T$. Therefore, the value of $H_\mathcal{C}\mathcal{O}(\mathbf{q}_1,-\alpha)^T$ for each possible $\alpha$ is distinct.
\end{proof}

\cref{lem:sync_value} implies that a lookup table between $\alpha$ and $H_\mathcal{C}\mathcal{O}(\mathbf{q}_1,-\alpha)^T$ can be made whenever $a_l+a_r < k_d-k_c$.

In the case that there is no $X$-type (bit-flip) error, measuring $Z(\tilde{\mathbf{p}}_j)$ will give $H_\mathcal{C}\mathcal{O}(\mathbf{q}_1,-\alpha)^T$ which can uniquely determines $\alpha$. The sign and the magnitude of $\alpha$ tells us whether the first qubit of the main block is marked too early or too late and by how many positions (where the minus sign means too early and the plus sign means too late). With this information, the receiver can shift their qubit markers back to the original numbering as intended by the sender. Note that if the receiver marks the first qubit too early, they may have to wait for more qubits before the whole block of $n+a_l+a_r$ qubits that can be used to decode the information is fully obtained.

Next, let us consider the case in which $X$-type (bit-flip) and $Z$-type (phase-flip) errors are present. Since $X$-type errors could interfere with the measurement of $Z(\tilde{\mathbf{p}}_j)$, $X$-type error correction need to be done before $\alpha$ can be determined. Observe that for any $\alpha$, $Z(\underbracket[0.5pt]{\vphantom{\tilde{\mathbf{q}}_i}\,\mathbf{0}\,}_{\mathclap{a_l+\alpha}}|\underbracket[0.5pt]{\,\tilde{\mathbf{q}}_i\,}_{\mathclap{n}}|\underbracket[0.5pt]{\vphantom{\tilde{\mathbf{q}}_i}\,\mathbf{0}\,}_{\mathclap{a_r-\alpha}})$ commute with all $X$-type stabilizers for all $i \in \{1,\dots,n-k_d\}$; this can be shown by choosing a proper choice of $\alpha$ in \cref{cor:equiv_stab_group_QSC} and use that fact that $\tilde{\mathbf{q}}_i \cdot \tilde{\mathbf{p}}_j=0$, which is true because $\mathcal{D}^\perp \subset \mathcal{C} = (\mathcal{C}^\perp)^\perp$. 
Let $\mathcal{Q}_\mathcal{D}$ be the \codepar{n,2k_d-n,d_d} stabilizer code constructed from a check matrix of the classical cyclic code $\mathcal{D}$ through the CSS construction. The stabilizer code defined by $Z(\underbracket[0.5pt]{\vphantom{\tilde{\mathbf{q}}_i}\,\mathbf{0}\,}_{\mathclap{a_l+\alpha}}|\underbracket[0.5pt]{\,\tilde{\mathbf{q}}_i\,}_{\mathclap{n}}|\underbracket[0.5pt]{\vphantom{\tilde{\mathbf{q}}_i}\,\mathbf{0}\,}_{\mathclap{a_r-\alpha}})$ on any block of $n$ consecutive qubits is similar to the stabilizer code defined by the $Z$-type generators $Z(\tilde{\mathbf{q}}_i)$ of the code $\mathcal{Q}_\mathcal{D}$. Therefore, the receiver can measure $Z(\tilde{\mathbf{q}}_i)$ on the received $n$ qubits to obtain the error syndrome of an $X$-type error, and $X$-type error correction on this block of $n$ qubits can be done using an error decoder (such as a lookup-table decoder) for the code $\mathcal{Q}_\mathcal{D}$. Afterwards, $\alpha$ can be determined by measuring $Z(\tilde{\mathbf{p}}_j)$ as described earlier in this section. Note that $X$-type error correction is guaranteed only if there are no more than $\lfloor(d_d-1)/2\rfloor$ $X$-type errors on any $n$ consecutive qubits in the block of $n+a_l+a_r$ qubits.

After $\alpha$ is obtained and the correct qubit numbering is retrieved, the receiver can further correct $X$-type and $Z$-type errors using the following procedures: (1) $X$-type error correction on the first $n$ qubits and the last $n$ qubits of the block of $n+a_l+a_r$ qubits can be done by measuring $Z(\underbracket[0.5pt]{\vphantom{\tilde{\mathbf{q}}_i}\,\mathbf{0}\,}_{\mathclap{a_l+a_r}}|\underbracket[0.5pt]{\,\tilde{\mathbf{q}}_i\,}_{\mathclap{n}})$ and $Z(\underbracket[0.5pt]{\,\tilde{\mathbf{q}}_i\,}_{\mathclap{n}}|\underbracket[0.5pt]{\vphantom{\tilde{\mathbf{q}}_i}\,\mathbf{0}\,}_{\mathclap{a_l+a_r}})$ (where $i \in \{1,\dots,n-k_d\}$) and using an error decoder for $\mathcal{Q}_\mathcal{D}$. It is possible to show that $a_l+a_r < k_d-k_c \leq n/2$ using the facts that $2k_c-n \geq 0$ and $n-k_d \geq 0$. Therefore, the length of the entire block is at most $3n/2$ and the procedure is sufficient to correct $X$-type errors in the entire block of $n+a_l+a_r$ qubits (assuming that there are no more than $\lfloor(d_d-1)/2\rfloor$ $X$-type errors on any $n$ consecutive qubits). (2) $Z$-type error correction can be done by first measuring $X(\overbracket[0.5pt]{\underbracket[0.5pt]{\,\tilde{\mathbf{p}}_j\,}_{\mathclap{a_l}}}^{\text{last}}|\underbracket[0.5pt]{\,\tilde{\mathbf{p}}_j\,}_{\mathclap{n}}|\overbracket[0.5pt]{\underbracket[0.5pt]{\,\tilde{\mathbf{p}}_j\,}_{\mathclap{a_r}}}^{\text{first}})$ (where $j \in \{1,\dots,n-k_c\}$) to obtain the error syndrome of a $Z$-type error. Observe that a $Z$-type error on any first $a_l$ qubits (or on any last $a_r$ qubits) is equivalent to a $Z$-type error on the middle block of $n$ qubits up to a multiplication by some stabilizer in \cref{eq:QSC_stab3}. Therefore, the recovery operator for $Z$-type error correction can be obtained using an error decoder for the code $\mathcal{Q}_\mathcal{C}$. Note that $Z$-type error correction is guaranteed only if there are no more than $\lfloor(d_c-1)/2\rfloor$ $Z$-type errors on the entire block of $n+a_l+a_r$ qubits.

The full procedure for the receiver to correct bit-flip, phase-flip, and synchronization errors is summarized below.

\begin{enumerate}
    \item Measure $Z(\tilde{\mathbf{q}}_i)$ (where $i \in \{1,\dots,n-k_d\}$) on the received $n$ qubits and perform $X$-type error correction using an error decoder for the code $\mathcal{Q}_\mathcal{D}$.
    \item Measure $Z(\tilde{\mathbf{p}}_j)$ (where $j \in \{1,\dots,n-k_c\}$) on the received $n$ qubits to obtain the value $H_\mathcal{C}\mathcal{O}(\mathbf{q}_1,-\alpha)^T$.
    \item Determine $\alpha$ from $H_\mathcal{C}\mathcal{O}(\mathbf{q}_1,-\alpha)^T$ and retrieve the original qubit numbering intended by the sender.
    \item Measure $Z(\underbracket[0.5pt]{\vphantom{\tilde{\mathbf{q}}_i}\,\mathbf{0}\,}_{\mathclap{a_l+a_r}}|\underbracket[0.5pt]{\,\tilde{\mathbf{q}}_i\,}_{\mathclap{n}})$ and $Z(\underbracket[0.5pt]{\,\tilde{\mathbf{q}}_i\,}_{\mathclap{n}}|\underbracket[0.5pt]{\vphantom{\tilde{\mathbf{q}}_i}\,\mathbf{0}\,}_{\mathclap{a_l+a_r}})$ (where $i \in \{1,\dots,n-k_d\}$) and correct $X$-type errors on the entire block of $n+a_l+a_r$ qubits using an error decoder for the code $\mathcal{Q}_\mathcal{D}$.
    \item Measure $X(\overbracket[0.5pt]{\underbracket[0.5pt]{\,\tilde{\mathbf{p}}_j\,}_{\mathclap{a_l}}}^{\text{last}}|\underbracket[0.5pt]{\,\tilde{\mathbf{p}}_j\,}_{\mathclap{n}}|\overbracket[0.5pt]{\underbracket[0.5pt]{\,\tilde{\mathbf{p}}_j\,}_{\mathclap{a_r}}}^{\text{first}})$ (where $j \in \{1,\dots,n-k_c\}$) and correct $Z$-type errors on the entire block of $n+a_l+a_r$ qubits using an error decoder for the code $\mathcal{Q}_\mathcal{C}$.
\end{enumerate}

After the error correction procedure, the quantum state is an error-free code word of the QSC. To obtain the encoded quantum information, one can simply apply the inverse of the operations described in Steps 4 to 1 of \cref{subsec:QSC_enc}.

The descriptions of the encoding and the error correction procedures in \cref{subsec:QSC_enc,subsec:QSC_correction} provide an alternative proof of \cref{thm:QSC_original} in the stabilizer formalism. However, there is a subtle difference between the procedures in this work and the ones in the original work by Fujiwara \cite{Fujiwara2013}. In that work, misalignment is determined using polynomial division operations on $n$ qubits, which can be implemented through quantum shift registers \cite{GB03} (see also \cite{Wilde09} for an alternative implementation method). Such operations require $n$ steps with $n$ ancilla qubits. In this work, we find that misalignment (or $\alpha$) can be determined by measuring eigenvalues of operators $Z(\tilde{\mathbf{p}}_j)$ where $j \in \{1,\dots,n-k_c\}$ on the received qubits, which require $n-k_c$ steps with $n-k_c$ ancilla qubits. Later in \cref{sec:SHSC_main}, readers will find that the measurement results of $Z(\tilde{\mathbf{p}}_j)$ and $Z(\tilde{\mathbf{q}}_i)$ (for $X$-type error correction before synchronization recovery) are not independent, and thus $\alpha$ can be determined by measuring only $Z(\tilde{\mathbf{p}}_j)$ where $j \in \{1,\dots,k_d-k_c\}$. That is, our scheme for synchronization recovery requires only $k_d-k_c$ steps with $k_d-k_c$ ancilla qubits. Furthermore, readers will also see that QSCs similar to the ones proposed in \cite{Fujiwara2013} can also encode up to $k_d-k_c$ bits of classical information. 

\section{From generators of classical cyclic codes to anticommuting pairs of Pauli operators} \label{sec:cyclic_to_Pauli}

The main goal of this work is to generalize the notion of QSCs and construct other quantum codes with block synchronization property. In this section, we describe how anticommuting pairs of Pauli operators can be constructed from the generators of classical cyclic codes $\mathcal{C}$ and $\mathcal{D}$ satisfying $\mathcal{C}^\perp \subset \mathcal{C} \subset \mathcal{D}$. These anticommuting pairs are useful when we describe the code construction later in \cref{sec:SHSC_main}.

We start by observing that there are many possible choices of generators for each of the classical cyclic codes $\mathcal{C},\mathcal{C}^\perp,\mathcal{D},\mathcal{D}^\perp$. This fact provides flexibility to represent each classical code and is useful for the proofs of subsequent theorems.

\begin{theorem} \label{thm:gen_decomp}
Let $\mathcal{C}$ and $\mathcal{D}$ be $[n,k_c,d_c]$ and $[n,k_d,d_d]$ classical cyclic codes satisfying $\mathcal{C}^\perp \subset \mathcal{C} \subset \mathcal{D}$. Let $p(x)$ and $q(x)$ be the generator polynomials of $\mathcal{C}$ and $\mathcal{D}$, and let $\tilde{p}_\mathcal{R}(x)$, $\tilde{q}_\mathcal{R}(x)$, $\tilde{\mathbf{q}}_i$, $\tilde{\mathbf{p}}_i$, $\mathbf{p}_i$, and $\mathbf{q}_i$, be defined as in \cref{subsec:pre_cyclic}. Then the codes $\mathcal{D}^\perp$, $\mathcal{C}^\perp$, $\mathcal{C}$, and $\mathcal{D}$ can be represented in the following ways:
\begin{enumerate}[leftmargin=1cm]
    \item 
    \begin{equation}
        \mathcal{D}^\perp = \langle \tilde{\mathbf{q}}_i \rangle,\; \mathcal{C}^\perp = \langle \tilde{\mathbf{p}}_j \rangle,\; \mathcal{C} = \langle \mathbf{p}_l \rangle,\; \mathcal{D} = \langle \mathbf{q}_m \rangle, 
    \end{equation} where $i \in \{1,\dots,n-k_d\}$, $j \in \{1,\dots,n-k_c\}$, $l \in \{1,\dots,k_c\}$, and $m \in \{1,\dots,k_d\}$.
    \item 
    \begin{equation}
        \mathcal{D}^\perp = \langle \tilde{\mathbf{q}}_i \rangle,\; \mathcal{C}^\perp = \langle \tilde{\mathbf{q}}_i,\tilde{\mathbf{p}}_j \rangle,\; \mathcal{C} = \langle \tilde{\mathbf{q}}_i,\tilde{\mathbf{p}}_j,\mathbf{p}_l \rangle,\; \mathcal{D} = \langle \tilde{\mathbf{q}}_i,\tilde{\mathbf{p}}_j,\mathbf{p}_l,\mathbf{q}_m \rangle, 
    \end{equation} where $i \in \{1,\dots,n-k_d\}$, $j \in \{1,\dots,k_d-k_c\}$, $l \in \{1,\dots,2k_c-n\}$, and $m \in \{1,\dots,k_d-k_c\}$.
\end{enumerate}
\end{theorem}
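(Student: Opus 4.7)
The plan is to handle Part 1 as a direct consequence of the standard cyclic-code basis construction recalled in \cref{subsec:pre_cyclic}: for a principal ideal of $R_n$ with generator polynomial of degree $n-k$, the vectorizations of $x^{i-1}$ times that generator (for $i=1,\dots,k$) form a basis of the corresponding cyclic code. Applying this to the four ideals $I_{\mathcal{D}^\perp}, I_{\mathcal{C}^\perp}, I_\mathcal{C}, I_\mathcal{D}$ with their respective generator polynomials $\tilde{q}_\mathcal{R}(x), \tilde{p}_\mathcal{R}(x), p(x), q(x)$ immediately yields the stated bases for Part 1.

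For Part 2, the strategy is a chain of three ``incremental'' arguments that use the divisibility relations $\tilde{q}_\mathcal{R}(x)=f_3(x)\tilde{p}_\mathcal{R}(x)$, $\tilde{p}_\mathcal{R}(x)=f_2(x)p(x)$, and $p(x)=f_1(x)q(x)$ (with degrees $k_d-k_c$, $2k_c-n$, $k_d-k_c$ respectively) established at the end of \cref{subsec:pre_cyclic}. Each step combines polynomial division with a dimension count. For the claim about $\mathcal{C}^\perp$, containment of $\langle\tilde{\mathbf{q}}_i,\tilde{\mathbf{p}}_j\rangle$ in $\mathcal{C}^\perp$ is clear since $\mathcal{D}^\perp\subset\mathcal{C}^\perp$ and each $\tilde{\mathbf{p}}_j\in\mathcal{C}^\perp$. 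For the reverse inclusion I take an arbitrary $c(x)\in I_{\mathcal{C}^\perp}$, write $c(x)=i(x)\tilde{p}_\mathcal{R}(x)$ with $\deg i<n-k_c$, and apply polynomial division to get $i(x)=a(x)f_3(x)+b(x)$ with $\deg b<k_d-k_c$ and $\deg a<n-k_d$. Substituting gives $c(x)=a(x)\tilde{q}_\mathcal{R}(x)+b(x)\tilde{p}_\mathcal{R}(x)$, which vectorizes to a $\mathbb{Z}_2$-combination of the restricted generators $\{\tilde{\mathbf{q}}_i\}_{i=1}^{n-k_d}$ and $\{\tilde{\mathbf{p}}_j\}_{j=1}^{k_d-k_c}$. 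The dimension tally $(n-k_d)+(k_d-k_c)=n-k_c=\dim\mathcal{C}^\perp$ then forces linear independence.

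Steps (ii) and (iii) are structurally identical. For $\mathcal{C}$, I would write any $c(x)\in I_\mathcal{C}$ as $i(x)p(x)$ with $\deg i<k_c$, divide $i(x)=a(x)f_2(x)+b(x)$ with $\deg b<2k_c-n$, and use $\tilde{p}_\mathcal{R}=f_2p$ to rewrite $c(x)=a(x)\tilde{p}_\mathcal{R}(x)+b(x)p(x)$; combined with Step (i) on the first summand, this expresses $c(x)$ in terms of $\tilde{\mathbf{q}}_i,\tilde{\mathbf{p}}_j,\mathbf{p}_l$ with the index ranges from the statement. For $\mathcal{D}$, I take $c(x)\in I_\mathcal{D}$, write $c(x)=i(x)q(x)$, divide $i(x)=a(x)f_1(x)+b(x)$ with $\deg b<k_d-k_c$, and use $p=f_1q$ to get $c(x)=a(x)p(x)+b(x)q(x)$; the first summand falls into the previously established basis of $\mathcal{C}$, while the second contributes the $\mathbf{q}_m$ for $m=1,\dots,k_d-k_c$. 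The dimension identities $(n-k_d)+(k_d-k_c)+(2k_c-n)=k_c$ and one more $+(k_d-k_c)=k_d$ confirm linear independence in each case.

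I do not anticipate any serious conceptual obstacle; the main technical point to be careful about is degree bookkeeping, since all polynomial arithmetic is taking place in $R_n=\mathbb{Z}_2[x]/(x^n-1)$. In each step I need to verify that the products of divided quotients with the generator polynomials do not exceed degree $n-1$, so that no modular reduction modulo $x^n-1$ is incurred when identifying polynomials with their vectorizations. For example, in Step (i) the bound $\deg a\le n-k_d-1$ combined with $\deg\tilde{q}_\mathcal{R}=k_d$ gives $\deg(a(x)\tilde{q}_\mathcal{R}(x))\le n-1$, as needed; the analogous checks in (ii) and (iii) are routine consequences of the degree bounds obtained from the divisions.
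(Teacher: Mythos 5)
Your proposal is correct and follows essentially the same route as the paper's proof: both rely on the divisibility relations $\tilde{q}_\mathcal{R}=f_3\tilde{p}_\mathcal{R}$, $\tilde{p}_\mathcal{R}=f_2 p$, $p=f_1 q$ and reduce an arbitrary ideal element to the restricted generator sets, with the degree bookkeeping ensuring no reduction modulo $x^n-1$ (the paper's explicit expansion $g_1(x)=x^{D-k_d+k_c}f(x)+\dots+z_0f(x)+g_3(x)$ in its coset decomposition $\mathcal{A}_\mathbf{w}$ is exactly your Euclidean division $i=af_3+b$ in disguise). Your dimension-count argument for independence is a slightly cleaner packaging of what the paper establishes via disjointness of the cosets $\mathcal{A}_\mathbf{w}$, but the substance is identical.
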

A proof of \cref{thm:gen_decomp} is provided in \cref{subsec:gen_decomp_proof}.

Next, we observe that there exists a choice of classical code generators with good properties from which the anticommuting pairs of Pauli operators can be constructed.
\begin{theorem} \label{thm:op_pairing}
Let $\mathcal{C}$ and $\mathcal{D}$ be classical cyclic codes as defined in \cref{thm:gen_decomp}, and let $i,i' \in \{1,\dots,n-k_d\}$, $j,j' \in \{1,\dots,k_d-k_c\}$, $l,l' \in \{1,\dots,2k_c-n\}$, $m,m' \in \{1,\dots,k_d-k_c\}$. There exist binary vectors $\tilde{\mathbf{t}}_j,\mathbf{s}_l^x,\mathbf{s}_l^z,\mathbf{t}_m^x,\mathbf{t}_m^z$ with the following properties:
\begin{enumerate}[leftmargin=1cm]
    \item $\langle \mathbf{s}_l^x \rangle = \langle \mathbf{s}_l^z \rangle = \langle \mathbf{p}_l \rangle$, $\langle \tilde{\mathbf{t}}_j \rangle = \langle \tilde{\mathbf{p}}_j \rangle$, and $\langle\tilde{\mathbf{t}}_j,\mathbf{t}_m^x\rangle = \langle\tilde{\mathbf{t}}_j,\mathbf{t}_m^z\rangle = \langle\tilde{\mathbf{p}}_j,\mathbf{q}'_m\rangle$, where $\mathbf{q}'_m=\mathbf{q}_m+\sum_l (\mathbf{q}_m \cdot \mathbf{s}_l^x)\mathbf{s}_{l}^z = \mathbf{q}_m+\sum_l (\mathbf{q}_m \cdot \mathbf{s}_l^z)\mathbf{s}_{l}^x$.
    \item $\mathcal{C}^\perp = \langle \tilde{\mathbf{q}}_i,\tilde{\mathbf{t}}_j \rangle$, $\mathcal{C} = \langle \tilde{\mathbf{q}}_i,\tilde{\mathbf{t}}_j,\mathbf{s}_l^x \rangle = \langle \tilde{\mathbf{q}}_i,\tilde{\mathbf{t}}_j,\mathbf{s}_l^z \rangle$, and $\mathcal{D} = \langle \tilde{\mathbf{q}}_i,\tilde{\mathbf{t}}_j,\mathbf{s}_l^x,\mathbf{t}_m^x \rangle = \langle \tilde{\mathbf{q}}_i,\tilde{\mathbf{t}}_j,\mathbf{s}_l^z,\mathbf{t}_m^z \rangle$.
    \item $\tilde{\mathbf{q}}_i \cdot \tilde{\mathbf{q}}_{i'} = \tilde{\mathbf{q}}_i \cdot \tilde{\mathbf{t}}_{j} = \tilde{\mathbf{q}}_i \cdot \mathbf{s}_{l}^x = \tilde{\mathbf{q}}_i \cdot \mathbf{s}_{l}^z = \tilde{\mathbf{q}}_i \cdot \mathbf{t}_{m}^x = \tilde{\mathbf{q}}_i \cdot \mathbf{t}_{m}^z =0$ for all $i,i',j,l,m$.
    \item $\tilde{\mathbf{t}}_j \cdot \tilde{\mathbf{t}}_{j'} = \tilde{\mathbf{t}}_j \cdot \mathbf{s}_{l}^x=\tilde{\mathbf{t}}_j \cdot \mathbf{s}_{l}^z=0$ for all $j,j',l$.
    \item For each $l$, $\mathbf{s}_l^x \cdot \mathbf{s}_{l}^z=1$, and $\mathbf{s}_l^x \cdot \mathbf{s}_{l'}^z=\mathbf{s}_l^z \cdot \mathbf{s}_{l'}^x=0$ for all $l'\neq l$.
    \item For each $j$, $\tilde{\mathbf{t}}_{j} \cdot \mathbf{t}_{j}^x=\tilde{\mathbf{t}}_{j} \cdot \mathbf{t}_{j}^z=1$ and $\tilde{\mathbf{t}}_{j} \cdot \mathbf{t}_{j'}^x=\tilde{\mathbf{t}}_{j} \cdot \mathbf{t}_{j'}^z=0$ for all $j'\neq j$.
    \item $\mathbf{s}_l^x \cdot \mathbf{t}_{j}^z = \mathbf{s}_l^z \cdot \mathbf{t}_{j}^x =0$ for all $j,l$.
    \item $\mathbf{t}_{j}^x \cdot \mathbf{t}_{j'}^z=0$ for all $j,j'$.
\end{enumerate}
\end{theorem}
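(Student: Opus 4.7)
The plan is to construct the five families of vectors by a two-stage biorthogonal change of basis that exploits the flag $\mathcal{D}^\perp \subset \mathcal{C}^\perp \subset \mathcal{C} \subset \mathcal{D}$. Property 3 will be automatic because every vector I construct lives in $\mathcal{D}$ while $\tilde{\mathbf{q}}_i \in \mathcal{D}^\perp$; likewise the $\tilde{\mathbf{t}}_j$-orthogonality parts of property 4 follow immediately once $\tilde{\mathbf{t}}_j \in \mathcal{C}^\perp$ and $\mathbf{s}_l^{x/z} \in \mathcal{C}$, using that $\mathcal{C}^\perp$ is self-orthogonal (since $\mathcal{C}^\perp \subset \mathcal{C}$).

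For property 5 and the inner-span assertion of property 1, I note that the dot product descends to a non-degenerate symmetric bilinear form on $\mathcal{C}/\mathcal{C}^\perp$ (because $(\mathcal{C}^\perp)^\perp = \mathcal{C}$). \cref{thm:gen_decomp} part 2 guarantees that $V_{\mathrm{in}} := \langle \mathbf{p}_l \rangle$ is a complement of $\mathcal{C}^\perp$ inside $\mathcal{C}$, so the Gram matrix $G_{l l'} = \mathbf{p}_l \cdot \mathbf{p}_{l'}$ restricted to $V_{\mathrm{in}}$ is invertible. I would take $\mathbf{s}_l^x = \mathbf{p}_l$ and set $\mathbf{s}_l^z = \sum_{l'} (G^{-1})_{l l'}\, \mathbf{p}_{l'}$; this is the dual basis, giving $\mathbf{s}_l^x \cdot \mathbf{s}_{l'}^z = (G\, G^{-1})_{l l'} = \delta_{l l'}$, and since both families stay inside $V_{\mathrm{in}}$ we automatically have $\langle \mathbf{s}_l^x \rangle = \langle \mathbf{s}_l^z \rangle = \langle \mathbf{p}_l \rangle$.

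For the outer part, the same non-degeneracy idea works for the pairing $(\tilde{\mathbf{p}}_j, \mathbf{q}_m) \mapsto \tilde{\mathbf{p}}_j \cdot \mathbf{q}_m$: any $\tilde{\mathbf{p}} \in \langle \tilde{\mathbf{p}}_j \rangle$ annihilating every $\mathbf{q}_m$ also annihilates $\mathcal{C}$ and so lies in $\mathcal{D}^\perp \cap \langle \tilde{\mathbf{p}}_j \rangle = \{\mathbf{0}\}$, making $M_{j m} = \tilde{\mathbf{p}}_j \cdot \mathbf{q}_m$ invertible. I then set $\tilde{\mathbf{t}}_j = \sum_{j'} (M^{-1})_{j j'}\, \tilde{\mathbf{p}}_{j'}$ so that $\tilde{\mathbf{t}}_j \cdot \mathbf{q}_m = \delta_{j m}$, and define $\mathbf{t}_m^x = \mathbf{q}_m + \sum_l (\mathbf{q}_m \cdot \mathbf{s}_l^z)\, \mathbf{s}_l^x$ and $\mathbf{t}_m^z = \mathbf{q}_m + \sum_l (\mathbf{q}_m \cdot \mathbf{s}_l^x)\, \mathbf{s}_l^z$. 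A direct calculation using property 5 gives $\mathbf{t}_m^x \cdot \mathbf{s}_{l'}^z = 0$ and $\mathbf{t}_m^z \cdot \mathbf{s}_{l'}^x = 0$ (property 7); the corrections added to $\mathbf{q}_m$ belong to $\mathcal{C}$ and are thus orthogonal to $\tilde{\mathbf{t}}_j \in \mathcal{C}^\perp$, preserving $\tilde{\mathbf{t}}_j \cdot \mathbf{t}_m^{x/z} = \tilde{\mathbf{t}}_j \cdot \mathbf{q}_m = \delta_{j m}$ (property 6). The remaining span equalities in property 1 and in property 2 then follow by direct substitution of the new vectors in terms of the old ones.

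The last property, 8, is handled by a final tweak: letting $a_{j j'}$ denote the current value of $\mathbf{t}_j^x \cdot \mathbf{t}_{j'}^z$, I replace $\mathbf{t}_j^x$ by $\mathbf{t}_j^x + \sum_{j'} a_{j j'}\, \tilde{\mathbf{t}}_{j'}$; then $\tilde{\mathbf{t}}_{j'} \cdot \mathbf{t}_m^z = \delta_{j' m}$ makes the new cross-term equal to $a_{j m} + a_{j m} = 0$ in characteristic $2$. This modification does not spoil anything established earlier because $\tilde{\mathbf{t}}_{j'}$ is self-orthogonal, orthogonal to $\tilde{\mathbf{q}}_i$, and orthogonal to $\mathbf{s}_l^{x/z}$, while the added vectors lie in $\langle \tilde{\mathbf{t}}_j \rangle$ and so leave every span equality unchanged. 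The main obstacle, which is essentially bookkeeping, is verifying at each stage that the successive modifications do not cancel earlier orthogonalities, and every such check reduces cleanly to the two elementary relations $\mathcal{D}^\perp \perp \mathcal{D}$ and $\mathcal{C}^\perp \perp \mathcal{C}$ inherited from the flag.
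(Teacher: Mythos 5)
Your construction is correct, but it takes a genuinely different route from the paper. The paper works in the symplectic ($2n$-bit) representation of Pauli operators, invokes the decomposition $V=\mathrm{symp}(V)\oplus\mathrm{iso}(V)$ from Theorem~0.1 of the supplementary material of \cite{BDH06}, and then runs two explicit iterative symplectic Gram--Schmidt algorithms: one on $\{\mathbf{p}_l\}$ to produce the pairs $(\mathbf{s}_l^x,\mathbf{s}_l^z)$, with separate cases depending on whether some $\mathbf{w}_j\cdot\mathbf{w}_j=1$, and one on $\{\tilde{\mathbf{p}}_j,\mathbf{q}'_m\}$ to produce $(\tilde{\mathbf{t}}_j,\mathbf{t}_m^x,\mathbf{t}_m^z)$. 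You instead reduce everything to the invertibility of two matrices --- the Gram matrix $G_{ll'}=\mathbf{p}_l\cdot\mathbf{p}_{l'}$, which is invertible because the dot product descends to a non-degenerate form on $\mathcal{C}/\mathcal{C}^\perp$, and the pairing matrix $M_{jm}=\tilde{\mathbf{p}}_j\cdot\mathbf{q}_m$, which is invertible because $\langle\tilde{\mathbf{p}}_j\rangle_{j\le k_d-k_c}$ meets $\mathcal{D}^\perp$ trivially --- and then take dual bases, finishing with a characteristic-$2$ correction $\mathbf{t}_j^x\mapsto\mathbf{t}_j^x+\sum_{j'}(\mathbf{t}_j^x\cdot\mathbf{t}_{j'}^z)\tilde{\mathbf{t}}_{j'}$ to enforce property~8. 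Both routes rest on the same two facts ($\mathcal{D}^\perp\perp\mathcal{D}$ and $\mathcal{C}^\perp\perp\mathcal{C}$ plus the dimension counts from \cref{thm:gen_decomp}), but yours is shorter and avoids the case analysis, at the cost of being less explicitly algorithmic. One point you gloss over: the theorem asserts that the two formulas $\mathbf{q}_m+\sum_l(\mathbf{q}_m\cdot\mathbf{s}_l^x)\mathbf{s}_l^z$ and $\mathbf{q}_m+\sum_l(\mathbf{q}_m\cdot\mathbf{s}_l^z)\mathbf{s}_l^x$ define the \emph{same} vector $\mathbf{q}'_m$ (so that your $\mathbf{t}_m^x$ and $\mathbf{t}_m^z$ agree before the final tweak). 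The paper secures this by arranging $\{\mathbf{s}_l^x\}$ and $\{\mathbf{s}_l^z\}$ to be equal as sets up to reindexing; in your construction it instead follows from the symmetry of $G$, hence of $G^{-1}$, since both expressions equal $\mathbf{q}_m+\mathbf{u}G^{-1}P$ with $\mathbf{u}_l=\mathbf{q}_m\cdot\mathbf{p}_l$ and $P$ the matrix of the $\mathbf{p}_l$. You should state this explicitly, but it is a one-line verification, not a gap.
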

A proof of \cref{thm:op_pairing} is provided in \cref{subsec:op_pairing_proof}.

Observe that the Pauli operators $X(\mathbf{v}_1)$ and $Z\!\left(\mathbf{v}_2\right)$ commute if $\mathbf{v}_1 \cdot \mathbf{v}_2=0$ and they anticommute if $\mathbf{v}_1 \cdot \mathbf{v}_2=1$. From \cref{thm:op_pairing}, the following pairs of anticommuting Pauli operators can be constructed from the generators $\tilde{\mathbf{t}}_j,\mathbf{s}_l^x,\mathbf{s}_l^z,\mathbf{t}_m^x,\mathbf{t}_m^z$ of the classical codes: $\{X(\mathbf{s}_l^x),Z(\mathbf{s}_l^z)\}$, $\{X(\tilde{\mathbf{t}}_j),Z(\mathbf{t}_j^z)\}$, and $\{Z(\tilde{\mathbf{t}}_j),X(\mathbf{t}_j^x)\}$. 

Let $\mathcal{N}_0 = \{X(\tilde{\mathbf{q}}_i),Z(\tilde{\mathbf{q}}_i),X(\tilde{\mathbf{t}}_j),Z(\tilde{\mathbf{t}}_j),X(\mathbf{s}_l^x),Z(\mathbf{s}_l^z),X(\mathbf{t}_{m}^x),Z(\mathbf{t}_{m}^z)\}$. The Pauli operators in $\mathcal{N}_0$ have the following properties: 
\begin{enumerate}
    \item For each $i$, $X(\tilde{\mathbf{q}}_i)$ and $Z(\tilde{\mathbf{q}}_i)$ commute with all operators in $\mathcal{N}_0$.
    \item For each $l$, $X(\mathbf{s}_l^x)$ and $Z(\mathbf{s}_l^z)$ anticommute, and both operators commute with other operators in $\mathcal{N}_0$.
    \item For each $j$, $X(\tilde{\mathbf{t}}_j)$ and $Z(\mathbf{t}_{j}^z)$ (or $Z(\tilde{\mathbf{t}}_j)$ and $X(\mathbf{t}_{j}^x)$) anticommute, and both operators commute with other operators in $\mathcal{N}_0$.
\end{enumerate}
Consider the stabilizer code $\mathcal{Q}_\mathcal{D}$ which is constructed from a check matrix of the classical cyclic code $\mathcal{D}$ through the CSS construction. We can see that the stabilizer group $\mathcal{S}$ of $\mathcal{Q}_\mathcal{D}$ is generated by $\{X(\tilde{\mathbf{q}}_i),Z(\tilde{\mathbf{q}}_i)\}$, while the centralizer $C(\mathcal{S})$ is generated by $\mathcal{N}_0$ and $iI^{\otimes{n}}$.

In the next section, we will use the operators in $\mathcal{N}_0$ to define the stabilizer group, gauge group, and logical operators of each quantum code in the synchronizable hybrid subsystem code family.

\section{A family of synchronizable hybrid subsystem codes} \label{sec:SHSC_main}

Previously in \cref{sec:QSC_stabilizer}, we describe how the sender can encode quantum information into a block of QSC, and how the receiver can correct bit-flip, phase-flip, and synchronization errors then decode the quantum information. In this section, we generalize the ideas and provide constructions of hybrid codes, subsystem codes, and hybrid subsystem codes on which synchronization recovery can be done. The quantum codes provided in this section are unified into \emph{a family of synchronizable hybrid subsystem codes}.

We start by stating the main theorem of this work. Recall that we use the notation of $\codepar{n,k\!:\!m,r,d}$ hybrid subsystem code to refer to a code that encodes $k$ logical qubits and $m$ classical bits into $n$ physical qubits, has $r$ gauge qubits, and has code distance $d$. An $\codepar{n,k,r,d}$ subsystem code and an $\codepar{n,k\!:\!m,d}$ hybrid code are defined similarly. A synchronizable code with parameter $(a_l,a_r)$ can correct misalignment by up to $a_l$ qubits to the left and up to $a_r$ qubits to the right, or equivalently, the code can correct misalignment in the case that the first qubit of the main block is marked too early by up to $a_l$ positions or too late by up to $a_r$ positions (assuming that the leftmost qubit is sent first).

\begin{theorem} \label{thm:QSC_unified}
    Let $\mathcal{C}$ and $\mathcal{D}$ be $[n,k_c,d_c]$ and $[n,k_d,d_d]$ classical cyclic codes, respectively. Suppose that $\mathcal{C}$ and $\mathcal{D}$ satisfy $\mathcal{C}^\perp \subset \mathcal{C} \subset \mathcal{D}$ and $k_c < k_d$.
    \begin{enumerate}[leftmargin=1cm] 
        \item There exists an \codepar{n,2k_c-n,2(k_d-k_c),d_d} \emph{subsystem code} $\mathcal{Q}_1$ that has maximum synchronization distance $1$.
        \item For any non-negative integers $a_l, a_r$ satisfying $a_l+a_r<k_d-k_c$, there exists an $(a_l,a_r)$-\codepar{n+a_l+a_r,2k_c-n,k_d-k_c,d_d} \emph{synchronizable subsystem code} $\mathcal{Q}_2$ that has maximum synchronization distance $k_d-k_c$.
        \item For any non-negative integers $a_l, a_r$ satisfying $a_l+a_r<k_d-k_c$, there exists an $(a_l,a_r)$-\codepar{n+a_l+a_r,2k_c-n\!:\!k_d-k_c,d_d} \emph{synchronizable hybrid code} $\mathcal{Q}_3$ that has maximum synchronization distance $k_d-k_c$.
        \item For any integer $y \in \{1,\dots,k_d-k_c-2\}$ and for any non-negative integers $a_l, a_r$ satisfying $a_l+a_r<k_d-k_c-y$, there exists an $(a_l,a_r)$-\codepar{n+a_l+a_r,2k_c-n\!:\!k_d-k_c+y,d_d} \emph{synchronizable hybrid code} $\mathcal{Q}_4$ that has maximum synchronization distance $k_d-k_c-y$.
        \item There exists an \codepar{n,2k_c-n\!:\!2(k_d-k_c),d_d} \emph{hybrid code} $\mathcal{Q}_5$ that has maximum synchronization distance $1$.
        \item For any integer $y \in \{1,\dots,k_d-k_c-2\}$ and for any non-negative integers $a_l, a_r$ satisfying $a_l+a_r<k_d-k_c-y$, there exists an $(a_l,a_r)$-\codepar{n+a_l+a_r,2k_c-n\!:\!y,k_d-k_c,d_d} \emph{synchronizable hybrid subsystem code} $\mathcal{Q}_6$ that has maximum synchronization distance $k_d-k_c-y$.
        \item There exists an \codepar{n,2k_c-n\!:\!k_d-k_c,k_d-k_c,d_d} \emph{hybrid subsystem code} $\mathcal{Q}_7$ that has maximum synchronization distance $1$.
    \end{enumerate}
\end{theorem}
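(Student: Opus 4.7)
The plan is to give, for each of the seven cases, an explicit choice of stabilizer, gauge, logical, and translation groups built from the operators in the set $\mathcal{N}_0$ of \cref{sec:cyclic_to_Pauli}, then verify parameters, distance, and synchronization distance using \cref{thm:op_pairing}, a hybrid generalization of \cref{lem:sync_value}, and \cref{thm:sync_proof_subspace} (whose statement is referenced in \cref{cor:equiv_stab_group_QSC}). Across all seven codes I would fix the outer stabilizer group
\[ \mathcal{S}_Q = \left\langle X(\tilde{\mathbf{q}}_i), Z(\tilde{\mathbf{q}}_i) \right\rangle,\quad i\in\{1,\dots,n-k_d\}, \]
and the quantum logical group generated by $\{X(\mathbf{s}_l^x), Z(\mathbf{s}_l^z)\}$ for $l\in\{1,\dots,2k_c-n\}$. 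The remaining $2(k_d-k_c)$ anticommuting pairs coming from $\{X(\tilde{\mathbf{t}}_j), Z(\mathbf{t}_j^z)\}$ and $\{Z(\tilde{\mathbf{t}}_j), X(\mathbf{t}_j^x)\}$ are then partitioned among three roles: gauge pairs, translation--classical-stabilizer pairs, and ``sync slots'' consumed by the QSC encoding layer.

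For the three non-synchronizable codes I would work on $n$ qubits and simply fix the role assignment. For $\mathcal{Q}_1$ all $2(k_d-k_c)$ pairs become gauge; for $\mathcal{Q}_5$ all of them become classical, i.e., the $X$-type element of each pair is declared a translation operator and its anticommuting $Z$-type partner becomes a classical stabilizer; for $\mathcal{Q}_7$ the $k_d-k_c$ pairs of the first kind become classical while the $k_d-k_c$ pairs of the second kind remain gauge. The commutation relations of \cref{thm:op_pairing} confirm that each assignment satisfies the group definitions of \cref{subsubsec:subsystem_codes,subsubsec:hybrid_codes,subsubsec:hybrid_subsystem_codes}, and counting pairs yields the claimed parameters. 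For the distance bound $d_d$, the centralizer of $\mathcal{S}_Q$ consists of $X$- and $Z$-type operators whose binary labels lie in $\mathcal{D}$; combining \cref{thm:gen_decomp} with the inner gauge/stabilizer structure shows that any nontrivial dressed logical is equivalent, modulo the inner gauge group, to a Pauli whose support is a nonzero coset of $\mathcal{C}^\perp$ in $\mathcal{D}$, whose minimum Hamming weight is $d_d$.

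For the synchronizable cases $\mathcal{Q}_2, \mathcal{Q}_3, \mathcal{Q}_4, \mathcal{Q}_6$, I would then append the encoding circuit of \cref{subsec:QSC_enc}, but replace the operator $X(\mathbf{q}_1)$ in Step 2 by $X(\mathbf{q}_1)\,X(\mathbf{q}_2)^{c_1}\cdots X(\mathbf{q}_{y+1})^{c_y}$, where $c_1,\dots,c_y$ are the classical bits to be encoded (so $y=0$ for $\mathcal{Q}_2$ and $\mathcal{Q}_3$, and $y\geq 1$ for $\mathcal{Q}_4$ and $\mathcal{Q}_6$); the corresponding $X(\tilde{\mathbf{t}}_{j})$ for $j=2,\dots,y+1$ play the role of the translation operators. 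Propagating the new signature through the CNOT coupling yields stabilizer generators identical in form to \cref{eq:QSC_stab1,eq:QSC_stab2,eq:QSC_stab3}, with the phase $(-1)^{\tilde{\mathbf{p}}_j\cdot\mathbf{q}_1}$ replaced by $(-1)^{\tilde{\mathbf{p}}_j\cdot(\mathbf{q}_1+\sum_i c_i\mathbf{q}_{i+1})}$. \cref{thm:sync_proof_subspace}, applied with $\mathbf{w}=\mathbf{q}_1+\sum_i c_i\mathbf{q}_{i+1}$, then shows that a sync shift by $\alpha$ is equivalent to replacing $\mathbf{w}$ by $\mathcal{O}(\mathbf{w},-\alpha)$ in these signs, so the syndrome $H_\mathcal{C}\,\mathcal{O}(\mathbf{w},-\alpha)^T$ jointly encodes $\alpha$ and the classical message. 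A hybrid generalization of \cref{lem:sync_value} — which I would establish next — gives injectivity of this syndrome map on $(c_1,\dots,c_y,\alpha)$ whenever $a_l+a_r+y<k_d-k_c$, yielding sync distance $k_d-k_c-y$. The distance $d_d$ carries over because CNOT coupling preserves the CSS structure and any nontrivial dressed logical restricts to a codeword of $\mathcal{D}\setminus\mathcal{C}^\perp$ on the main block.

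The main obstacle is the hybrid generalization of \cref{lem:sync_value}. Its proof will follow the same polynomial-division strategy: injectivity reduces to showing that for nonzero $(c'_1,\dots,c'_y,\alpha')$ with $|\alpha'|+y<k_d-k_c$, the polynomial $q(x)\bigl(x^{-\alpha'}+\sum_i c'_i x^{i-\alpha'}\bigr)$ does not lie in $I_\mathcal{C}$, since its effective degree is at most $(n-k_d)+(k_d-k_c-1)=n-k_c-1$, which is strictly less than the minimal degree $n-k_c$ of a nonzero element of $I_\mathcal{C}$. The subtlety is ensuring that the indexing of which generators $\mathbf{q}_m$ are designated as translation operators, combined with the permissible shift range for $\alpha$, actually places the polynomial inside that degree window; this is what pins down the tight constraint $a_l+a_r+y<k_d-k_c$. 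Everything else — counting logical qubits, classical bits, and gauge qubits, and verifying commutation of the declared generators — is mechanical once the pair assignment is fixed.
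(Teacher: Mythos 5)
Your proposal is correct and follows essentially the same route as the paper: the role-partition of the $2(k_d-k_c)$ anticommuting pairs from \cref{thm:op_pairing} into gauge, translation/classical-stabilizer, and synchronization slots is exactly the paper's gauge-fixing and sacrifice-of-synchronization-distance picture, and the encoding layer with signature $\mathbf{q}_1+\sum_i c_i\mathbf{q}_{i+1}$ propagated via \cref{thm:sync_proof_subspace} matches the construction of $\mathcal{Q}_2$--$\mathcal{Q}_6$. The ``hybrid generalization of \cref{lem:sync_value}'' you identify as the main obstacle is precisely the paper's \cref{lem:sync_message_value}, proved by the same degree-counting argument (the difference polynomial $q(x)\bigl(1+\sum_m w_m x^{m-1}+x^{\beta-\gamma}(1+\sum_m v_m x^{m-1})\bigr)$ is nonzero of degree at most $n-k_d+(\beta-\gamma)+y<n-k_c$, hence not in $I_{\mathcal{C}}$), so your constraint $a_l+a_r+y<k_d-k_c$ is exactly the one the paper derives.
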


To encode any desired code $\mathcal{Q}_i$ from \cref{thm:QSC_unified}, we use a general encoding procedure described below. 
\begin{enumerate}
    \item The encoding procedure starts by applying a unitary encoder $U_i$ of some CSS code $\mathcal{Q}^0_i$, which will be called the \emph{initial code} of the desired code $\mathcal{Q}_i$, to $2k_c-n$ qubits containing quantum information plus $2n-2k_c$ ancilla qubits.
    \item If $\mathcal{Q}_i$ has hybrid structure (i.e., classical bits can be also encoded), the operators corresponding to the classical information are applied to the block of $n$ qubits.
    \item If $\mathcal{Q}_i$ is a synchronizable code (a code in which the maximum synchronization distance is $> 1$), the operator $X(\mathbf{q}_1)$ is applied to the block of $n$ qubits. Afterwards, $a_l$ ancilla qubits are attached to left of the $n$-qubit code block and $a_r$ ancilla qubits are attached the right (where $a_l+a_r$ satisfies the requirement for $\mathcal{Q}_i$). All ancilla qubits are initialized in the physical $\ket{0}$ state. After that, CNOT gates with the following pairs of control and target qubits $(c,t)$ are applied: $(a_l+1,a_l+n+1),\dots,(a_l+a_r,a_l+n+a_r)$ and $(n+a_l,a_l),\dots,(n+1,1)$.
\end{enumerate}
After these steps are complete, a code word of $\mathcal{Q}_i$ is obtained and ready to be transmitted through a quantum channel to the receiver.

Relationship between the properties of the codes from \cref{thm:QSC_unified} can be visualized by a Venn diagram in \cref{fig:Venn}. 
The constructions of these codes are related by three key processes: the encoding procedure (as described above), gauge fixing \cite{PR13,ADP14}, and the sacrifice of synchronization distance (to be explained later in this section). How each code can be constructed from one another is summarized in \cref{fig:con_diagram}.

\begin{figure}[htbp]
    \centering
    \includegraphics[width=0.32\textwidth]{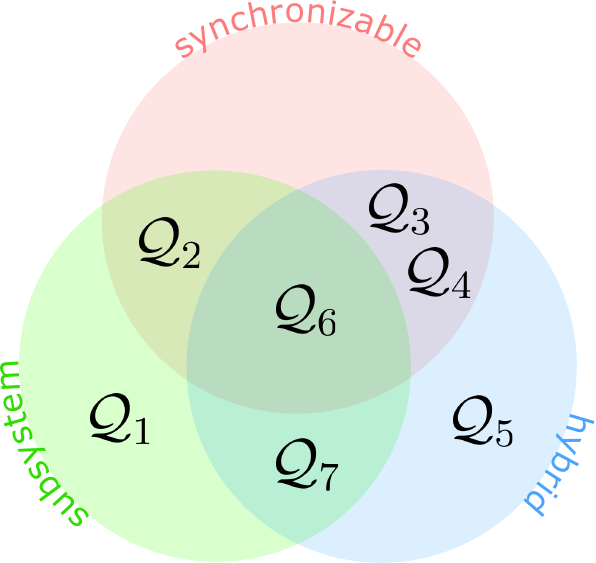}
    \caption{A Venn diagram displaying properties of the codes in a family of the synchronizable hybrid subsystem codes from \cref{thm:QSC_unified}.}
    \label{fig:Venn}
\end{figure}

\begin{figure}[htbp]
    \centering
    \hspace{0.1\textwidth}
    \includegraphics[width=0.55\textwidth]{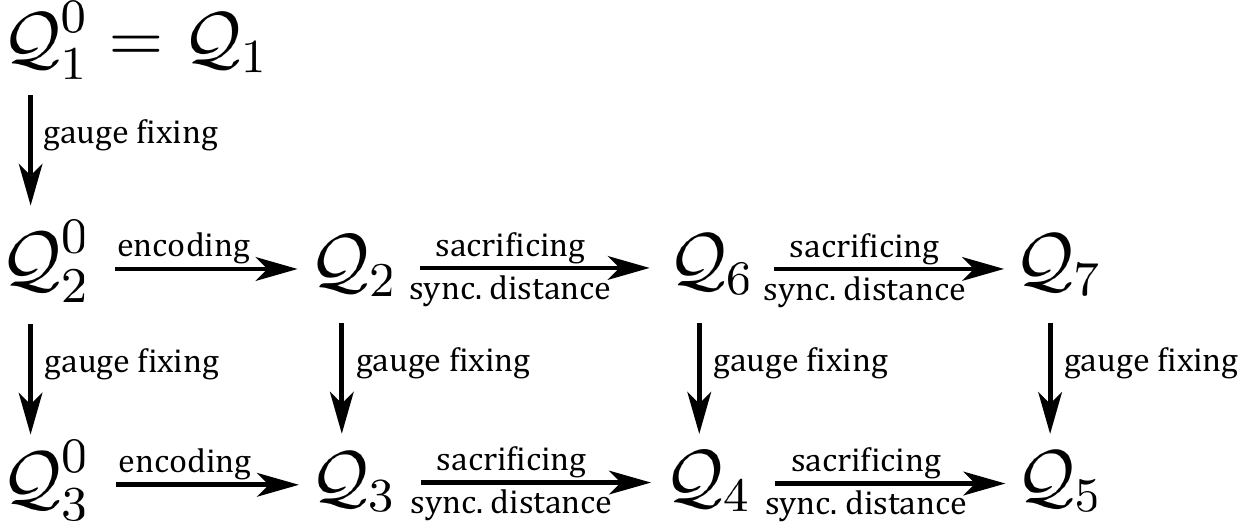}
    \caption{Relationship between each of the codes from \cref{thm:QSC_unified} and their initial codes. The constructions of these codes are related by three key processes: the encoding procedure involving ancilla attachment and CNOT operations, gauge fixing, and the sacrifice of synchronization distance.}
    \label{fig:con_diagram}
\end{figure}

The details of the initial code as well as the encoding and error correction procedures for each code from \cref{thm:QSC_unified} are provided below.

\subsection{Subsystem and synchronizable subsystem codes} \label{subsec:SHSC_subsystem}

\vspace*{0.2cm}
\textit{$\mathcal{Q}_1$: a subsystem code}
\vspace*{0.2cm}

\noindent The initial code $\mathcal{Q}^0_1$ of $\mathcal{Q}_1$ is described by the stabilizer group $\mathcal{S}=\left\langle X(\tilde{\mathbf{q}}_i),Z(\tilde{\mathbf{q}}_i)\right\rangle$ and the gauge group $\mathcal{G}=\left\langle iI^{\otimes n},X(\tilde{\mathbf{q}}_i),Z(\tilde{\mathbf{q}}_i),X(\tilde{\mathbf{t}}_j),Z(\tilde{\mathbf{t}}_j),X(\mathbf{t}_{m}^x),Z(\mathbf{t}_{m}^z)\right\rangle$ where $i \in \{1,\dots,n-k_d\}$, $j \in \{1,\dots,k_d-k_c\}$, and $m \in \{1,\dots,k_d-k_c\}$. Since $\mathcal{Q}_1$ is a non-synchronizable code and does not have hybrid structure, $\mathcal{Q}_1$ and $\mathcal{Q}^0_1$ are the same code. $\mathcal{Q}_1$ encodes no classical bits and has maximum synchronization distance 1.

The logical qubits of $\mathcal{Q}_1$ are defined by the anticommuting pairs $\{X(\mathbf{s}_l^x),Z(\mathbf{s}_l^z)\}$ where $l \in \{1,\dots,2k_c-n\}$, while the gauge qubits are defined by the pairs $\{X(\tilde{\mathbf{t}}_j),Z(\mathbf{t}_j^z)\}$, and $\{Z(\tilde{\mathbf{t}}_j),X(\mathbf{t}_j^x)\}$. Therefore, $\mathcal{Q}_1$ has $2k_c-n$ logical qubits and $2(k_d-k_c)$ gauge qubits. 

Error correction on $\mathcal{Q}_1$ can be done by measuring $X(\tilde{\mathbf{q}}_i)$ and $Z(\tilde{\mathbf{q}}_i)$, and using an error decoder for $\mathcal{Q}_\mathcal{D}$, the stabilizer code constructed from a check matrix of the classical code $\mathcal{D}$ through the CSS construction. 
The code distance of $\mathcal{Q}_1$ is the same as the code distance of $\mathcal{Q}_\mathcal{D}$, which is $d_d$.

\vspace*{0.2cm}
\noindent \textit{$\mathcal{Q}_2$: a synchronizable subsystem code}
\vspace*{0.2cm}

\noindent The initial code $\mathcal{Q}^0_2$ of $\mathcal{Q}_2$ is described by the stabilizer group $\mathcal{S}=\left\langle X(\tilde{\mathbf{q}}_i),Z(\tilde{\mathbf{q}}_i),Z(\tilde{\mathbf{t}}_j)\right\rangle$ and the gauge group $\mathcal{G}=\left\langle iI^{\otimes n},X(\tilde{\mathbf{q}}_i),Z(\tilde{\mathbf{q}}_i),X(\tilde{\mathbf{t}}_j),Z(\tilde{\mathbf{t}}_j),Z(\mathbf{t}_{m}^z)\right\rangle$ where $i \in \{1,\dots,n-k_d\}$, $j \in \{1,\dots,k_d-k_c\}$, and $m \in \{1,\dots,k_d-k_c\}$. By \cref{thm:op_pairing}, we can alternatively write the stabilizer and the gauge groups of $\mathcal{Q}^0_2$ as $\mathcal{S}=\left\langle X(\tilde{\mathbf{q}}_i),Z(\tilde{\mathbf{q}}_i),Z(\tilde{\mathbf{p}}_j)\right\rangle$ and $\mathcal{G}=\left\langle iI^{\otimes n},X(\tilde{\mathbf{q}}_i),Z(\tilde{\mathbf{q}}_i),X(\tilde{\mathbf{p}}_j),Z(\tilde{\mathbf{p}}_j),Z(\mathbf{q}'_m)\right\rangle$. $\mathcal{Q}^0_2$ can be obtained from $\mathcal{Q}^0_1$ through gauge fixing by measuring the operators $Z(\tilde{\mathbf{t}}_j)$ for all $j$. 

After applying the operator $X(\mathbf{q}_1)$, attaching $a_l+a_r$ ancilla qubits, and applying CNOT gates, we obtain the desired code $\mathcal{Q}_2$. The stabilizer group of $\mathcal{Q}_2$ can be described by the stabilizer generators,
\begin{align}
    &X(\overbracket[0.5pt]{\underbracket[0.5pt]{\,\tilde{\mathbf{q}}_i\,}_{\mathclap{a_l}}}^{\text{last}}|\underbracket[0.5pt]{\,\tilde{\mathbf{q}}_i\,}_{\mathclap{n}}|\overbracket[0.5pt]{\underbracket[0.5pt]{\,\tilde{\mathbf{q}}_i\,}_{\mathclap{a_r}}}^{\text{first}}), \\
    &Z(\underbracket[0.5pt]{\vphantom{\tilde{\mathbf{q}}_i}\,\mathbf{0}\,}_{\mathclap{a_l}}|\underbracket[0.5pt]{\,\tilde{\mathbf{q}}_i\,}_{\mathclap{n}}|\underbracket[0.5pt]{\vphantom{\tilde{\mathbf{q}}_i}\,\mathbf{0}\,}_{\mathclap{a_r}}),\\
    &Z(\underbracket[0.5pt]{\vphantom{\tilde{\mathbf{p}}_j}\,\mathbf{0}\,}_{\mathclap{a_l}}|\underbracket[0.5pt]{\,\tilde{\mathbf{p}}_j\,}_{\mathclap{n}}|\underbracket[0.5pt]{\vphantom{\tilde{\mathbf{p}}_j}\,\mathbf{0}\,}_{\mathclap{a_r}})(-1)^{\tilde{\mathbf{p}}_j \cdot \mathbf{q}_1},  \\
    &Z\left( \begin{array}{c|ccc|c}
    \mathbb{1}_{a_l} & \mathbb{0}_{a_l,a_r} & \mathbb{0}_{a_l,(n-a_l-a_r)} & \mathbb{1}_{a_l} & \mathbb{0}_{a_l,a_r} \\
    \mathbb{0}_{a_r,a_l} & \mathbb{1}_{a_r} & \mathbb{0}_{a_r,(n-a_l-a_r)} & \mathbb{0}_{a_r,a_l} & \mathbb{1}_{a_r}
    \end{array}
    \right), 
\end{align}
and the gauge group of $\mathcal{Q}_2$ can be described by the stabilizer generators, $iI^{\otimes n+a_l+a_r}$, and
\begin{align}
    &X(\overbracket[0.5pt]{\underbracket[0.5pt]{\,\tilde{\mathbf{p}}_j\,}_{\mathclap{a_l}}}^{\text{last}}|\underbracket[0.5pt]{\,\tilde{\mathbf{p}}_j\,}_{\mathclap{n}}|\overbracket[0.5pt]{\underbracket[0.5pt]{\,\tilde{\mathbf{p}}_j\,}_{\mathclap{a_r}}}^{\text{first}}), \\
    &Z(\underbracket[0.5pt]{\vphantom{\mathbf{q}'_m}\,\mathbf{0}\,}_{\mathclap{a_l}}|\underbracket[0.5pt]{\,\mathbf{q}'_m\,}_{\mathclap{n}}|\underbracket[0.5pt]{\vphantom{\mathbf{q}'_m}\,\mathbf{0}\,}_{\mathclap{a_r}})(-1)^{\mathbf{q}'_m \cdot \mathbf{q}_1}. 
\end{align}

The logical qubits of $\mathcal{Q}^0_2$ are defined by the anticommuting pairs $\{X(\mathbf{s}_l^x),Z(\mathbf{s}_l^z)\}$ where $l \in \{1,\dots,2k_c-n\}$, and the gauge qubits are defined by the pairs $\{X(\tilde{\mathbf{t}}_j),Z(\mathbf{t}_j^z)\}$ where $j \in \{1,\dots,k_d-k_c\}$. Transforming these operators by the operations in the encoding procedure, we can obtain the anticommuting pairs that define logical and gauge qubits of the code $\mathcal{Q}_2$. Thus, $\mathcal{Q}_2$ has $2k_c-n$ logical qubits and $k_d-k_c$ gauge qubits.

Next, we describe a procedure to correct Pauli and synchronization errors on $\mathcal{Q}_2$. We first state a theorem which is useful for analyzing synchronization recovery on subsystem and hybrid subsystem codes.

\begin{theorem} \label{thm:sync_proof_subsystem}
Let $\mathcal{D}^\perp=\langle \tilde{\mathbf{q}}_i \rangle$, $\mathcal{C}^\perp=\langle \tilde{\mathbf{p}}_j \rangle$, $\mathcal{C}=\langle \tilde{\mathbf{p}}_j,\mathbf{s}_l^x \rangle$, $\mathcal{D}=\langle \tilde{\mathbf{p}}_j,\mathbf{s}_l^x,\mathbf{q}'_m \rangle$, where $i \in \{1,\dots, n-k_d\}$, $j \in \{1,\dots, n-k_c\}$, $l \in \{1,\dots, 2k_c-n\}$, $m \in \{1,\dots, k_d-k_c\}$. Also, let $\mathbf{w} \in \mathbb{Z}_2^n$ be some binary vector. For all integers $\alpha$ such that $-a_l \leq \alpha \leq a_r$,
\begin{align}
    &X(\overbracket[0.5pt]{\underbracket[0.5pt]{\,\tilde{\mathbf{q}}_i\,}_{\mathclap{a_l+\alpha}}}^{\text{last}}|\underbracket[0.5pt]{\,\tilde{\mathbf{q}}_i\,}_{\mathclap{n}}|\overbracket[0.5pt]{\underbracket[0.5pt]{\,\tilde{\mathbf{q}}_i\,}_{\mathclap{a_r-\alpha}}}^{\text{first}}), \\
    &Z(\underbracket[0.5pt]{\vphantom{\tilde{\mathbf{p}}_j}\,\mathbf{0}\,}_{\mathclap{a_l+\alpha}}|\underbracket[0.5pt]{\,\tilde{\mathbf{p}}_j\,}_{\mathclap{n}}|\underbracket[0.5pt]{\vphantom{\tilde{\mathbf{p}}_j}\,\mathbf{0}\,}_{\mathclap{a_r-\alpha}})(-1)^{\tilde{\mathbf{p}}_j\cdot \mathcal{O}\left(\mathbf{w},-\alpha\right)},  \\
    &Z\left( \begin{array}{c|ccc|c}
    \mathbb{1}_{a_l} & \mathbb{0}_{a_l,a_r} & \mathbb{0}_{a_l,(n-a_l-a_r)} & \mathbb{1}_{a_l} & \mathbb{0}_{a_l,a_r} \\
    \mathbb{0}_{a_r,a_l} & \mathbb{1}_{a_r} & \mathbb{0}_{a_r,(n-a_l-a_r)} & \mathbb{0}_{a_r,a_l} & \mathbb{1}_{a_r}
    \end{array}
    \right),
\end{align}
generate the same stabilizer group, and
\begin{align}
    &X(\overbracket[0.5pt]{\underbracket[0.5pt]{\,\tilde{\mathbf{p}}_j\,}_{\mathclap{a_l+\alpha}}}^{\text{last}}|\underbracket[0.5pt]{\,\tilde{\mathbf{p}}_j\,}_{\mathclap{n}}|\overbracket[0.5pt]{\underbracket[0.5pt]{\,\tilde{\mathbf{p}}_j\,}_{\mathclap{a_r-\alpha}}}^{\text{first}}), \\
    &Z(\underbracket[0.5pt]{\vphantom{\tilde{\mathbf{p}}_j}\,\mathbf{0}\,}_{\mathclap{a_l+\alpha}}|\underbracket[0.5pt]{\,\tilde{\mathbf{p}}_j\,}_{\mathclap{n}}|\underbracket[0.5pt]{\vphantom{\tilde{\mathbf{p}}_j}\,\mathbf{0}\,}_{\mathclap{a_r-\alpha}})(-1)^{\tilde{\mathbf{p}}_j \cdot \mathcal{O}\left(\mathbf{w},-\alpha\right)},  \\
    &Z(\underbracket[0.5pt]{\vphantom{\mathbf{q}'_m}\,\mathbf{0}\,}_{\mathclap{a_l+\alpha}}|\underbracket[0.5pt]{\,\mathbf{q}'_m\,}_{\mathclap{n}}|\underbracket[0.5pt]{\vphantom{\mathbf{q}'_m}\,\mathbf{0}\,}_{\mathclap{a_r-\alpha}})(-1)^{\mathbf{q}'_m \cdot \mathcal{O}\left(\mathbf{w},-\alpha\right)},  \\
    &Z\left( \begin{array}{c|ccc|c}
    \mathbb{1}_{a_l} & \mathbb{0}_{a_l,a_r} & \mathbb{0}_{a_l,(n-a_l-a_r)} & \mathbb{1}_{a_l} & \mathbb{0}_{a_l,a_r} \\
    \mathbb{0}_{a_r,a_l} & \mathbb{1}_{a_r} & \mathbb{0}_{a_r,(n-a_l-a_r)} & \mathbb{0}_{a_r,a_l} & \mathbb{1}_{a_r}
    \end{array}
    \right),
\end{align}
and $iI^{\otimes n+a_l+a_r}$ generate the same gauge group.
\end{theorem}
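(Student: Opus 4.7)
The plan is to prove both equivalences by adapting the argument used to establish \cref{cor:equiv_stab_group_QSC}, which is itself the specialization of \cref{thm:sync_proof_subspace} at $\mathbf{v}=\mathbf{0}$, $\mathbf{w}=\mathbf{q}_1$. The stabilizer group of $\mathcal{Q}_2$ has exactly the QSC form except with a general $\mathbf{w}$ in the $Z$-phase $(-1)^{\tilde{\mathbf{p}}_j \cdot \mathcal{O}(\mathbf{w},-\alpha)}$; invoking \cref{thm:sync_proof_subspace} with this general $\mathbf{w}$ yields the first claim of the theorem directly.

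For the gauge group, the additional generators are of two kinds: $X(\overbracket[0.5pt]{\underbracket[0.5pt]{\,\tilde{\mathbf{p}}_j\,}_{\mathclap{a_l+\alpha}}}^{\text{last}}|\underbracket[0.5pt]{\,\tilde{\mathbf{p}}_j\,}_{\mathclap{n}}|\overbracket[0.5pt]{\underbracket[0.5pt]{\,\tilde{\mathbf{p}}_j\,}_{\mathclap{a_r-\alpha}}}^{\text{first}})$ and $Z(\underbracket[0.5pt]{\vphantom{\mathbf{q}'_m}\,\mathbf{0}\,}_{\mathclap{a_l+\alpha}}|\underbracket[0.5pt]{\,\mathbf{q}'_m\,}_{\mathclap{n}}|\underbracket[0.5pt]{\vphantom{\mathbf{q}'_m}\,\mathbf{0}\,}_{\mathclap{a_r-\alpha}})(-1)^{\mathbf{q}'_m\cdot \mathcal{O}(\mathbf{w},-\alpha)}$. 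These are structurally identical to the stabilizer generators once we substitute $\mathcal{D}^\perp \mapsto \mathcal{C}^\perp$ (so that the $X(\tilde{\mathbf{p}}_j)$ generators take the role of the $X(\tilde{\mathbf{q}}_i)$ generators) and replace the tracked vector $\mathbf{q}_1$ by $\mathbf{q}'_m$ in the $Z$-type phase. The containment $\mathbf{q}'_m \in \mathcal{D}$ needed for the substitution is guaranteed by $\mathbf{q}'_m = \mathbf{q}_m + \sum_l (\mathbf{q}_m \cdot \mathbf{s}_l^x) \mathbf{s}_l^z$ together with $\mathbf{q}_m, \mathbf{s}_l^z \in \mathcal{D}$ from \cref{thm:op_pairing}. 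A second invocation of \cref{thm:sync_proof_subspace} with the adapted parameters then yields the gauge-group equivalence.

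Conceptually, the mechanism inside \cref{thm:sync_proof_subspace} is that sliding the window from $\alpha$ to $\alpha+1$ moves part of a middle-supported $Z$-type generator onto an adjacent ancilla, and the CNOT-coupling stabilizers of the form $Z_k Z_{k+n}$ and $Z_{a_l+k} Z_{a_l+k+n}$ allow us to trade that support back onto the main block, producing a $Z$-operator whose pattern is a cyclic shift of the original and hence still in the cyclic span of the relevant generating set. The $X$-generators at $\alpha+1$ reduce to cyclic shifts of those at $\alpha$ using the cyclic closure of $\mathcal{D}^\perp$ (respectively $\mathcal{C}^\perp$ for the gauge generators). The phases are consistent thanks to the identity $\mathbf{a}\cdot\mathcal{O}(\mathbf{b},-\alpha)=\mathcal{O}(\mathbf{a},\alpha)\cdot\mathbf{b}$.

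The main obstacle is the careful bookkeeping of these trades and the phase-matching computation, but this work is already performed in the proof of \cref{thm:sync_proof_subspace} in \cref{subsec:sync_subspace_proof}. The present proof therefore reduces to verifying that the hypotheses of \cref{thm:sync_proof_subspace} are genuinely satisfied in both invocations — for the stabilizer group with $\mathbf{w}$ general and the outer cyclic code $\mathcal{D}^\perp$, and for the gauge group with the substitutions $\mathbf{q}_1 \mapsto \mathbf{q}'_m$ and $\mathcal{D}^\perp \mapsto \mathcal{C}^\perp$.
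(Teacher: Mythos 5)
The stabilizer-group half of your argument is sound and matches the paper: the proof of \cref{thm:sync_proof_subspace} treats the $X$-type and $Z$-type generators independently and only uses cyclic closure of the classical code supporting each family, so replacing the $X(\tilde{\mathbf{p}}_j)$ family by $X(\tilde{\mathbf{q}}_i)$ (cyclicity of $\mathcal{D}^\perp$) and allowing a general $\mathbf{w}$ in the $Z$-phases is a legitimate adaptation. The same goes for the $X$-type gauge generators via cyclicity of $\mathcal{C}^\perp$.

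The gap is in the $Z$-type gauge generators. You describe the modification as "replace the tracked vector $\mathbf{q}_1$ by $\mathbf{q}'_m$ in the $Z$-type phase," but that misreads where $\mathbf{q}'_m$ sits: the tracked vector in the phase is still $\mathbf{w}$, and $\mathbf{q}'_m$ instead appears as the \emph{support} of new $Z$-generators $Z(\mathbf{0}|\mathbf{q}'_m|\mathbf{0})(-1)^{\mathbf{q}'_m\cdot\mathcal{O}(\mathbf{w},-\alpha)}$, i.e., it plays the role that $\tilde{\mathbf{p}}_j$ plays in \cref{thm:sync_proof_subspace}. For that theorem's mechanism to apply, the classical code spanned by the $Z$-supports must be closed under cyclic shifts, and $\langle\tilde{\mathbf{p}}_j,\mathbf{q}'_m\rangle$ (with $j\in\{1,\dots,n-k_c\}$, $m\in\{1,\dots,k_d-k_c\}$) is \emph{not} cyclic in general: it is a proper intermediate subspace $\mathcal{C}^\perp\subsetneq\langle\tilde{\mathbf{p}}_j,\mathbf{q}'_m\rangle\subsetneq\mathcal{D}$ missing the $\mathbf{s}_l$ directions, and a shift $\mathcal{O}(\mathbf{q}'_m,1)$ generically acquires a component along $\mathbf{s}_l^x\in\mathcal{C}\setminus\mathcal{C}^\perp$. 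The containment $\mathbf{q}'_m\in\mathcal{D}$ that you verify is not the condition you need. This is exactly where the paper's proof does additional work: it introduces the shift-invariant operator groups $\mathcal{S}^{z,\alpha}_{\mathcal{D}}$, $\mathcal{S}^{z,\alpha}_{\mathcal{C}}$, $\mathcal{S}^{z,\alpha}_{\mathcal{C}^\perp}$ built from the genuinely cyclic codes $\mathcal{D}$, $\mathcal{C}$, $\mathcal{C}^\perp$, and identifies the $Z$-type gauge group with $\left\langle\mathcal{S}^{z,\alpha}_{\mathcal{D}}/\mathcal{S}^{z,\alpha}_{\mathcal{C}},\mathcal{S}^{z,\alpha}_{\mathcal{C}^\perp}\right\rangle$, whose $\alpha$-independence follows from that of each constituent. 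Your proposal needs this (or an equivalent) extra step; a second direct invocation of \cref{thm:sync_proof_subspace} does not cover it.
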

A proof of \cref{thm:sync_proof_subsystem} is provided in \cref{subsec:sync_subsystem_proof}.

Using \cref{thm:sync_proof_subsystem} with $\mathbf{w}=\mathbf{q}_1$, the fact that $\mathcal{C}^\perp=\langle\tilde{\mathbf{q}}_i,\tilde{\mathbf{p}}_j\rangle$ where $i \in \{1,\dots,n-k_d\}$ and $j \in \{1,\dots,k_d-k_c\}$, and the fact that $\tilde{\mathbf{q}}_i \cdot \mathcal{O}\left(\mathbf{q}_1,-\alpha\right)=0$ for all $i,\alpha$, the stabilizer group of $\mathcal{Q}_2$ can also be described by the stabilizer generators,
\begin{align}
    &X(\overbracket[0.5pt]{\underbracket[0.5pt]{\,\tilde{\mathbf{q}}_i\,}_{\mathclap{a_l+\alpha}}}^{\text{last}}|\underbracket[0.5pt]{\,\tilde{\mathbf{q}}_i\,}_{\mathclap{n}}|\overbracket[0.5pt]{\underbracket[0.5pt]{\,\tilde{\mathbf{q}}_i\,}_{\mathclap{a_r-\alpha}}}^{\text{first}}), \\
    &Z(\underbracket[0.5pt]{\vphantom{\tilde{\mathbf{q}}_i}\,\mathbf{0}\,}_{\mathclap{a_l+\alpha}}|\underbracket[0.5pt]{\,\tilde{\mathbf{q}}_i\,}_{\mathclap{n}}|\underbracket[0.5pt]{\vphantom{\tilde{\mathbf{q}}_i}\,\mathbf{0}\,}_{\mathclap{a_r-\alpha}}),\\   
    &Z(\underbracket[0.5pt]{\vphantom{\tilde{\mathbf{p}}_j}\,\mathbf{0}\,}_{\mathclap{a_l+\alpha}}|\underbracket[0.5pt]{\,\tilde{\mathbf{p}}_j\,}_{\mathclap{n}}|\underbracket[0.5pt]{\vphantom{\tilde{\mathbf{p}}_j}\,\mathbf{0}\,}_{\mathclap{a_r-\alpha}})(-1)^{\tilde{\mathbf{p}}_j \cdot \mathcal{O}(\mathbf{q}_1,-\alpha)},  \\
    &Z\left( \begin{array}{c|ccc|c}
    \mathbb{1}_{a_l} & \mathbb{0}_{a_l,a_r} & \mathbb{0}_{a_l,(n-a_l-a_r)} & \mathbb{1}_{a_l} & \mathbb{0}_{a_l,a_r} \\
    \mathbb{0}_{a_r,a_l} & \mathbb{1}_{a_r} & \mathbb{0}_{a_r,(n-a_l-a_r)} & \mathbb{0}_{a_r,a_l} & \mathbb{1}_{a_r}
    \end{array}
    \right), 
\end{align}
and the gauge group of $\mathcal{Q}_2$ can also be described by the stabilizer generators, $iI^{\otimes n+a_l+a_r}$, and
\begin{align}
    &X(\overbracket[0.5pt]{\underbracket[0.5pt]{\,\tilde{\mathbf{p}}_j\,}_{\mathclap{a_l+\alpha}}}^{\text{last}}|\underbracket[0.5pt]{\,\tilde{\mathbf{p}}_j\,}_{\mathclap{n}}|\overbracket[0.5pt]{\underbracket[0.5pt]{\,\tilde{\mathbf{p}}_j\,}_{\mathclap{a_r-\alpha}}}^{\text{first}}), \\
    &Z(\underbracket[0.5pt]{\vphantom{\mathbf{q}'_m}\,\mathbf{0}\,}_{\mathclap{a_l+\alpha}}|\underbracket[0.5pt]{\,\mathbf{q}'_m\,}_{\mathclap{n}}|\underbracket[0.5pt]{\vphantom{\mathbf{q}'_m}\,\mathbf{0}\,}_{\mathclap{a_r-\alpha}})(-1)^{\mathbf{q}'_m \cdot \mathcal{O}(\mathbf{q}_1,-\alpha)}, 
\end{align}
for any integer $\alpha$ such that $-a_l \leq \alpha \leq a_r$. 

$X$-type error correction has to be done before $\alpha$ can be determined. Observe that the stabilizer generators $Z(\underbracket[0.5pt]{\vphantom{\tilde{\mathbf{q}}_i}\,\mathbf{0}\,}_{\mathclap{a_l+\alpha}}|\underbracket[0.5pt]{\,\tilde{\mathbf{q}}_i\,}_{\mathclap{n}}|\underbracket[0.5pt]{\vphantom{\tilde{\mathbf{q}}_i}\,\mathbf{0}\,}_{\mathclap{a_r-\alpha}})$ of $\mathcal{Q}_2$ define a stabilizer code on any block of $n$ consecutive qubits which is similar to the stabilizer code defined by the $Z$-type generators $Z(\tilde{\mathbf{q}}_i)$ of the code $\mathcal{Q}_\mathcal{D}$. Hence, $X$-type error correction on any block of $n$ consecutive qubits can be done by measuring $Z(\tilde{\mathbf{q}}_i)$ and using an error decoder for the code $\mathcal{Q}_\mathcal{D}$.

By \cref{lem:sync_value}, we know that if $a_l+a_r < k_d-k_c$, for any integer $\alpha$ such that $-a_l \leq \alpha \leq a_r$, $H_\mathcal{C}\mathcal{O}(\mathbf{q}_1,-\alpha)^T$ is distinct. Therefore, we can determine $\alpha$ by measuring $Z(\tilde{\mathbf{p}}_j)$ on the received $n$ qubits. Note that $j \in \{1,\dots,k_d-k_c\}$, so the measurements require only $k_d-k_c$ steps. After $\alpha$ is determined and the synchronization recovery is done, one can further correct $X$-type and $Z$-type errors on the entire block of $n+a_l+a_r$ qubits by measuring $Z(\underbracket[0.5pt]{\vphantom{\tilde{\mathbf{q}}_i}\,\mathbf{0}\,}_{\mathclap{a_l+a_r}}|\underbracket[0.5pt]{\,\tilde{\mathbf{q}}_i\,}_{\mathclap{n}})$, $Z(\underbracket[0.5pt]{\,\tilde{\mathbf{q}}_i\,}_{\mathclap{n}}|\underbracket[0.5pt]{\vphantom{\tilde{\mathbf{q}}_i}\,\mathbf{0}\,}_{\mathclap{a_l+a_r}})$, and $X(\overbracket[0.5pt]{\underbracket[0.5pt]{\,\tilde{\mathbf{q}}_i\,}_{\mathclap{a_l}}}^{\text{last}}|\underbracket[0.5pt]{\,\tilde{\mathbf{q}}_i\,}_{\mathclap{n}}|\overbracket[0.5pt]{\underbracket[0.5pt]{\,\tilde{\mathbf{q}}_i\,}_{\mathclap{a_r}}}^{\text{first}})$.

The full procedure to correct Pauli and synchronization errors on $\mathcal{Q}_2$ is summarized below.
\begin{enumerate}
    \item Measure $Z(\tilde{\mathbf{q}}_i)$ where $i \in \{1,\dots,n-k_d\}$ on the received $n$ qubits and perform $X$-type error correction using an error decoder for the code $\mathcal{Q}_\mathcal{D}$.
    \item Measure $Z(\tilde{\mathbf{p}}_j)$ where $j \in \{1,\dots,k_d-k_c\}$ on the received $n$ qubits to obtain the value $H_\mathcal{C}\mathcal{O}(\mathbf{q}_1,-\alpha)^T$.
    \item Determine $\alpha$ from $H_\mathcal{C}\mathcal{O}(\mathbf{q}_1,-\alpha)^T$ and perform synchronization recovery.
    \item Measure $Z(\underbracket[0.5pt]{\vphantom{\tilde{\mathbf{q}}_i}\,\mathbf{0}\,}_{\mathclap{a_l+a_r}}|\underbracket[0.5pt]{\,\tilde{\mathbf{q}}_i\,}_{\mathclap{n}})$ and $Z(\underbracket[0.5pt]{\,\tilde{\mathbf{q}}_i\,}_{\mathclap{n}}|\underbracket[0.5pt]{\vphantom{\tilde{\mathbf{q}}_i}\,\mathbf{0}\,}_{\mathclap{a_l+a_r}})$ where $i \in \{1,\dots,n-k_d\}$ and correct $X$-type errors on the entire block of $n+a_l+a_r$ qubits using an error decoder for the code $\mathcal{Q}_\mathcal{D}$.
    \item Measure $X(\overbracket[0.5pt]{\underbracket[0.5pt]{\,\tilde{\mathbf{q}}_i\,}_{\mathclap{a_l}}}^{\text{last}}|\underbracket[0.5pt]{\,\tilde{\mathbf{q}}_i\,}_{\mathclap{n}}|\overbracket[0.5pt]{\underbracket[0.5pt]{\,\tilde{\mathbf{q}}_i\,}_{\mathclap{a_r}}}^{\text{first}})$ where $i \in \{1,\dots,n-k_d\}$ and correct $Z$-type errors on the entire block of $n+a_l+a_r$ qubits using an error decoder for the code $\mathcal{Q}_\mathcal{D}$.
\end{enumerate}

$\mathcal{Q}_2$ has maximum synchronization distance $k_d-k_c$. Error correction on $\mathcal{Q}_2$ is guaranteed only if there are no more than $\lfloor(d_d-1)/2\rfloor$ $X$-type errors on any $n$ consecutive qubits and there are no more than $\lfloor(d_d-1)/2\rfloor$ $Z$-type errors on the entire block of $n+a_l+a_r$ qubits. Therefore, $\mathcal{Q}_2$ has code distance $d_d$.

\subsection{Hybrid and synchronizable hybrid codes} \label{subsec:SHSC_hybrid}

\vspace*{0.2cm}
\noindent \textit{$\mathcal{Q}_3$: a synchronizable hybrid code}
\vspace*{0.2cm}

\noindent The initial code $\mathcal{Q}^0_3$ of $\mathcal{Q}_3$ is described by the stabilizer group $\mathcal{S}=\left\langle X(\tilde{\mathbf{q}}_i),Z(\tilde{\mathbf{q}}_i),X(\tilde{\mathbf{t}}_j),Z(\tilde{\mathbf{t}}_j)\right\rangle$ where $i \in \{1,\dots,n-k_d\}$ and $j \in \{1,\dots,k_d-k_c\}$. Alternatively, we can write the stabilizer group as $\mathcal{S}=\left\langle X(\tilde{\mathbf{q}}_i),Z(\tilde{\mathbf{q}}_i),X(\tilde{\mathbf{p}}_j),Z(\tilde{\mathbf{p}}_j)\right\rangle$ by using \cref{thm:op_pairing}. This code is a subspace code (a subsystem code with zero gauge qubits), so the gauge group and the stabilizer group are the same up to $\pm 1,\pm i$ phases. $\mathcal{Q}^0_3$ can be obtained from $\mathcal{Q}^0_2$ through gauge fixing by measuring the operators $X(\tilde{\mathbf{t}}_j)$ for all $j$. 

$\mathcal{Q}_3$ is a hybrid code, and a translation operator (or a classical logical operator) of the form $Z(\sum_{m=1}^{k_d-k_c} b_m\mathbf{q}_m)$ for some $(b_1,\dots,b_{k_d-k_c})\in \mathbb{Z}_2^{k_d-k_c}$ will be applied in the second step of the general encoding procedure. This operator encodes $k_d-k_c$ classical bits of information into phases of $X$-type stabilizer generators. $\mathcal{Q}_3$ is also a synchronizable code, so $X(\mathbf{q}_1)$ will be applied in the third step of the encoding procedure.

The desired code $\mathcal{Q}_3$ can be obtained after completing the encoding procedure. The inner code of $\mathcal{Q}_3$ labeled by $(b_1,\dots,b_{k_d-k_c})$ can be described by the stabilizer generators,
\begin{align}
    &X(\overbracket[0.5pt]{\underbracket[0.5pt]{\,\tilde{\mathbf{q}}_i\,}_{\mathclap{a_l}}}^{\text{last}}|\underbracket[0.5pt]{\,\tilde{\mathbf{q}}_i\,}_{\mathclap{n}}|\overbracket[0.5pt]{\underbracket[0.5pt]{\,\tilde{\mathbf{q}}_i\,}_{\mathclap{a_r}}}^{\text{first}}), \\
    &X(\overbracket[0.5pt]{\underbracket[0.5pt]{\,\tilde{\mathbf{p}}_j\,}_{\mathclap{a_l}}}^{\text{last}}|\underbracket[0.5pt]{\,\tilde{\mathbf{p}}_j\,}_{\mathclap{n}}|\overbracket[0.5pt]{\underbracket[0.5pt]{\,\tilde{\mathbf{p}}_j\,}_{\mathclap{a_r}}}^{\text{first}})(-1)^{\tilde{\mathbf{p}}_j\cdot \sum_{m=1}^{k_d-k_c} b_m\mathbf{q}_m}, \\
    &Z(\underbracket[0.5pt]{\vphantom{\tilde{\mathbf{q}}_i}\,\mathbf{0}\,}_{\mathclap{a_l}}|\underbracket[0.5pt]{\,\tilde{\mathbf{q}}_i\,}_{\mathclap{n}}|\underbracket[0.5pt]{\vphantom{\tilde{\mathbf{q}}_i}\,\mathbf{0}\,}_{\mathclap{a_r}}),  \\
    &Z(\underbracket[0.5pt]{\vphantom{\tilde{\mathbf{p}}_j}\,\mathbf{0}\,}_{\mathclap{a_l}}|\underbracket[0.5pt]{\,\tilde{\mathbf{p}}_j\,}_{\mathclap{n}}|\underbracket[0.5pt]{\vphantom{\tilde{\mathbf{p}}_j}\,\mathbf{0}\,}_{\mathclap{a_r}})(-1)^{\tilde{\mathbf{p}}_j \cdot \mathbf{q}_1},  \\
    &Z\left( \begin{array}{c|ccc|c}
    \mathbb{1}_{a_l} & \mathbb{0}_{a_l,a_r} & \mathbb{0}_{a_l,(n-a_l-a_r)} & \mathbb{1}_{a_l} & \mathbb{0}_{a_l,a_r} \\
    \mathbb{0}_{a_r,a_l} & \mathbb{1}_{a_r} & \mathbb{0}_{a_r,(n-a_l-a_r)} & \mathbb{0}_{a_r,a_l} & \mathbb{1}_{a_r}
    \end{array}
    \right).
\end{align}
In fact, the QSC introduced by Fujiwara in \cite{Fujiwara2013} (which is reviewed in \cref{sec:QSC_stabilizer}) is the code $\mathcal{Q}_3$ in which classical bits are not initially encoded.

The logical qubits of $\mathcal{Q}^0_3$ are defined by the anticommuting pairs $\{X(\mathbf{s}_l^x),Z(\mathbf{s}_l^z)\}$ where $l \in \{1,\dots,2k_c-n\}$, and we can obtain logical $X$ and logical $Z$ operators of the code $\mathcal{Q}_3$ by transforming such pairs by the operations in the encoding procedure. Hence, $\mathcal{Q}_3$ has $2k_c-n$ logical qubits. 

Next, we describe a procedure to correct Pauli and synchronization errors on $\mathcal{Q}_3$. Here we state a theorem which is useful for analyzing synchronization recovery on subspace hybrid codes.

\begin{theorem} \label{thm:sync_proof_subspace}
Let $\mathcal{C}^\perp=\langle \tilde{\mathbf{p}}_j \rangle$, $j \in \{1,\dots, n-k_c\}$ and let $\mathbf{v},\mathbf{w} \in \mathbb{Z}_2^n$ be some binary vectors. For all integers $\alpha$ such that $-a_l \leq \alpha \leq a_r$,
\begin{align}
    &X(\overbracket[0.5pt]{\underbracket[0.5pt]{\,\tilde{\mathbf{p}}_j\,}_{\mathclap{a_l+\alpha}}}^{\text{last}}|\underbracket[0.5pt]{\,\tilde{\mathbf{p}}_j\,}_{\mathclap{n}}|\overbracket[0.5pt]{\underbracket[0.5pt]{\,\tilde{\mathbf{p}}_j\,}_{\mathclap{a_r-\alpha}}}^{\text{first}})(-1)^{\tilde{\mathbf{p}}_j\cdot \mathcal{O}\left(\mathbf{v},-\alpha\right)}, \\
    &Z(\underbracket[0.5pt]{\vphantom{\tilde{\mathbf{p}}_j}\,\mathbf{0}\,}_{\mathclap{a_l+\alpha}}|\underbracket[0.5pt]{\,\tilde{\mathbf{p}}_j\,}_{\mathclap{n}}|\underbracket[0.5pt]{\vphantom{\tilde{\mathbf{p}}_j}\,\mathbf{0}\,}_{\mathclap{a_r-\alpha}})(-1)^{\tilde{\mathbf{p}}_j\cdot \mathcal{O}\left(\mathbf{w},-\alpha\right)},   \\
    &Z\left( \begin{array}{c|ccc|c}
    \mathbb{1}_{a_l} & \mathbb{0}_{a_l,a_r} & \mathbb{0}_{a_l,(n-a_l-a_r)} & \mathbb{1}_{a_l} & \mathbb{0}_{a_l,a_r} \\
    \mathbb{0}_{a_r,a_l} & \mathbb{1}_{a_r} & \mathbb{0}_{a_r,(n-a_l-a_r)} & \mathbb{0}_{a_r,a_l} & \mathbb{1}_{a_r}
    \end{array}
    \right),
\end{align}
generate the same stabilizer group.
\end{theorem}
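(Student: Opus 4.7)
The plan is to show that for every admissible $\alpha$, the three listed families fold down to the same operators on the $n$-qubit main block modulo the third family, which is itself independent of $\alpha$. First, I would observe that the third-family stabilizers implement the physical identifications $Z_i Z_{n+i}$ for $i = 1, \ldots, a_l$ and $Z_{a_l+j} Z_{n+a_l+j}$ for $j = 1, \ldots, a_r$, corresponding to the CNOT couplings used in the encoder. Using these relations, any $X$- or $Z$-type operator that commutes with the third family is equivalent, modulo this family, to an operator supported only on the main-block qubits $a_l+1, \ldots, a_l+n$, with any ancilla support forced to mirror a specific main-block position.

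Next, I would carry out the fold explicitly for each of the first two families. For the $Z$-type generator $Z(\mathbf{0}_{a_l+\alpha}|\tilde{\mathbf{p}}_j|\mathbf{0}_{a_r-\alpha})$, a direct index check shows that the bits of $\tilde{\mathbf{p}}_j$ protruding into the right ancilla (for $\alpha > 0$) or the left ancilla (for $\alpha < 0$) fold back to the opposite end of the main block, yielding the main-block pattern $\mathcal{O}(\tilde{\mathbf{p}}_j, \alpha)$. The analogous computation for the $X$-type generator, using its mirror pattern across the three column blocks of the extended register, produces the same main-block pattern $\mathcal{O}(\tilde{\mathbf{p}}_j, \alpha)$. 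Since $\mathcal{C}^\perp$ is a cyclic code, multiplication by $x^\alpha$ is an invertible endomorphism of the ideal $I_{\mathcal{C}^\perp}$, so the family $\{\mathcal{O}(\tilde{\mathbf{p}}_j, \alpha)\}_{j=1}^{n-k_c}$ spans the same $(n-k_c)$-dimensional subspace as $\{\tilde{\mathbf{p}}_j\}_{j=1}^{n-k_c}$, namely $\mathcal{C}^\perp$ itself. Consequently, the two generator sets yield the same subgroup of $X$- and $Z$-type operators on the main block.

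Finally, I would verify that the phases transport correctly. Using the shift-duality identity $\tilde{\mathbf{p}}_j \cdot \mathcal{O}(\mathbf{u}, -\alpha) = \mathcal{O}(\tilde{\mathbf{p}}_j, \alpha) \cdot \mathbf{u}$, which follows by re-indexing the sum after a cyclic shift, the phase $(-1)^{\tilde{\mathbf{p}}_j \cdot \mathcal{O}(\mathbf{u}, -\alpha)}$ attached to an $\alpha$-generator rewrites as $(-1)^{\mathcal{O}(\tilde{\mathbf{p}}_j, \alpha) \cdot \mathbf{u}}$, and thus depends only on the resulting main-block pattern and on $\mathbf{u} \in \{\mathbf{v}, \mathbf{w}\}$. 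Hence the assignment from main-block pattern to phase is the same for every $\alpha$ and in particular agrees with the one induced by the $\alpha = 0$ generators. The main obstacle I anticipate is the careful bookkeeping across the three column blocks of the extended register: verifying the mirror property of the $X$-type generator for both positive and negative $\alpha$, and checking that the cyclic-shift correspondence is compatible at the boundaries of the left and right ancilla blocks. Once this bookkeeping is in place, the three steps above combine to give the claimed equality of stabilizer groups.
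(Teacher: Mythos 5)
Your proposal is correct and follows essentially the same route as the paper's proof: both rest on the wrap-around/mirror structure of the extended register, the shift identity $\tilde{\mathbf{p}}_j\cdot\mathcal{O}(\mathbf{u},-\alpha)=\mathcal{O}(\tilde{\mathbf{p}}_j,\alpha)\cdot\mathbf{u}$, and the cyclicity of $\mathcal{C}^\perp$, which makes $\{\mathcal{O}(\tilde{\mathbf{p}}_j,\alpha)\}_j$ a generating set of $\mathcal{C}^\perp$ for every admissible $\alpha$. One small point of phrasing: for the $X$-type generators the reduction is not ``modulo the third family'' (a purely $Z$-type group cannot alter $X$-support); rather, commutation with the third family forces the ancilla bits to mirror the corresponding main-block bits, so the $\alpha$-generator is \emph{identically} equal to the wrap-around extension of the main-block pattern $\mathcal{O}(\tilde{\mathbf{p}}_j,\alpha)$ --- which is exactly the operator identity the paper uses.
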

A proof of \cref{thm:sync_proof_subspace} is provided in \cref{subsec:sync_subspace_proof}.

Using \cref{thm:sync_proof_subspace} with $\mathbf{v}=\sum_{m=1}^{k_d-k_c} b_m\mathbf{q}_m$, $\mathbf{w}=\mathbf{q}_1$, the fact that $\mathcal{C}^\perp=\langle\tilde{\mathbf{q}}_i,\tilde{\mathbf{p}}_j\rangle$ where $i \in \{1,\dots,n-k_d\}$ and $j \in \{1,\dots,k_d-k_c\}$, and the fact that $\tilde{\mathbf{q}}_i \cdot \mathcal{O}\left(\mathbf{q}_m,-\alpha\right)=0$ for all $i,m,\alpha$, the stabilizer group of the inner code labeled by $(b_1,\dots,b_{k_d-k_c})$ can also be described by the stabilizer generators,
\begin{align}
    &X(\overbracket[0.5pt]{\underbracket[0.5pt]{\,\tilde{\mathbf{q}}_i\,}_{\mathclap{a_l+\alpha}}}^{\text{last}}|\underbracket[0.5pt]{\,\tilde{\mathbf{q}}_i\,}_{\mathclap{n}}|\overbracket[0.5pt]{\underbracket[0.5pt]{\,\tilde{\mathbf{q}}_i\,}_{\mathclap{a_r-\alpha}}}^{\text{first}}), \\
    &X(\overbracket[0.5pt]{\underbracket[0.5pt]{\,\tilde{\mathbf{p}}_j\,}_{\mathclap{a_l+\alpha}}}^{\text{last}}|\underbracket[0.5pt]{\,\tilde{\mathbf{p}}_j\,}_{\mathclap{n}}|\overbracket[0.5pt]{\underbracket[0.5pt]{\,\tilde{\mathbf{p}}_j\,}_{\mathclap{a_r-\alpha}}}^{\text{first}})(-1)^{\tilde{\mathbf{p}}_j\cdot \mathcal{O}\left(\sum_{m=1}^{k_d-k_c} b_m\mathbf{q}_m,-\alpha\right)}, \\
    &Z(\underbracket[0.5pt]{\vphantom{\tilde{\mathbf{q}}_i}\,\mathbf{0}\,}_{\mathclap{a_l+\alpha}}|\underbracket[0.5pt]{\,\tilde{\mathbf{q}}_i\,}_{\mathclap{n}}|\underbracket[0.5pt]{\vphantom{\tilde{\mathbf{q}}_i}\,\mathbf{0}\,}_{\mathclap{a_r-\alpha}}),  \\
    &Z(\underbracket[0.5pt]{\vphantom{\tilde{\mathbf{p}}_j}\,\mathbf{0}\,}_{\mathclap{a_l+\alpha}}|\underbracket[0.5pt]{\,\tilde{\mathbf{p}}_j\,}_{\mathclap{n}}|\underbracket[0.5pt]{\vphantom{\tilde{\mathbf{p}}_j}\,\mathbf{0}\,}_{\mathclap{a_r-\alpha}})(-1)^{\tilde{\mathbf{p}}_j \cdot \mathcal{O}(\mathbf{q}_1,-\alpha)},  \\
    &Z\left( \begin{array}{c|ccc|c}
    \mathbb{1}_{a_l} & \mathbb{0}_{a_l,a_r} & \mathbb{0}_{a_l,(n-a_l-a_r)} & \mathbb{1}_{a_l} & \mathbb{0}_{a_l,a_r} \\
    \mathbb{0}_{a_r,a_l} & \mathbb{1}_{a_r} & \mathbb{0}_{a_r,(n-a_l-a_r)} & \mathbb{0}_{a_r,a_l} & \mathbb{1}_{a_r},
    \end{array}
    \right), 
\end{align}
for any integer $\alpha$ such that $-a_l \leq \alpha \leq a_r$. Here we can see that $Z(\tilde{\mathbf{p}}_j)$ on any block of $n$ consecutive qubits provides information about $\alpha$ in terms of $H_\mathcal{C}\mathcal{O}(\mathbf{q}_1,-\alpha)^T$. By \cref{lem:sync_value}, we know that if $a_l+a_r < k_d-k_c$, for any integer $\alpha$ such that $-a_l \leq \alpha \leq a_r$, $H_\mathcal{C}\mathcal{O}(\mathbf{q}_1,-\alpha)^T$ is distinct. Therefore, $\alpha$ can be determined by the value of $H_\mathcal{C}\mathcal{O}(\mathbf{q}_1,-\alpha)^T$.

To correct Pauli and synchronization errors on $\mathcal{Q}_3$, the receiver starts by measuring $Z(\tilde{\mathbf{q}}_i)$ where $i \in \{1,\dots,n-k_d\}$ on a block of $n$ received qubits and using an error decoder for the code $\mathcal{Q}_\mathcal{D}$ to correct $X$-type errors. Next, $Z(\tilde{\mathbf{p}}_j)$ where $j \in \{1,\dots,k_d-k_c\}$ is measured and $\alpha$ can be found from $H_\mathcal{C}\mathcal{O}(\mathbf{q}_1,-\alpha)^T$. We point out the measurements of $Z(\tilde{\mathbf{p}}_{j'})$ where $j' \in \{k_d-k_c+1,\dots,n-k_c\}$, as in the error correction procedure in \cref{subsec:QSC_correction}, is not necessary since $\tilde{\mathbf{p}}_{j'}$ are not linearly independent from $\tilde{\mathbf{q}}_i$ and $\tilde{\mathbf{p}}_j$ (see the proof of \cref{thm:gen_decomp} in \cref{subsec:gen_decomp_proof}). 

After the synchronization recovery is done, $X$-type and $Z$-type error corrections on the entire block of $n+a_l+a_r$ qubits can be done by measuring $Z(\underbracket[0.5pt]{\vphantom{\tilde{\mathbf{q}}_i}\,\mathbf{0}\,}_{\mathclap{a_l+a_r}}|\underbracket[0.5pt]{\,\tilde{\mathbf{q}}_i\,}_{\mathclap{n}})$, $Z(\underbracket[0.5pt]{\,\tilde{\mathbf{q}}_i\,}_{\mathclap{n}}|\underbracket[0.5pt]{\vphantom{\tilde{\mathbf{q}}_i}\,\mathbf{0}\,}_{\mathclap{a_l+a_r}})$, and $X(\overbracket[0.5pt]{\underbracket[0.5pt]{\,\tilde{\mathbf{q}}_i\,}_{\mathclap{a_l}}}^{\text{last}}|\underbracket[0.5pt]{\,\tilde{\mathbf{q}}_i\,}_{\mathclap{n}}|\overbracket[0.5pt]{\underbracket[0.5pt]{\,\tilde{\mathbf{q}}_i\,}_{\mathclap{a_r}}}^{\text{first}})$, then using an error decoder for the code $\mathcal{Q}_\mathcal{D}$. Error correction is guaranteed only if there are no more than $\lfloor(d_d-1)/2\rfloor$ $X$-type errors on any $n$ consecutive qubits and there are no more than $\lfloor(d_d-1)/2\rfloor$ $Z$-type errors on the entire block of $n+a_l+a_r$ qubits. Note that in exchange for the encoding of classical information, the number of $Z$-type errors that $\mathcal{Q}_3$ can correct is less than or equal to the one that the QSC in \cref{sec:QSC_stabilizer} can correct (since $d_d \leq d_c$).

The full procedure to correct Pauli and synchronization errors on $\mathcal{Q}_3$ is the same as that of $\mathcal{Q}_2$.

After both Pauli and synchronization errors are corrected, one can obtain the encoded classical information by measuring $X(\overbracket[0.5pt]{\underbracket[0.5pt]{\,\tilde{\mathbf{p}}_j\,}_{\mathclap{a_l}}}^{\text{last}}|\underbracket[0.5pt]{\,\tilde{\mathbf{p}}_j\,}_{\mathclap{n}}|\overbracket[0.5pt]{\underbracket[0.5pt]{\,\tilde{\mathbf{p}}_j\,}_{\mathclap{a_r}}}^{\text{first}})$, which gives $H_\mathcal{C}\left(\sum_{m=1}^{k_d-k_c} b_m\mathbf{q}_m\right)^T$ as their eigenvalues. Using the lemma below, we can show that $H_\mathcal{C}\left(\sum_{m=1}^{k_d-k_c} b_m\mathbf{q}_m\right)^T$ is distinct for each $\mathbf{b}=(b_1,...,b_{k_d-k_c})$, and thus the encoded classical information $\mathbf{b}$ can be determined by the value of $H_\mathcal{C}\left(\sum_{m=1}^{k_d-k_c} b_m\mathbf{q}_m\right)^T$.

\begin{lemma} \label{lem:message_value}
    For each $\mathbf{c}=(c_1,...,c_{k_d-k_c}) \in \mathbb{Z}^{k_d-k_c}_2$, $H_\mathcal{C}\left(\sum_{m=1}^{k_d-k_c} c_m\mathbf{q}_m\right)^T$ is distinct.
\end{lemma}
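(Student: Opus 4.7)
The plan is to recast the statement as injectivity of the linear map $\mathbf{c} \mapsto H_\mathcal{C}\bigl(\sum_m c_m \mathbf{q}_m\bigr)^T$ and to verify that its kernel is trivial by a degree argument in the polynomial ring. Since $H_\mathcal{C}\mathbf{v}^T = \mathbf{0}^T$ is equivalent to $\mathbf{v}\in\mathcal{C}$, the task reduces to showing that if $\sum_{m=1}^{k_d-k_c} c_m \mathbf{q}_m \in \mathcal{C}$, then $\mathbf{c}=\mathbf{0}$.

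First I would pass to the polynomial picture using the isomorphisms of \cref{subsec:pre_cyclic}. Set $c(x)=\sum_{m=1}^{k_d-k_c} c_m x^{m-1}$, which has degree at most $k_d-k_c-1$. Because $\mathbf{q}_m = \mathcal{V}(x^{m-1}q(x))$ and $\deg(x^{m-1}q(x)) = (m-1)+(n-k_d) \leq n-1$ for $m\leq k_d$, no reduction modulo $x^n-1$ occurs, so $\sum_m c_m \mathbf{q}_m$ corresponds exactly to the polynomial $c(x)q(x)\in R_n$. Membership of this vector in $\mathcal{C}$ therefore translates to $c(x)q(x)\in I_\mathcal{C}$, i.e.\ $p(x) \mid c(x)q(x)$ in $\mathbb{Z}_2[x]$.

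Next I would invoke the divisibility relation $p(x)=f_1(x)q(x)$ with $\deg f_1 = k_d-k_c$, which is recorded at the end of \cref{subsec:pre_cyclic} as a consequence of $\mathcal{C}\subset\mathcal{D}$. Cancelling the nonzero factor $q(x)$, we obtain $f_1(x)\mid c(x)$. Since $\deg c(x) < k_d-k_c = \deg f_1(x)$, the only possibility is $c(x)=0$, hence $\mathbf{c}=\mathbf{0}$. This establishes the injectivity and therefore the distinctness of the syndromes.

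There is no real obstacle: the only thing to be careful about is keeping the distinction between arithmetic in $\mathbb{Z}_2[x]$ and in $R_n$, which we handle by the degree bound $\deg(c(x)q(x))\leq (k_d-k_c-1)+(n-k_d) = n-k_c-1 < n$ that lets us work with ordinary polynomial divisibility. The same degree counting also ensures that $f_1(x)$ cannot divide a nonzero polynomial of strictly smaller degree, which is the crux of the argument.
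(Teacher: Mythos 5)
Your proof is correct and follows essentially the same route as the paper's: both reduce the claim (via $H_\mathcal{C}\mathbf{v}^T=\mathbf{0}^T \Leftrightarrow \mathbf{v}\in\mathcal{C}$ and linearity) to showing that $q(x)c(x)\in I_\mathcal{C}$ forces $c(x)=0$, and both conclude by a degree count. The only cosmetic difference is that you cancel $q(x)$ against the factorization $p(x)=f_1(x)q(x)$ and compare $\deg c(x)$ with $\deg f_1(x)=k_d-k_c$, whereas the paper compares $\deg\left(q(x)c(x)\right)\leq n-k_c-1$ directly with the minimal degree $n-k_c$ of nonzero elements of $I_\mathcal{C}$; these are the same inequality.
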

\begin{proof}
Let $\mathbf{v}=(v_1,...,v_{k_d-k_c})$ and $\mathbf{w}=(w_1,...,w_{k_d-k_c})$ be some binary vectors in $\mathbb{Z}^{k_d-k_c}_2$. The following statements are equivalent.
\begin{align}
    &\tilde{\mathbf{p}}_j\cdot \sum_{m=1}^{k_d-k_c} v_m\mathbf{q}_m =  \tilde{\mathbf{p}}_j\cdot \sum_{m=1}^{k_d-k_c} w_m\mathbf{q}_m \;\forall j\\
    \Leftrightarrow\; & \tilde{\mathbf{p}}_j\cdot \sum_{m=1}^{k_d-k_c} (v_m+w_m)\mathbf{q}_m = 0 \;\forall j\\
    \Leftrightarrow\; & H_\mathcal{C}\left(\sum_{m=1}^{k_d-k_c} (v_m+w_m)\mathbf{q}_m\right)^T = \mathbf{0}^T \\
    \Leftrightarrow\; & \left(\sum_{m=1}^{k_d-k_c} (v_m+w_m)\mathbf{q}_m\right) \in \mathcal{C} \\
    \Leftrightarrow\; & q(x)\left(\sum_{m=1}^{k_d-k_c}(v_m+w_m)x^{m-1} \right) \in I_\mathcal{C}.
\end{align}
Observe that a nontrivial polynomial in $I_\mathcal{C}$ (which corresponds to a nonzero binary vector in $\mathcal{C}$) has degree at least $n-k_c$. Also, $q(x)$ has degree $n-k_d$. Suppose that $v_i \neq w_i$ for some $i$. The degree of $q(x)\left(\sum_{m=1}^{k_d-k_c}(v_m+w_m)x^{m-1} \right)$ is at most $k_d-k_c-1$. Therefore, $q(x)\left(\sum_{m=1}^{k_d-k_c}(v_m+w_m)x^{m-1} \right)$ is not in $I_\mathcal{C}$, which implies that $H_\mathcal{C}\left(\sum_{m=1}^{k_d-k_c} v_m\mathbf{q}_m\right)^T\neq H_\mathcal{C}\left(\sum_{m=1}^{k_d-k_c} w_m\mathbf{q}_m\right)^T$. That is, for each $\mathbf{c}=(c_1,...,c_{k_d-k_c})$, $H_\mathcal{C}\left(\sum_{m=1}^{k_d-k_c} c_m\mathbf{q}_m\right)^T$ is distinct.
\end{proof}

$\mathcal{Q}_3$ has maximum synchronization distance $k_d-k_c$. The code distance of the inner code of $\mathcal{Q}_3$, which is determined by the minimum weight of quantum logical operators, is $d_d$. To find the minimum weight of classical logical operators, we first observe that $\{\mathbf{q}_m\}$ where $m \in \{1,\dots,n-k_c\}$ are linearly independent. To modify a classical message $\mathbf{b}$ to another message $\mathbf{b'}$ is to modify the phases $(-1)^{\tilde{\mathbf{p}}_j\cdot \sum_{m=1}^{k_d-k_c} b_m\mathbf{q}_m}$ to $(-1)^{\tilde{\mathbf{p}}_j\cdot \sum_{m=1}^{k_d-k_c} b'_m\mathbf{q}_m}$, which requires a classical logical operator of the form $Z(\underbracket[0.5pt]{\vphantom{\tilde{\mathbf{q}}_i}\,\mathbf{0}\,}_{\mathclap{a_l}}|\underbracket[0.5pt]{\, \textstyle \sum_{m=1}^{k_d-k_c} c_m\mathbf{q}_m \,}_{\mathclap{n}}|\underbracket[0.5pt]{\vphantom{\tilde{\mathbf{q}}_i}\,\mathbf{0}\,}_{\mathclap{a_r}})$. Note that $\sum_{m=1}^{k_d-k_c} c_m\mathbf{q}_m$ is a code word in $\mathcal{D}$, so its minimum Hamming weight is $d_d$. A product of quantum and classical logical operators can be written as a product of $X$-type and $Z$-type operators associated with code words in $\mathcal{D}$. Therefore, the code distance of $\mathcal{Q}_3$ is $d_d$.

\vspace*{0.2cm}
\noindent \textit{$\mathcal{Q}_4$: a synchronizable hybrid code with more encoded classical bits}
\vspace*{0.2cm}

\noindent $\mathcal{Q}_4$ and $\mathcal{Q}_3$ are constructed from the same initial code $\mathcal{Q}^0_3$. The only difference between $\mathcal{Q}_4$ and $\mathcal{Q}_3$ is that the classical encoding of $\mathcal{Q}_4$ in the second step of the encoding procedure uses two types of operators: $Z(\sum_{m=1}^{k_d-k_c} b_m\mathbf{q}_m)$ for some binary vector $\mathbf{b}=(b_1,\dots,b_{k_d-k_c})\in \mathbb{Z}_2^{k_d-k_c}$ and $X(\sum_{m=2}^{y+1} c_m\mathbf{q}_m)$ for some $\mathbf{c}=(c_2,\dots,c_{y+1})\in \mathbb{Z}_2^{y}$ and for some $y \in \{1,\dots,k_d-k_c-2\}$. These operators encode $k_d-k_c+y$ classical bits of information into the phases of $X$-type and $Z$-type stabilizer generators. $\mathcal{Q}_4$ is also a synchronizable code, so $X(\mathbf{q}_1)$ is applied in the third step of the encoding procedure.

The desired code $\mathcal{Q}_4$ can be obtained after completing the encoding procedure. The inner code of $\mathcal{Q}_4$ labeled by $(b_1,\dots,b_{k_d-k_c})$ and $(c_2,\dots,c_{y+1})$ can be described by the stabilizer generators,
\begin{align}
    &X(\overbracket[0.5pt]{\underbracket[0.5pt]{\,\tilde{\mathbf{q}}_i\,}_{\mathclap{a_l}}}^{\text{last}}|\underbracket[0.5pt]{\,\tilde{\mathbf{q}}_i\,}_{\mathclap{n}}|\overbracket[0.5pt]{\underbracket[0.5pt]{\,\tilde{\mathbf{q}}_i\,}_{\mathclap{a_r}}}^{\text{first}}), \\
    &X(\overbracket[0.5pt]{\underbracket[0.5pt]{\,\tilde{\mathbf{p}}_j\,}_{\mathclap{a_l}}}^{\text{last}}|\underbracket[0.5pt]{\,\tilde{\mathbf{p}}_j\,}_{\mathclap{n}}|\overbracket[0.5pt]{\underbracket[0.5pt]{\,\tilde{\mathbf{p}}_j\,}_{\mathclap{a_r}}}^{\text{first}})(-1)^{\tilde{\mathbf{p}}_j\cdot \sum_{m=1}^{k_d-k_c} b_m\mathbf{q}_m}, \\
    &Z(\underbracket[0.5pt]{\vphantom{\tilde{\mathbf{q}}_i}\,\mathbf{0}\,}_{\mathclap{a_l}}|\underbracket[0.5pt]{\,\tilde{\mathbf{q}}_i\,}_{\mathclap{n}}|\underbracket[0.5pt]{\vphantom{\tilde{\mathbf{q}}_i}\,\mathbf{0}\,}_{\mathclap{a_r}}),  \\
    &Z(\underbracket[0.5pt]{\vphantom{\tilde{\mathbf{p}}_j}\,\mathbf{0}\,}_{\mathclap{a_l}}|\underbracket[0.5pt]{\,\tilde{\mathbf{p}}_j\,}_{\mathclap{n}}|\underbracket[0.5pt]{\vphantom{\tilde{\mathbf{p}}_j}\,\mathbf{0}\,}_{\mathclap{a_r}})(-1)^{\tilde{\mathbf{p}}_j\cdot \left(\mathbf{q}_1+\sum_{m=2}^{y+1} c_m\mathbf{q}_m\right)},   \\
    &Z\left( \begin{array}{c|ccc|c}
    \mathbb{1}_{a_l} & \mathbb{0}_{a_l,a_r} & \mathbb{0}_{a_l,(n-a_l-a_r)} & \mathbb{1}_{a_l} & \mathbb{0}_{a_l,a_r} \\
    \mathbb{0}_{a_r,a_l} & \mathbb{1}_{a_r} & \mathbb{0}_{a_r,(n-a_l-a_r)} & \mathbb{0}_{a_r,a_l} & \mathbb{1}_{a_r}
    \end{array}
    \right). 
\end{align}
Logical $X$ and logical $Z$ operators of the code $\mathcal{Q}_4$ can be obtained by transforming the anticommuting pairs $\{X(\mathbf{s}_l^x),Z(\mathbf{s}_l^z)\}$ where $l \in \{1,\dots,2k_c-n\}$ by the operations in the encoding procedure. Therefore, $\mathcal{Q}_4$ encodes $2k_c-n$ logical qubits.

Synchronization recovery on $\mathcal{Q}_4$ uses the fact that the stabilizer group of the inner code can be described by
\begin{align}
    &X(\overbracket[0.5pt]{\underbracket[0.5pt]{\,\tilde{\mathbf{q}}_i\,}_{\mathclap{a_l+\alpha}}}^{\text{last}}|\underbracket[0.5pt]{\,\tilde{\mathbf{q}}_i\,}_{\mathclap{n}}|\overbracket[0.5pt]{\underbracket[0.5pt]{\,\tilde{\mathbf{q}}_i\,}_{\mathclap{a_r-\alpha}}}^{\text{first}}),\\ 
    &X(\overbracket[0.5pt]{\underbracket[0.5pt]{\,\tilde{\mathbf{p}}_j\,}_{\mathclap{a_l+\alpha}}}^{\text{last}}|\underbracket[0.5pt]{\,\tilde{\mathbf{p}}_j\,}_{\mathclap{n}}|\overbracket[0.5pt]{\underbracket[0.5pt]{\,\tilde{\mathbf{p}}_j\,}_{\mathclap{a_r-\alpha}}}^{\text{first}})(-1)^{\tilde{\mathbf{p}}_j\cdot \mathcal{O}\left(\sum_{m=1}^{k_d-k_c} b_m\mathbf{q}_m,-\alpha\right)}, \\
    &Z(\underbracket[0.5pt]{\vphantom{\tilde{\mathbf{q}}_i}\,\mathbf{0}\,}_{\mathclap{a_l+\alpha}}|\underbracket[0.5pt]{\,\tilde{\mathbf{q}}_i\,}_{\mathclap{n}}|\underbracket[0.5pt]{\vphantom{\tilde{\mathbf{q}}_i}\,\mathbf{0}\,}_{\mathclap{a_r-\alpha}}),  \\
    &Z(\underbracket[0.5pt]{\vphantom{\tilde{\mathbf{p}}_j}\,\mathbf{0}\,}_{\mathclap{a_l+\alpha}}|\underbracket[0.5pt]{\,\tilde{\mathbf{p}}_j\,}_{\mathclap{n}}|\underbracket[0.5pt]{\vphantom{\tilde{\mathbf{p}}_j}\,\mathbf{0}\,}_{\mathclap{a_r-\alpha}})(-1)^{\tilde{\mathbf{p}}_j\cdot \mathcal{O}\left(\mathbf{q}_1+\sum_{m=2}^{y+1} c_m\mathbf{q}_m,-\alpha\right)},   \\
    &Z\left( \begin{array}{c|ccc|c}
    \mathbb{1}_{a_l} & \mathbb{0}_{a_l,a_r} & \mathbb{0}_{a_l,(n-a_l-a_r)} & \mathbb{1}_{a_l} & \mathbb{0}_{a_l,a_r} \\
    \mathbb{0}_{a_r,a_l} & \mathbb{1}_{a_r} & \mathbb{0}_{a_r,(n-a_l-a_r)} & \mathbb{0}_{a_r,a_l} & \mathbb{1}_{a_r}
    \end{array}
    \right), 
\end{align}
for any integer $\alpha$ such that $-a_l \leq \alpha \leq a_r$. This can be obtained by using \cref{thm:sync_proof_subspace} with $\mathbf{v}=\sum_{m=1}^{k_d-k_c} b_m\mathbf{q}_m$ and $\mathbf{w}=\mathbf{q}_1+\sum_{m=2}^{y+1} c_m\mathbf{q}_m$, the fact that $\mathcal{C}^\perp=\langle\tilde{\mathbf{q}}_i,\tilde{\mathbf{p}}_j\rangle$ where $i \in \{1,\dots,n-k_d\}$ and $j \in \{1,\dots,k_d-k_c\}$, and the fact that $\tilde{\mathbf{q}}_i \cdot \mathcal{O}\left(\mathbf{q}_m,-\alpha\right)=0$ for all $i,m,\alpha$.

Similar to other codes, $X$-type error correction must be done before synchronization recovery by measuring $Z(\tilde{\mathbf{q}}_i)$ where $i \in \{1,\dots,n-k_d\}$ on the block of $n$ received qubits and using an error decoder for the code $\mathcal{Q}_\mathcal{D}$. However, the synchronization recovery on $\mathcal{Q}_4$ is slightly different from the previous codes. To see how it can be done, we first present the following lemma.

\begin{lemma} \label{lem:sync_message_value}
    Let $y \in \{1,\dots,k_d-k_c-2\}$ and suppose that $a_l$ and $a_r$ are non-negative integers satisfying $a_l+a_r < k_d-k_c-y$. For any pair $(\mathbf{c},\alpha)$ of binary vector $\mathbf{c}=(c_2,...,c_{y+1}) \in \mathbb{Z}^{y}_2$ and integer $\alpha$ satisfying $-a_l \leq \alpha \leq a_r$, $H_\mathcal{C}\mathcal{O}(\mathbf{q}_1+\sum_{m=2}^{y+1} c_m\mathbf{q}_m,-\alpha)^T$ is distinct.
\end{lemma}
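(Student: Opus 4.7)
The plan is to imitate the proofs of Lemmas \ref{lem:sync_value} and \ref{lem:message_value}, both of which hinge on the fact that a nonzero polynomial in $I_\mathcal{C}$ must have degree at least $n-k_c$. Suppose two pairs $(\mathbf{c},\alpha)$ and $(\mathbf{c}',\alpha')$ yield the same syndrome. Using the correspondence $\mathbf{q}_m\leftrightarrow x^{m-1}q(x)$ and the fact that the cyclic shift $\mathcal{O}(\cdot,-\alpha)$ corresponds to multiplication by the unit $x^{-\alpha}\in R_n$, equal syndromes translate to
\[
q(x)\bigl[x^{-\alpha}P_{\mathbf{c}}(x)+x^{-\alpha'}P_{\mathbf{c}'}(x)\bigr]\in I_\mathcal{C},
\]
where $P_{\mathbf{c}}(x):=1+\sum_{m=2}^{y+1}c_m x^{m-1}$, and $P_{\mathbf{c}'}(x)$ is defined the same way with $c'_m$ in place of $c_m$. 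Without loss of generality $\alpha\ge\alpha'$; multiplying the relation above by the unit $x^{\alpha}$ reduces it to $q(x)g(x)\in I_\mathcal{C}$ with $g(x)=P_{\mathbf{c}}(x)+x^{\delta}P_{\mathbf{c}'}(x)$ and $\delta:=\alpha-\alpha'\ge 0$.

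The core step is to prove that $g(x)$ vanishes. I would lift the relation to $\mathbb{Z}_2[x]$: there exist $h,k\in\mathbb{Z}_2[x]$ with $q(x)g(x)=p(x)h(x)+(x^n-1)k(x)=f_1(x)q(x)h(x)+q(x)\tilde q(x)k(x)$, and canceling $q(x)$ in the integral domain $\mathbb{Z}_2[x]$ gives $g(x)=f_1(x)h(x)+\tilde q(x)k(x)$. Since $\tilde p(x)=(x^n-1)/p(x)=\tilde q(x)/f_1(x)$, one has $\tilde q(x)=f_1(x)\tilde p(x)$, so $f_1(x)\mid g(x)$ in $\mathbb{Z}_2[x]$. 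The hypothesis $a_l+a_r<k_d-k_c-y$ then yields the degree bound $\deg g\le\delta+y\le a_l+a_r+y<k_d-k_c=\deg f_1$, forcing $g(x)=0$.

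Finally, over characteristic $2$ the identity $g(x)=0$ reads $P_{\mathbf{c}}(x)=x^{\delta}P_{\mathbf{c}'}(x)$. Since $P_{\mathbf{c}}(0)=P_{\mathbf{c}'}(0)=1$, the right-hand side has constant term $0$ whenever $\delta>0$, a contradiction; hence $\delta=0$, $\alpha=\alpha'$, and matching the remaining coefficients forces $\mathbf{c}=\mathbf{c}'$. The main obstacle is the divisibility step $f_1\mid g$: the ideal membership lives in the quotient ring $R_n$, not in $\mathbb{Z}_2[x]$, and only after invoking the identity $\tilde q=f_1\tilde p$ can one absorb the cross term $q\tilde q k$ into a multiple of $f_1$. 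Once that identity is in hand, the rest is a routine degree count and constant-term check directly analogous to the arguments used in Lemmas \ref{lem:sync_value} and \ref{lem:message_value}.
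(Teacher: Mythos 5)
Your proof is correct, and it takes a somewhat different route from the paper's. The paper's own argument is a direct degree count on the product: after reducing equal syndromes to the membership $q(x)\bigl(1+\sum_{m}w_mx^{m-1}+x^{\beta-\gamma}(1+\sum_{m}v_mx^{m-1})\bigr)\in I_\mathcal{C}$, it splits into the cases $\beta=\gamma$ and $\beta\neq\gamma$ and observes that in either case this is a nonzero polynomial of degree at most $n-k_d+(\beta-\gamma)+y\leq n-k_c-1$, hence cannot lie in $I_\mathcal{C}$ since every nonzero element of $I_\mathcal{C}$ has degree at least $n-k_c$ --- exactly the mechanism of \cref{lem:sync_value,lem:message_value}. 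You instead cancel $q(x)$ and convert the membership into the divisibility $f_1(x)\mid g(x)$ in $\mathbb{Z}_2[x]$, using $p=f_1q$ and $\tilde q_\mathcal{R}=f_1\tilde p_\mathcal{R}$; the identical hypothesis $a_l+a_r+y<k_d-k_c=\deg f_1$ then forces $g=0$, and the constant-term check recovers $\delta=0$ and $\mathbf{c}=\mathbf{c}'$. Both arguments rest on the same degree inequality, but yours buys two small things: it treats the two cases uniformly, and it makes explicit the nonvanishing step (the paper's conclusion that the product is not in $I_\mathcal{C}$ tacitly uses that the product is nonzero, which in the $\beta\neq\gamma$ case is precisely the constant-term observation you place at the end). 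The cost is the extra lifting and cancellation bookkeeping in $\mathbb{Z}_2[x]$, which you identify as the delicate point and handle correctly.
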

\begin{proof}
Let $\mathbf{v}=(v_2,...,v_{y+1})$ and $\mathbf{w}=(w_2,...,w_{y+1})$ be binary vectors in $\mathbb{Z}^{y}_2$, and let $\beta,\gamma$ be any integers such that $\beta > \gamma$. The following statements are equivalent.
\begin{align}
    &\tilde{\mathbf{p}}_j\cdot \mathcal{O}(\mathbf{q}_1+\sum_{m=2}^{y+1} v_m\mathbf{q}_m,\beta) =  \tilde{\mathbf{p}}_j\cdot \mathcal{O}(\mathbf{q}_1+\sum_{m=2}^{y+1} w_m\mathbf{q}_m,\gamma) \;\forall j\\
    \Leftrightarrow\; & \tilde{\mathbf{p}}_j\cdot \left(\mathcal{O}(\mathbf{q}_1+\sum_{m=2}^{y+1} w_m\mathbf{q}_m,\gamma) + \mathcal{O}(\mathbf{q}_1+\sum_{m=2}^{y+1} v_m\mathbf{q}_m,\beta) \right)=0 \;\forall j\\
    \Leftrightarrow\; & H_\mathcal{C}\left(\mathcal{O}(\mathbf{q}_1+\sum_{m=2}^{y+1} w_m\mathbf{q}_m,\gamma) + \mathcal{O}(\mathbf{q}_1+\sum_{m=2}^{y+1} v_m\mathbf{q}_m,\beta)\right)^T = \mathbf{0}^T\\
    \Leftrightarrow\; & \left(\mathcal{O}(\mathbf{q}_1+\sum_{m=2}^{y+1} w_m\mathbf{q}_m,\gamma) + \mathcal{O}(\mathbf{q}_1+\sum_{m=2}^{y+1} v_m\mathbf{q}_m,\beta)\right) \in \mathcal{C}\\
    \Leftrightarrow\; & \left(\mathbf{q}_1+\sum_{m=2}^{y+1} w_m\mathbf{q}_m + \mathcal{O}(\mathbf{q}_1+\sum_{m=2}^{y+1} v_m\mathbf{q}_m,\beta-\gamma)\right) \in \mathcal{C}
\end{align}

\vspace*{-1cm}
\begin{align}    
    \Leftrightarrow\; & q(x)\left(1+\sum_{m=2}^{y+1}w_m x^{m-1} + x^{\beta-\gamma}\left(1+\sum_{m=2}^{y+1}v_m x^{m-1}\right)\right) \in I_\mathcal{C}.
\end{align}

Case 1. Consider the case that $\beta=\gamma$ and $v_m \neq w_m$ for some $m$. The statements above are equivalent to, 
\begin{equation}
    q(x)\left(\sum_{m=2}^{y+1}(w_m+v_m) x^{m-1}\right) \in I_\mathcal{C}.
\end{equation}
Observe that a nontrivial polynomial in $I_\mathcal{C}$ has degree at least $n-k_c$. However, the degree of 
\begin{equation}
q(x)\left(\sum_{m=2}^{y+1}(w_i+v_i) x^{m-1}\right),
\end{equation}
is at most $n-k_d+y \leq n-k_c-2$. Therefore, the polynomial is not in $I_\mathcal{C}$. This implies that $H_\mathcal{C}\mathcal{O}(\mathbf{q}_1+\sum_{m=2}^{y+1} v_m\mathbf{q}_m,\beta)^T \neq H_\mathcal{C}\mathcal{O}(\mathbf{q}_1+\sum_{m=2}^{y+1} w_m\mathbf{q}_m,\beta)^T$ if $\mathbf{v} \neq \mathbf{w}$.

Case 2. Consider the case that $\beta \neq \gamma$, and suppose that $\beta-\gamma < k_d-k_c-y$. The degree of 
\begin{equation}
q(x)\left(1+\sum_{m=2}^{y+1}w_m x^{m-1} + x^{\beta-\gamma}\left(1+\sum_{m=2}^{y+1}v_m x^{m-1}\right)\right),
\end{equation}
is at most $n-k_d+\beta-\gamma+y \leq n-k_c-1$. Therefore, the polynomial is not in $I_\mathcal{C}$. This implies that $H_\mathcal{C}\mathcal{O}(\mathbf{q}_1+\sum_{m=2}^{y+1} v_m\mathbf{q}_m,\beta)^T \neq H_\mathcal{C}\mathcal{O}(\mathbf{q}_1+\sum_{m=2}^{y+1} w_m\mathbf{q}_m,\gamma)^T$ for any $\mathbf{v}$, $\mathbf{w}$ if $\beta \neq \gamma$.

Combining two cases, we have $H_\mathcal{C}\mathcal{O}(\mathbf{q}_1+\sum_{m=2}^{y+1} v_m\mathbf{q}_m,\beta)^T \neq H_\mathcal{C}\mathcal{O}(\mathbf{q}_1+\sum_{m=2}^{y+1} w_m\mathbf{q}_m,\gamma)^T$ whenever $(\mathbf{v},\beta) \neq (\mathbf{w},\gamma)$. This means that the value of $H_\mathcal{C}\mathcal{O}(\mathbf{q}_1+\sum_{m=2}^{y+1} c_m\mathbf{q}_m,-\alpha)^T$ for each $(\mathbf{c},\alpha)$ is distinct.
\end{proof}

By \cref{lem:sync_message_value}, it is possible to construct a lookup table between $(\mathbf{c},\alpha)$ and $H_\mathcal{C}\mathcal{O}(\mathbf{q}_1+\sum_{m=2}^{y+1} c_m\mathbf{q}_m,-\alpha)^T$ whenever $a_l+a_r < k_d-k_c-y$. When the receiver measures $Z(\tilde{\mathbf{p}}_j)$ where $j \in \{1,\dots,k_d-k_c\}$ on a block of $n$ received qubits, they obtain the value $H_\mathcal{C}\mathcal{O}(\mathbf{q}_1+\sum_{m=2}^{y+1} c_m\mathbf{q}_m,-\alpha)^T$. From this information, $\alpha$ (which tells the misalignment) and $\mathbf{c}$ (which provides a part of the encoded classical information) can be found at the same time.

After the synchronization recovery is done, $X$-type and $Z$-type error corrections on the entire block of $n+a_l+a_r$ qubits can be done by measuring $Z(\underbracket[0.5pt]{\vphantom{\tilde{\mathbf{q}}_i}\,\mathbf{0}\,}_{\mathclap{a_l+a_r}}|\underbracket[0.5pt]{\,\tilde{\mathbf{q}}_i\,}_{\mathclap{n}})$, $Z(\underbracket[0.5pt]{\,\tilde{\mathbf{q}}_i\,}_{\mathclap{n}}|\underbracket[0.5pt]{\vphantom{\tilde{\mathbf{q}}_i}\,\mathbf{0}\,}_{\mathclap{a_l+a_r}})$, and $X(\overbracket[0.5pt]{\underbracket[0.5pt]{\,\tilde{\mathbf{q}}_i\,}_{\mathclap{a_l}}}^{\text{last}}|\underbracket[0.5pt]{\,\tilde{\mathbf{q}}_i\,}_{\mathclap{n}}|\overbracket[0.5pt]{\underbracket[0.5pt]{\,\tilde{\mathbf{q}}_i\,}_{\mathclap{a_r}}}^{\text{first}})$, then using an error decoder for the code $\mathcal{Q}_\mathcal{D}$. Error correction is guaranteed only if there are no more than $\lfloor(d_d-1)/2\rfloor$ $X$-type errors on any $n$ consecutive qubits and there are no more than $\lfloor(d_d-1)/2\rfloor$ $Z$-type errors on the entire block of $n+a_l+a_r$ qubits. 

Finally, the remaining part of the encoded classical information $\mathbf{b}$ can be obtained by measuring $X(\overbracket[0.5pt]{\underbracket[0.5pt]{\,\tilde{\mathbf{p}}_j\,}_{\mathclap{a_l}}}^{\text{last}}|\underbracket[0.5pt]{\,\tilde{\mathbf{p}}_j\,}_{\mathclap{n}}|\overbracket[0.5pt]{\underbracket[0.5pt]{\,\tilde{\mathbf{p}}_j\,}_{\mathclap{a_r}}}^{\text{first}})$, which gives $H_\mathcal{C}\left(\sum_{m=1}^{k_d-k_c} b_m\mathbf{q}_m\right)^T$ as eigenvalues. By \cref{lem:message_value}, we know that $H_\mathcal{C}\left(\sum_{m=1}^{k_d-k_c} b_m\mathbf{q}_m\right)^T$ for each $\mathbf{b}=(b_1,...,b_{k_d-k_c})$ is distinct.

The full procedure to correct Pauli and synchronization errors and obtain the encoded classical information on $\mathcal{Q}_4$ is summarized below.
\begin{enumerate}
    \item Measure $Z(\tilde{\mathbf{q}}_i)$ (where $i \in \{1,\dots,n-k_d\}$) on the received $n$ qubits and perform $X$-type error correction using an error decoder for the code $\mathcal{Q}_\mathcal{D}$.
    \item Measure $Z(\tilde{\mathbf{p}}_j)$ (where $j \in \{1,\dots,k_d-k_c\}$) on the received $n$ qubits to obtain the eigenvalues $H_\mathcal{C}\mathcal{O}(\mathbf{q}_1+\sum_{m=2}^{y+1} c_m\mathbf{q}_m,-\alpha)^T$. Determine $\alpha$ and the first part of the encoded classical information $\mathbf{c}$ from $H_\mathcal{C}\mathcal{O}(\mathbf{q}_1+\sum_{m=2}^{y+1} c_m\mathbf{q}_m,-\alpha)^T$.
    \item Perform synchronization recovery using $\alpha$.
    \item Measure $Z(\underbracket[0.5pt]{\vphantom{\tilde{\mathbf{q}}_i}\,\mathbf{0}\,}_{\mathclap{a_l+a_r}}|\underbracket[0.5pt]{\,\tilde{\mathbf{q}}_i\,}_{\mathclap{n}})$ and $Z(\underbracket[0.5pt]{\,\tilde{\mathbf{q}}_i\,}_{\mathclap{n}}|\underbracket[0.5pt]{\vphantom{\tilde{\mathbf{q}}_i}\,\mathbf{0}\,}_{\mathclap{a_l+a_r}})$ (where $i \in \{1,\dots,n-k_d\}$) and correct $X$-type errors on the entire block of $n+a_l+a_r$ qubits using an error decoder for the code $\mathcal{Q}_\mathcal{D}$.
    \item Measure $X(\overbracket[0.5pt]{\underbracket[0.5pt]{\,\tilde{\mathbf{q}}_i\,}_{\mathclap{a_l}}}^{\text{last}}|\underbracket[0.5pt]{\,\tilde{\mathbf{q}}_i\,}_{\mathclap{n}}|\overbracket[0.5pt]{\underbracket[0.5pt]{\,\tilde{\mathbf{q}}_i\,}_{\mathclap{a_r}}}^{\text{first}})$ (where $i \in \{1,\dots,n-k_d\}$) and correct $Z$-type errors on the entire block of $n+a_l+a_r$ qubits using an error decoder for the code $\mathcal{Q}_\mathcal{D}$.
    \item Measure $X(\overbracket[0.5pt]{\underbracket[0.5pt]{\,\tilde{\mathbf{p}}_j\,}_{\mathclap{a_l}}}^{\text{last}}|\underbracket[0.5pt]{\,\tilde{\mathbf{p}}_j\,}_{\mathclap{n}}|\overbracket[0.5pt]{\underbracket[0.5pt]{\,\tilde{\mathbf{p}}_j\,}_{\mathclap{a_r}}}^{\text{first}})$ to obtain $H_\mathcal{C}\left(\sum_{m=1}^{k_d-k_c} b_m\mathbf{q}_m\right)^T$. Determine the second part of the encoded classical information $\mathbf{b}$.
\end{enumerate}

We can find that the code distance of $\mathcal{Q}_4$ is $d_d$ by an analysis similar to the one for $\mathcal{Q}_3$. $\mathcal{Q}_4$ that encodes $k_d-k_c+y$ classical bits has maximum synchronization distance $k_d-k_c-y$, where $y \in \{1,\dots,k_d-k_c-2\}$. A synchronizable hybrid code that encodes more classical bits can be obtained from a synchronizable hybrid code that encodes fewer classical bits by simply applying more $X(\mathbf{q}_m)$ and attaching fewer ancilla qubits in the encoding procedure. We call this process \emph{the sacrifice of synchronization distance}, as the maximum synchronization distance of the resulting code decreases.

We point out that it is also possible to construct $\mathcal{Q}_4$ with $y=k_d-k_c-1$, which encodes $2(k_d-k_c)-1$ classical bits and has maximum synchronization distance 1 since the only possible value of $(a_l,a_r)$ is $(0,0)$. However, the use of $X(\mathbf{q}_1)$ in the encoding procedure is meaningless because the synchronization recovery cannot be done on this code. Instead of doing so, $X(\mathbf{q}_1)$ can be used to encode one more classical bit, resulting in a non-synchronizable hybrid code $\mathcal{Q}_5$ that encodes $2(k_d-k_c)$ classical bits in total.

\vspace*{0.2cm}
\noindent \textit{$\mathcal{Q}_5$: a hybrid code}
\vspace*{0.2cm}

\noindent $\mathcal{Q}_5$ is constructed from the initial code $\mathcal{Q}^0_3$, similar to $\mathcal{Q}_3$ and $\mathcal{Q}_4$. In the encoding procedure of $\mathcal{Q}_5$, classical information is encoded by operators $Z(\sum_{m=1}^{k_d-k_c} b_m\mathbf{q}_m)$ and $X(\sum_{m=1}^{k_d-k_c} c_m\mathbf{q}_m)$ for some binary vectors $\mathbf{b}=(b_1,\dots,b_{k_d-k_c})$ and $\mathbf{c}=(c_1,\dots,c_{k_d-k_c})$ in $\mathbb{Z}^{k_d-k_c}_2$. With these operators, $2(k_d-k_c)$ classical bits are encoded into the phases of $X$-type and $Z$-type stabilizer generators. We point out this construction is similar to sacrificing all synchronization distance of $\mathcal{Q}_3$ and using $X(\mathbf{q}_1)$ to encode an additional classical bit.

The stabilizer group of the inner code of $\mathcal{Q}_5$ labeled by $(b_1,\dots,b_{k_d-k_c})$ and $(c_1,\dots,c_{k_d-k_c})$ can be described by the stabilizer generators, 
\begin{align}
    &X(\tilde{\mathbf{q}}_i), \\
    &X(\tilde{\mathbf{p}}_j)(-1)^{\tilde{\mathbf{p}}_j\cdot \left(\sum_{m=1}^{k_d-k_c} b_m\mathbf{q}_m\right)},\\
    &Z(\tilde{\mathbf{q}}_i), \\
    &Z(\tilde{\mathbf{p}}_j)(-1)^{\tilde{\mathbf{p}}_j\cdot \left(\sum_{m=1}^{k_d-k_c} c_m\mathbf{q}_m\right)}, 
\end{align}
where $i \in \{1,\dots,n-k_d\}$, $j \in \{1,\dots,k_d-k_c\}$, and $m \in \{1,\dots,k_d-k_c\}$. Logical $X$ and logical $Z$ operators can be obtained by transforming the anticommuting pairs $\{X(\mathbf{s}_l^x),Z(\mathbf{s}_l^z)\}$ where $l \in \{1,\dots,2k_c-n\}$ by the operations in the encoding procedure. Therefore, this code encodes $2k_c-n$ logical qubits.

Error correction on $\mathcal{Q}_5$ can be done by measuring $X(\tilde{\mathbf{q}}_i)$ and $Z(\tilde{\mathbf{q}}_i)$, and using an error decoder for $\mathcal{Q}_\mathcal{D}$. After error correction is done, the encoded classical information can be obtained by measuring $X(\tilde{\mathbf{p}}_j)$ and $Z(\tilde{\mathbf{p}}_j)$. The measurements provide $H_\mathcal{C}\left(\sum_{m=1}^{k_d-k_c} b_m\mathbf{q}_m\right)^T$ and $H_\mathcal{C}\left(\sum_{m=1}^{k_d-k_c} c_m\mathbf{q}_m\right)^T$. By \cref{lem:message_value}, we know that $\mathbf{b}$ and $\mathbf{c}$ can be uniquely determined by these eigenvalues. Similar to other codes, we can show that the code distance of $\mathcal{Q}_5$ is $d_d$.

\subsection{Hybrid subsystem and synchronizable hybrid subsystem codes} \label{subsec:SHSC_subsystem_hybrid}

\vspace*{0.2cm}
\noindent \textit{$\mathcal{Q}_6$: a synchronizable hybrid subsystem code}
\vspace*{0.2cm}

\noindent Starting from a synchronizable subsystem code $\mathcal{Q}_2$, it is possible to obtain a synchronizable hybrid subsystem code $\mathcal{Q}_6$ by sacrificing some synchronization distance. More precisely, the initial code of $\mathcal{Q}_6$ is $\mathcal{Q}^0_2$, and $X(\sum_{m=2}^{y+1} c_m\mathbf{q}_m)$ for some $\mathbf{c}=(c_2,\dots,c_{y+1})\in \mathbb{Z}_2^{y}$ where $y \in \{1,\dots,k_d-k_c-2\}$ is applied in the second step of the encoding procedure to encode $y$ classical bits into the phases of $Z$-type stabilizer generators. Furthermore, $X(\mathbf{q}_1)$ is applied in the third step of the encoding procedure as this code is a synchronizable code.

After completing the encoding procedure, the desired code $\mathcal{Q}_6$ is obtained. The inner (subsystem) code labeled by $(c_2,\dots,c_{y+1})$ can be described by the stabilizer generators,
\begin{align}
    &X(\overbracket[0.5pt]{\underbracket[0.5pt]{\,\tilde{\mathbf{q}}_i\,}_{\mathclap{a_l}}}^{\text{last}}|\underbracket[0.5pt]{\,\tilde{\mathbf{q}}_i\,}_{\mathclap{n}}|\overbracket[0.5pt]{\underbracket[0.5pt]{\,\tilde{\mathbf{q}}_i\,}_{\mathclap{a_r}}}^{\text{first}}), \\
    &Z(\underbracket[0.5pt]{\vphantom{\tilde{\mathbf{q}}_i}\,\mathbf{0}\,}_{\mathclap{a_l}}|\underbracket[0.5pt]{\,\tilde{\mathbf{q}}_i\,}_{\mathclap{n}}|\underbracket[0.5pt]{\vphantom{\tilde{\mathbf{q}}_i}\,\mathbf{0}\,}_{\mathclap{a_r}}),\\
    &Z(\underbracket[0.5pt]{\vphantom{\tilde{\mathbf{p}}_j}\,\mathbf{0}\,}_{\mathclap{a_l}}|\underbracket[0.5pt]{\,\tilde{\mathbf{p}}_j\,}_{\mathclap{n}}|\underbracket[0.5pt]{\vphantom{\tilde{\mathbf{p}}_j}\,\mathbf{0}\,}_{\mathclap{a_r}})(-1)^{\tilde{\mathbf{p}}_j \cdot \left(\mathbf{q}_1+\sum_{m=2}^{y+1} c_m\mathbf{q}_m\right)},   \\
    &Z\left( \begin{array}{c|ccc|c}
    \mathbb{1}_{a_l} & \mathbb{0}_{a_l,a_r} & \mathbb{0}_{a_l,(n-a_l-a_r)} & \mathbb{1}_{a_l} & \mathbb{0}_{a_l,a_r} \\
    \mathbb{0}_{a_r,a_l} & \mathbb{1}_{a_r} & \mathbb{0}_{a_r,(n-a_l-a_r)} & \mathbb{0}_{a_r,a_l} & \mathbb{1}_{a_r}
    \end{array}
    \right), 
\end{align}
and its gauge group is described by the stabilizer generators, $iI^{\otimes n+a_l+a_r}$, and
\begin{align}
    &X(\overbracket[0.5pt]{\underbracket[0.5pt]{\,\tilde{\mathbf{p}}_j\,}_{\mathclap{a_l}}}^{\text{last}}|\underbracket[0.5pt]{\,\tilde{\mathbf{p}}_j\,}_{\mathclap{n}}|\overbracket[0.5pt]{\underbracket[0.5pt]{\,\tilde{\mathbf{p}}_j\,}_{\mathclap{a_r}}}^{\text{first}}), \\
    &Z(\underbracket[0.5pt]{\vphantom{\mathbf{q}'_m}\,\mathbf{0}\,}_{\mathclap{a_l}}|\underbracket[0.5pt]{\,\mathbf{q}'_m\,}_{\mathclap{n}}|\underbracket[0.5pt]{\vphantom{\mathbf{q}'_m}\,\mathbf{0}\,}_{\mathclap{a_r}})(-1)^{\mathbf{q}'_m \cdot \left(\mathbf{q}_1+\sum_{m=2}^{y+1} c_m\mathbf{q}_m\right)},
\end{align}
where $i \in \{1,\dots,n-k_d\}$, $j \in \{1,\dots,k_d-k_c\}$, and $m \in \{1,\dots,k_d-k_c\}$. 

$\mathcal{Q}_6$ has $k_d-k_c$ gauge qubits similar to the code $\mathcal{Q}_2$. Logical $X$ and logical $Z$ operators of $\mathcal{Q}_6$ can be obtained by transforming the anticommuting pairs $\{X(\mathbf{s}_l^x),Z(\mathbf{s}_l^z)\}$ where $l \in \{1,\dots,2k_c-n\}$ by the operations in the encoding procedure. Therefore, $\mathcal{Q}_6$ encodes $2k_c-n$ logical qubits.

Synchronization recovery on $\mathcal{Q}_6$ uses the fact that the stabilizer group of its inner code labeled by $(c_2,\dots,c_{y+1})$ can also be described by the stabilizer generators,
\begin{align}
    &X(\overbracket[0.5pt]{\underbracket[0.5pt]{\,\tilde{\mathbf{q}}_i\,}_{\mathclap{a_l+\alpha}}}^{\text{last}}|\underbracket[0.5pt]{\,\tilde{\mathbf{q}}_i\,}_{\mathclap{n}}|\overbracket[0.5pt]{\underbracket[0.5pt]{\,\tilde{\mathbf{q}}_i\,}_{\mathclap{a_r-\alpha}}}^{\text{first}}), \\
    &Z(\underbracket[0.5pt]{\vphantom{\tilde{\mathbf{q}}_i}\,\mathbf{0}\,}_{\mathclap{a_l+\alpha}}|\underbracket[0.5pt]{\,\tilde{\mathbf{q}}_i\,}_{\mathclap{n}}|\underbracket[0.5pt]{\vphantom{\tilde{\mathbf{q}}_i}\,\mathbf{0}\,}_{\mathclap{a_r-\alpha}}),\\
    &Z(\underbracket[0.5pt]{\vphantom{\tilde{\mathbf{p}}_j}\,\mathbf{0}\,}_{\mathclap{a_l+\alpha}}|\underbracket[0.5pt]{\,\tilde{\mathbf{p}}_j\,}_{\mathclap{n}}|\underbracket[0.5pt]{\vphantom{\tilde{\mathbf{p}}_j}\,\mathbf{0}\,}_{\mathclap{a_r-\alpha}})(-1)^{\tilde{\mathbf{p}}_j \cdot \mathcal{O}\left(\mathbf{q}_1+\sum_{m=2}^{y+1} c_m\mathbf{q}_m,-\alpha\right)} ,  \\
    &Z\left( \begin{array}{c|ccc|c}
    \mathbb{1}_{a_l} & \mathbb{0}_{a_l,a_r} & \mathbb{0}_{a_l,(n-a_l-a_r)} & \mathbb{1}_{a_l} & \mathbb{0}_{a_l,a_r} \\
    \mathbb{0}_{a_r,a_l} & \mathbb{1}_{a_r} & \mathbb{0}_{a_r,(n-a_l-a_r)} & \mathbb{0}_{a_r,a_l} & \mathbb{1}_{a_r}
    \end{array}
    \right), 
\end{align}
and the gauge group is described by the stabilizer generators, $iI^{\otimes n+a_l+a_r}$, and
\begin{align}
    &X(\overbracket[0.5pt]{\underbracket[0.5pt]{\,\tilde{\mathbf{p}}_j\,}_{\mathclap{a_l+\alpha}}}^{\text{last}}|\underbracket[0.5pt]{\,\tilde{\mathbf{p}}_j\,}_{\mathclap{n}}|\overbracket[0.5pt]{\underbracket[0.5pt]{\,\tilde{\mathbf{p}}_j\,}_{\mathclap{a_r-\alpha}}}^{\text{first}}), \\
    &Z(\underbracket[0.5pt]{\vphantom{\mathbf{q}'_m}\,\mathbf{0}\,}_{\mathclap{a_l+\alpha}}|\underbracket[0.5pt]{\,\mathbf{q}'_m\,}_{\mathclap{n}}|\underbracket[0.5pt]{\vphantom{\mathbf{q}'_m}\,\mathbf{0}\,}_{\mathclap{a_r-\alpha}})(-1)^{\mathbf{q}'_m \cdot \mathcal{O}\left(\mathbf{q}_1+\sum_{m=2}^{y+1} c_m\mathbf{q}_m,-\alpha\right)},
\end{align}
for any integer $\alpha$ such that $-a_l \leq \alpha \leq a_r$. Here we use \cref{thm:sync_proof_subsystem} with $\mathbf{w}=\mathbf{q}_1+\sum_{m=2}^{y+1} c_m\mathbf{q}_m$, the fact that $\mathcal{C}^\perp=\langle\tilde{\mathbf{q}}_i,\tilde{\mathbf{p}}_j\rangle$ where $i \in \{1,\dots,n-k_d\}$ and $j \in \{1,\dots,k_d-k_c\}$, and the fact that $\tilde{\mathbf{q}}_i \cdot \mathcal{O}\left(\mathbf{q}_m,-\alpha\right)=0$ for all $i,m,\alpha$.

Similar to other codes, $X$-type error correction must be done before synchronization recovery by measuring $Z(\tilde{\mathbf{q}}_i)$ where $i \in \{1,\dots,n-k_d\}$ on the block of $n$ received qubits and using an error decoder for the code $\mathcal{Q}_\mathcal{D}$. Similar to $\mathcal{Q}_4$, The synchronization recovery on $\mathcal{Q}_6$ can be done by measuring $Z(\tilde{\mathbf{p}}_j)$ where $j \in \{1,\dots,k_d-k_c\}$ on the received qubits, which gives $H_\mathcal{C}\mathcal{O}(\mathbf{q}_1+\sum_{m=2}^{y+1} c_m\mathbf{q}_m,-\alpha)^T$. By \cref{lem:sync_message_value}, we can uniquely determine $\alpha$ and $\mathbf{c}$ by these measurement results if $a_l+a_r < k_d-k_c-y$. $\mathbf{c}$ provides the encoded classical information, and synchronization recovery can be done using $\alpha$.

After synchronization recovery is done, $X$-type and $Z$-type error corrections on the entire block of $n+a_l+a_r$ qubits can be done by measuring $Z(\underbracket[0.5pt]{\vphantom{\tilde{\mathbf{q}}_i}\,\mathbf{0}\,}_{\mathclap{a_l+a_r}}|\underbracket[0.5pt]{\,\tilde{\mathbf{q}}_i\,}_{\mathclap{n}})$, $Z(\underbracket[0.5pt]{\,\tilde{\mathbf{q}}_i\,}_{\mathclap{n}}|\underbracket[0.5pt]{\vphantom{\tilde{\mathbf{q}}_i}\,\mathbf{0}\,}_{\mathclap{a_l+a_r}})$, and $X(\overbracket[0.5pt]{\underbracket[0.5pt]{\,\tilde{\mathbf{q}}_i\,}_{\mathclap{a_l}}}^{\text{last}}|\underbracket[0.5pt]{\,\tilde{\mathbf{q}}_i\,}_{\mathclap{n}}|\overbracket[0.5pt]{\underbracket[0.5pt]{\,\tilde{\mathbf{q}}_i\,}_{\mathclap{a_r}}}^{\text{first}})$, then using an error decoder for the code $\mathcal{Q}_\mathcal{D}$. Error correction is guaranteed only if there are no more than $\lfloor(d_d-1)/2\rfloor$ $X$-type errors on any $n$ consecutive qubits and there are no more than $\lfloor(d_d-1)/2\rfloor$ $Z$-type errors on the entire block of $n+a_l+a_r$ qubits. 

The full procedure to correct Pauli and synchronization errors and obtain the encoded classical information on $\mathcal{Q}_6$ is summarized below.
\begin{enumerate}
    \item Measure $Z(\tilde{\mathbf{q}}_i)$ (where $i \in \{1,\dots,n-k_d\}$) on the received $n$ qubits and perform $X$-type error correction using an error decoder for the code $\mathcal{Q}_\mathcal{D}$.
    \item Measure $Z(\tilde{\mathbf{p}}_j)$ (where $j \in \{1,\dots,k_d-k_c\}$) on the received $n$ qubits to obtain the eigenvalues $H_\mathcal{C}\mathcal{O}(\mathbf{q}_1+\sum_{m=2}^{y+1} c_m\mathbf{q}_m,-\alpha)^T$. Determine $\alpha$ and the first part of the encoded classical information $\mathbf{c}$ from $H_\mathcal{C}\mathcal{O}(\mathbf{q}_1+\sum_{m=2}^{y+1} c_m\mathbf{q}_m,-\alpha)^T$.
    \item Perform synchronization recovery using $\alpha$.
    \item Measure $Z(\underbracket[0.5pt]{\vphantom{\tilde{\mathbf{q}}_i}\,\mathbf{0}\,}_{\mathclap{a_l+a_r}}|\underbracket[0.5pt]{\,\tilde{\mathbf{q}}_i\,}_{\mathclap{n}})$ and $Z(\underbracket[0.5pt]{\,\tilde{\mathbf{q}}_i\,}_{\mathclap{n}}|\underbracket[0.5pt]{\vphantom{\tilde{\mathbf{q}}_i}\,\mathbf{0}\,}_{\mathclap{a_l+a_r}})$ (where $i \in \{1,\dots,n-k_d\}$) and correct $X$-type errors on the entire block of $n+a_l+a_r$ qubits using an error decoder for the code $\mathcal{Q}_\mathcal{D}$.
    \item Measure $X(\overbracket[0.5pt]{\underbracket[0.5pt]{\,\tilde{\mathbf{q}}_i\,}_{\mathclap{a_l}}}^{\text{last}}|\underbracket[0.5pt]{\,\tilde{\mathbf{q}}_i\,}_{\mathclap{n}}|\overbracket[0.5pt]{\underbracket[0.5pt]{\,\tilde{\mathbf{q}}_i\,}_{\mathclap{a_r}}}^{\text{first}})$ (where $i \in \{1,\dots,n-k_d\}$) and correct $Z$-type errors on the entire block of $n+a_l+a_r$ qubits using an error decoder for the code $\mathcal{Q}_\mathcal{D}$.
\end{enumerate}

We can find that the code distance of $\mathcal{Q}_6$ is $d_d$ by an analysis similar to the one for $\mathcal{Q}_3$. $\mathcal{Q}_6$ that encodes $y$ classical bits has maximum synchronization distance $k_d-k_c-y$, where $y \in \{1,\dots,k_d-k_c-2\}$. 

It should be noted that a synchronizable hybrid subsystem code $\mathcal{Q}_6$ and a synchronizable hybrid code $\mathcal{Q}_4$ of the same $y$ are related by gauge fixing. Starting from $\mathcal{Q}_6$ of a particular $y$, measuring the gauge operators $X(\overbracket[0.5pt]{\underbracket[0.5pt]{\,\tilde{\mathbf{p}}_j\,}_{\mathclap{a_l}}}^{\text{last}}|\underbracket[0.5pt]{\,\tilde{\mathbf{p}}_j\,}_{\mathclap{n}}|\overbracket[0.5pt]{\underbracket[0.5pt]{\,\tilde{\mathbf{p}}_j\,}_{\mathclap{a_r}}}^{\text{first}})$ for all $j \in \{1,\dots,k_d-k_c\}$ results in $\mathcal{Q}_4$ with the same $y$ in which the classical information corresponding to $\mathbf{b}=(b_1,\dots,b_{k_d-k_c})$ is not encoded. The destabilizers obtained via gauge fixing (which are $Z$-type operators) can be used to encode this part of classical information.

If all synchronization distance of $\mathcal{Q}_2$ or $\mathcal{Q}_6$ is sacrificed (which corresponds to the case of $\mathcal{Q}_6$ with $y=k_d-k_c-1$), $X(\mathbf{q}_1)$ can be used to encode a classical bit since protection against synchronization errors is not required. In that case, we obtain a non-synchronizable hybrid subsystem code $\mathcal{Q}_7$ that has $k_d-k_c$ gauge qubits and encodes $k_d-k_c$ classical bits.

\vspace*{0.2cm}
\noindent \textit{$\mathcal{Q}_7$: a hybrid subsystem code}
\vspace*{0.2cm}

\noindent $\mathcal{Q}_7$ is constructed from the initial code $\mathcal{Q}^0_2$, similar to $\mathcal{Q}_2$ and $\mathcal{Q}_6$. In the second step of the encoding procedure of $\mathcal{Q}_7$, classical information is encoded by an operator $X(\sum_{m=1}^{k_d-k_c} c_m\mathbf{q}_m)$ for some binary vector $\mathbf{c}=(c_1,\dots,c_{k_d-k_c}) \in \mathbb{Z}^{k_d-k_c}_2$. This code has $k_d-k_c$ gauge qubits, and $k_d-k_c$ classical bits are encoded into the phases of $Z$-type stabilizer generators. The construction of $\mathcal{Q}_7$ is similar to sacrificing all synchronization distance of $\mathcal{Q}_2$ and using $X(\mathbf{q}_1)$ to encode an additional classical bit.

The stabilizer group of the inner code of $\mathcal{Q}_7$ labeled by $(c_1,\dots,c_{k_d-k_c})$ can be described by the stabilizer generators, 
\begin{align}
    &X(\tilde{\mathbf{q}}_i), \\
    &Z(\tilde{\mathbf{q}}_i), \\
    &Z(\tilde{\mathbf{p}}_j)(-1)^{\tilde{\mathbf{p}}_j\cdot \left(\sum_{m=1}^{k_d-k_c} c_m\mathbf{q}_m\right)}, 
\end{align}
and the gauge group is described by the stabilizer generators, $iI^{\otimes n+a_l+a_r}$, and
\begin{align}
    &X(\tilde{\mathbf{p}}_j), \\
    &Z(\mathbf{q}'_m)(-1)^{\mathbf{q}'_m \cdot \left(\sum_{m=1}^{k_d-k_c} c_m\mathbf{q}_m\right)}, 
\end{align}
where $i \in \{1,\dots,n-k_d\}$, $j \in \{1,\dots,k_d-k_c\}$, and $m \in \{1,\dots,k_d-k_c\}$. Logical $X$ and logical $Z$ operators can be obtained by transforming the anticommuting pairs $\{X(\mathbf{s}_l^x),Z(\mathbf{s}_l^z)\}$ where $l \in \{1,\dots,2k_c-n\}$ by the operations in the encoding procedure. Thus, this code encodes $2k_c-n$ logical qubits.

Error correction on $\mathcal{Q}_7$ can be done by measuring $X(\tilde{\mathbf{q}}_i)$ and $Z(\tilde{\mathbf{q}}_i)$, and using an error decoder for $\mathcal{Q}_\mathcal{D}$. After error correction is done, the encoded classical information can be obtained by measuring $Z(\tilde{\mathbf{p}}_j)$, which provides $H_\mathcal{C}\left(\sum_{m=1}^{k_d-k_c} c_m\mathbf{q}_m\right)^T$. By \cref{lem:message_value}, $\mathbf{c}$ can be uniquely determined. We can show that the code distance of $\mathcal{Q}_7$ is $d_d$.

Lastly, we point out that the hybrid subsystem code $\mathcal{Q}_7$ and the hybrid code $\mathcal{Q}_5$ are related by gauge fixing; $\mathcal{Q}_5$ in which the classical information corresponding to $\mathbf{b}=(b_1,\dots,b_{k_d-k_c})$ is not encoded can be obtained from $\mathcal{Q}_7$ by measuring the gauge operators $X(\tilde{\mathbf{p}}_j)$. The destabilizers obtained via gauge fixing can be used to further encode classical information.

\subsection{Trade-offs between the maximum synchronization distance and the numbers of gauge qubits and encoded classical bits}

In this section, we discuss parts of our construction which lead to the trade-offs between the maximum synchronization distance, the number of gauge qubits, and the number of encoded classical bits.

First, note that the initial code $\mathcal{Q}^0_2$ of a synchronizable subsystem code $\mathcal{Q}_2$ is constructed from a subsystem code $\mathcal{Q}_1=\mathcal{Q}^0_1$ by fixing the gauge operators $Z(\tilde{\mathbf{t}}_j)$ for all $j \in \{1,\dots,k_d-k_c\}$. By \cref{thm:op_pairing}, gauge fixing results in the stabilizer group that contains $Z(\mathbf{v})$ for all $\mathbf{v} \in \mathcal{C}^\perp$. Since $\mathcal{C}^\perp$ is a cyclic code, a cyclic shift of any $\mathbf{v} \in \mathcal{C}^\perp$ is also in $\mathcal{C}^\perp$. We can see that the subgroup generated by $Z(\tilde{\mathbf{p}}_j)$ has cyclic structure, and this is possible because $Z(\tilde{\mathbf{t}}_j)$ for all $j$ are fixed. The destabilizers of $X$-type obtained during gauge fixing can be used to encode phases to $Z(\tilde{\mathbf{p}}_j)$. 

To gain synchronization recovery property on $\mathcal{Q}_2$, $X(\mathbf{q}_1)$ is applied as a marker to ``break'' the cyclic symmetry; this is similar to marking a point on the circumference of a circle so that its orientation can be determined if rotation is performed. Misalignment corresponds to a transformation of the phase parameter $\mathbf{q}_1$ to some $\mathbf{q}_m \neq \mathbf{q}_1$. 
The number of linearly independent $\mathbf{q}_m$ which are also linearly independent from the generators of $\mathcal{C}$ is $k_d-k_c$ (see the proof of \cref{lem:sync_value}). Therefore, the maximum synchronization distance of $\mathcal{Q}_2$ is $k_d-k_c$, which is equal to the number of gauge qubits that are fixed. Similar argument is also applicable to the maximum synchronization distance of a synchronizable hybrid code $\mathcal{Q}_3$.

By further fixing $X(\tilde{\mathbf{t}}_j)$ for all $j \in \{1,\dots,k_d-k_c\}$, the initial code $\mathcal{Q}^0_3$ of $\mathcal{Q}_3$ is obtained from $\mathcal{Q}^0_2$. The subgroup generated by $X(\tilde{\mathbf{p}}_j)$ (or $X(\tilde{\mathbf{t}}_j)$) also has cyclic structure. Unfortunately, due to asymmetry caused by CNOT gates in the encoding procedure, the phases of $X$-type and $Z$-type stabilizer generators cannot be used for synchronization recovery at the same time, thus the synchronization distance cannot be further increased. Also, the number of linearly independent $\mathbf{q}_m$ is still the same. Nevertheless, we can use the operator $Z(\mathbf{q}_m)$ for all $m \in \{1,\dots,k_d-k_c\}$ to encode phases to $X(\tilde{\mathbf{p}}_j)$. Therefore, $\mathcal{Q}_3$ can encode $k_d-k_c$ classical bits, which is equal to the number of additional gauge qubits that are fixed.

For a synchronizable hybrid code $\mathcal{Q}_4$ that encodes more classical bits than $\mathcal{Q}_3$, the marker $\mathbf{q}_1$ for synchronization recovery and the vector $\sum_{m=2}^{y+1} c_m\mathbf{q}_m$ representing classical information are encoded in the phases of $Z(\tilde{\mathbf{p}}_j)$. Consider the case that $y=1$ in which the initial phase parameter is $\mathbf{q}_1+c_2\mathbf{q}_2$. Misalignment transforms this parameter to $\mathbf{q}_m+c_2\mathbf{q}_{m+1}$ for some $m \neq 1$. 
The number of cyclic shifts of $\mathbf{q}_1+c_2\mathbf{q}_2$ (including the original one) which are guaranteed to be linearly independent from one another and from the generators of $\mathcal{C}$ for any choice of $c_2$ is $k_d-k_c-1$ (see the proof of \cref{lem:sync_message_value}). Therefore, the maximum synchronization distance in this case is $k_d-k_c-1$. Similarly, we can show that encoding $y$ more classical bits results in decreasing the maximum synchronization distance by $y$. A similar argument is also applicable to the relationship between the maximum synchronization distance of a synchronizable subsystem code $\mathcal{Q}_2$ and a synchronizable hybrid subsystem code $\mathcal{Q}_6$.

As previously mentioned in the constructions of a hybrid code $\mathcal{Q}_5$ and a hybrid subsystem code $\mathcal{Q}_7$, when sacrificing all synchronization distance of a synchronizable code, $X(\mathbf{q}_1)$ can be used to encode an additional classical bit as protection against synchronization errors is not required.

Trade-offs between the maximum synchronization distance, the number of gauge qubits, and the number of encoded classical bits are summarized in the corollary below.

\begin{corollary} \label{cor:QSC_tradeoff}
Let $r,m,$ and $d_\mathrm{sync,max}$ denote the number of gauge qubits, the number of encoded classical bits, and the maximum synchronization distance, respectively.
\begin{enumerate}[leftmargin=1cm] 
    \item For any synchronizable code in \cref{thm:QSC_unified} (in which $d_\mathrm{sync,max}>1$),
    \begin{equation}
        r+m+d_\mathrm{sync,max} = 2(k_d-k_c).
    \end{equation}
    \item For any non-synchronizable code in \cref{thm:QSC_unified} (in which $d_\mathrm{sync,max}=1$),
    \begin{equation}
        r+m+d_\mathrm{sync,max} = 2(k_d-k_c)+1.
    \end{equation}
\end{enumerate}
\end{corollary}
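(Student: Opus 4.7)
The plan is to verify the two identities by a direct case analysis over the seven code families $\mathcal{Q}_1,\dots,\mathcal{Q}_7$ enumerated in \cref{thm:QSC_unified}, since each of them already comes with explicit values for $r$, $m$, and $d_{\mathrm{sync,max}}$. I would first split the seven codes according to whether they are synchronizable ($\mathcal{Q}_2,\mathcal{Q}_3,\mathcal{Q}_4,\mathcal{Q}_6$) or non-synchronizable ($\mathcal{Q}_1,\mathcal{Q}_5,\mathcal{Q}_7$), since these are exactly the two cases of the corollary.

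For the synchronizable cases, I would simply read off the parameters from \cref{thm:QSC_unified} and add them: $\mathcal{Q}_2$ contributes $(k_d-k_c) + 0 + (k_d-k_c)$, $\mathcal{Q}_3$ contributes $0 + (k_d-k_c) + (k_d-k_c)$, $\mathcal{Q}_4$ contributes $0 + (k_d-k_c+y) + (k_d-k_c-y)$, and $\mathcal{Q}_6$ contributes $(k_d-k_c) + y + (k_d-k_c-y)$. In every case the sum telescopes to $2(k_d-k_c)$, establishing the first identity. For the non-synchronizable cases, an analogous reading gives $\mathcal{Q}_1\colon 2(k_d-k_c)+0+1$, $\mathcal{Q}_5\colon 0+2(k_d-k_c)+1$, and $\mathcal{Q}_7\colon (k_d-k_c)+(k_d-k_c)+1$, each equal to $2(k_d-k_c)+1$, which yields the second identity.

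The only substantive content that is not pure arithmetic is the justification that the parameter triples cited in \cref{thm:QSC_unified} are really the correct ones; this is however already handled by the constructions and the maximality discussion at the end of \cref{sec:SHSC_main}. In particular, the counts of $r$ and $m$ follow from the explicit stabilizer, gauge and translation groups written out for each $\mathcal{Q}_i$, and the value of $d_{\mathrm{sync,max}}$ follows from \cref{lem:sync_value} (for $\mathcal{Q}_2,\mathcal{Q}_3$) and \cref{lem:sync_message_value} (for $\mathcal{Q}_4,\mathcal{Q}_6$), which show that the corresponding lookup tables are well-defined precisely when $a_l+a_r<k_d-k_c$ or $a_l+a_r<k_d-k_c-y$.

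The only place where a small extra remark is needed, and which I view as the main subtlety rather than an obstacle, is the ``$+1$'' that appears in the non-synchronizable case. This comes from the fact that when all synchronization distance is sacrificed, the operator $X(\mathbf{q}_1)$ that would otherwise serve as a symmetry-breaking marker is freed up and is used either to encode an additional classical bit (in $\mathcal{Q}_5$ and $\mathcal{Q}_7$) or is absorbed into the gauge group structure (in $\mathcal{Q}_1$), rather than contributing to $d_{\mathrm{sync,max}}$; I would point this out explicitly to make clear why the two identities differ by exactly one, matching the interpretation already given at the end of \cref{subsec:SHSC_hybrid,subsec:SHSC_subsystem_hybrid}.
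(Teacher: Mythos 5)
Your proposal is correct and matches the paper's own treatment: the corollary is justified there exactly as a summary of the explicit parameter triples $(r,m,d_{\mathrm{sync,max}})$ established for $\mathcal{Q}_1,\dots,\mathcal{Q}_7$, with the concluding remark attributing the ``$+1$'' in the non-synchronizable case to $X(\mathbf{q}_1)$ being freed from its role as a cyclic-symmetry-breaking marker. Your case-by-case arithmetic and your appeal to \cref{lem:sync_value} and \cref{lem:sync_message_value} for the synchronization distances reproduce that argument faithfully.
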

The difference in the total numbers come from the fact that for any synchronizable code in our construction, $X(\mathbf{q}_1)$ is used in the encoding procedure to break the cyclic symmetry in order to gain the synchronization recovery property, and thus cannot be used to encode a classical bit of information.

\section{General methods to construct subsystem codes, hybrid codes, and hybrid subsystem codes of CSS type from classical codes} \label{sec:hybrid_subsystem_construction}

\cref{thm:QSC_unified} provides a method to construct quantum codes in the synchronizable hybrid subsystem code family, which also includes a non-synchronizable version of hybrid codes, subsystem codes, and hybrid subsystem codes. Putting aside the property to correct synchronizable errors, one may ask if there is any more general way to construct such codes. In this section, we state a theorem similar to the CSS construction that provides a method to construct subsystem codes of CSS type from classical codes, then propose new theorems that construct hybrid codes and hybrid subsystem codes of CSS type. Note that the classical codes required by these constructions are not limited to classical cyclic codes.  

We start by providing a construction of subsystem codes of CSS type, which has been studied in \cite{Aly2006, Liu2024}. Here, $\mathcal{C}_1+\mathcal{C}_2$ denotes the code $\{\mathbf{v}_1+\mathbf{v}_2\;|\;\mathbf{v}_1\in \mathcal{C}_1, \mathbf{v}_2\in \mathcal{C}_2\}$.

\begin{theorem} \label{thm:CSS_subsystem}
     Let $\mathcal{C}_x$ and $\mathcal{C}_z$ be classical linear codes with parameters $\left[n,k_x\right]$ and $\left[n,k_z\right]$ respectively. Suppose that $\mathcal{C}_x+\mathcal{C}_z^\perp$ and $\mathcal{C}_z+\mathcal{C}_x^\perp$ are classical linear codes with parameters $\left[n,r_x\right]$ and $\left[n,r_z\right]$ respectively. Then there exists a quantum subsystem code with gauge group $\left\langle iI^{\otimes n},X(\mathcal{C}_z^\perp),Z(\mathcal{C}_x^\perp)\right\rangle$, stabilizer group $\left\langle X(\mathcal{C}_x\cap \mathcal{C}_z^\perp),Z(\mathcal{C}_z\cap \mathcal{C}_x^\perp)\right\rangle$, and parameters $\codepar{n,k,r,d}$, where 
    \begin{align*}
        k & = r_x+k_z-n = r_z+k_x-n, \\
        r & = r_x-k_x = r_z-k_z, \\
        d_x & = \min\wt\left(\left(\mathcal{C}_x+\mathcal{C}_z^\perp\right)\setminus \mathcal{C}_z^\perp\right), \\
        d_z & = \min\wt\left(\left(\mathcal{C}_z+\mathcal{C}_x^\perp\right)\setminus \mathcal{C}_x^\perp\right), \\
        d & =\min\left\{d_x, d_z\right\}.
    \end{align*}
\end{theorem}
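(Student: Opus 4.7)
The plan is to verify, in sequence, the four ingredients that identify $(\mathcal{G},\mathcal{S})$ as an $\codepar{n,k,r,d}$ subsystem code in the sense of \cref{subsubsec:subsystem_codes}: (i) $\mathcal{S}$ is (the non-phase part of) the center $\mathcal{G}\cap C(\mathcal{G})$; (ii) the displayed formulas for $k$ and $r$ correctly count the logical and gauge qubits; (iii) the pure-$X$ and pure-$Z$ dressed logical distances are the stated $d_x$ and $d_z$; and (iv) the overall minimum distance is $\min\{d_x,d_z\}$.

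Step (i) is a direct commutator computation. Every element of $\mathcal{G}$ decomposes modulo phase as $X(\mathbf{u})Z(\mathbf{v})$ with $\mathbf{u}\in\mathcal{C}_z^\perp$ and $\mathbf{v}\in\mathcal{C}_x^\perp$. Since $X(\mathbf{a})$ and $Z(\mathbf{b})$ commute iff $\mathbf{a}\cdot\mathbf{b}=0$, such a product is central in $\mathcal{G}$ iff $\mathbf{u}\in(\mathcal{C}_x^\perp)^\perp=\mathcal{C}_x$ and $\mathbf{v}\in(\mathcal{C}_z^\perp)^\perp=\mathcal{C}_z$, giving $\langle iI^{\otimes n},X(\mathcal{C}_x\cap\mathcal{C}_z^\perp),Z(\mathcal{C}_z\cap\mathcal{C}_x^\perp)\rangle=\langle iI^{\otimes n},\mathcal{S}\rangle$ for the center, exactly as claimed.

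For step (ii), I would combine inclusion-exclusion with the intersection-sum duality $(\mathcal{A}\cap\mathcal{B})^\perp=\mathcal{A}^\perp+\mathcal{B}^\perp$. The duality gives $\dim(\mathcal{C}_x\cap\mathcal{C}_z^\perp)=n-\dim(\mathcal{C}_x^\perp+\mathcal{C}_z)=n-r_z$ and symmetrically $\dim(\mathcal{C}_z\cap\mathcal{C}_x^\perp)=n-r_x$. Inclusion-exclusion applied to the first intersection gives a second expression $\dim(\mathcal{C}_x\cap\mathcal{C}_z^\perp)=k_x+n-k_z-r_x$; matching it against $n-r_z$ forces $r_x-k_x=r_z-k_z$, so $r$ is well-defined. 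The total number of independent stabilizer generators is then $s=2n-r_x-r_z$, while the $X$- and $Z$-type gauge generators lie in disjoint sectors of the Pauli group and hence contribute $(n-k_z)+(n-k_x)=2n-k_x-k_z$ independent generators in $\mathcal{G}$. Using the standard subsystem-code identities $n-k-r=s$ and $n-k+r=$ (gauge-generator count) and solving then yields $k=r_x+k_z-n=r_z+k_x-n$ and $r=r_x-k_x=r_z-k_z$.

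For (iii)--(iv), the CSS splitting reduces the distance to the two sectors. An $X$-type operator $X(\mathbf{u})$ lies in $C(\mathcal{S})$ iff $\mathbf{u}$ annihilates every element of $\mathcal{C}_z\cap\mathcal{C}_x^\perp$, i.e.\ $\mathbf{u}\in(\mathcal{C}_z\cap\mathcal{C}_x^\perp)^\perp=\mathcal{C}_x+\mathcal{C}_z^\perp$, and lies in $\mathcal{G}$ iff $\mathbf{u}\in\mathcal{C}_z^\perp$; hence the minimum weight of a nontrivial pure-$X$ dressed logical is $d_x=\min\wt((\mathcal{C}_x+\mathcal{C}_z^\perp)\setminus\mathcal{C}_z^\perp)$, and the $Z$-type case is symmetric. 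For any mixed-type $L=X(\mathbf{u})Z(\mathbf{v})\in C(\mathcal{S})\setminus\mathcal{G}$ the CSS decomposition forces $X(\mathbf{u}),Z(\mathbf{v})\in C(\mathcal{S})$ individually while at least one lies outside $\mathcal{G}$, and the bound $\wt(L)\geq\max\{\wt(X(\mathbf{u})),\wt(Z(\mathbf{v}))\}\geq\min\{d_x,d_z\}$ --- saturated by pure-type logicals --- gives $d=\min\{d_x,d_z\}$. The main obstacle is purely bookkeeping: the duality $(\mathcal{A}\cap\mathcal{B})^\perp=\mathcal{A}^\perp+\mathcal{B}^\perp$ and the inclusion-exclusion identity must be applied with consistent orientation across the four codes $\mathcal{C}_x,\mathcal{C}_z,\mathcal{C}_x^\perp,\mathcal{C}_z^\perp$, but no structural insight beyond standard linear algebra is required.
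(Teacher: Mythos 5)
Your proof is correct. Note that the paper does not actually prove \cref{thm:CSS_subsystem} itself --- it is quoted as a known result from the cited references, and the appendix only proves the hybrid analogues \cref{thm:CSS_hybrid,thm:CSS_hybrid_subsystem}. That said, your argument is exactly the technique the paper uses for those analogues: the center computation via $X(\mathbf{u})Z(\mathbf{v})$ commuting with the pure-type generators, the dimension count via the duality $\left(\mathcal{A}\cap\mathcal{B}\right)^\perp=\mathcal{A}^\perp+\mathcal{B}^\perp$ (the paper's hybrid-subsystem proof uses precisely $\lvert\mathcal{C}_x\cap\mathcal{C}_z^\perp\rvert=2^{n-r_z}$), and the reduction of the dressed distance to the two pure-type sectors by the observation that a mixed operator has weight at least that of its heavier pure-type factor. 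The one place to be slightly more careful in a write-up is the final inequality $\max\{\wt(X(\mathbf{u})),\wt(Z(\mathbf{v}))\}\geq\min\{d_x,d_z\}$: it holds because at least one of the two factors is itself a nontrivial dressed logical of its type (which you do note), not as a general fact about the max; spelling out that case split would close the argument cleanly.
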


Next, we propose constructions of hybrid and hybrid subsystem codes of CSS type.

\begin{theorem} \label{thm:CSS_hybrid}
    Let $\mathcal{C}_x$ and $\mathcal{C}_z$ be classical linear codes with parameters $\left[n,k_x\right]$ and $\left[n,k_z\right]$ respectively. Let $\mathcal{D}_x$ and $\mathcal{D}_z$ be supercodes of $\mathcal{C}_x$ and $\mathcal{C}_z$ (i.e., $\mathcal{C}_x\subseteq \mathcal{D}_x$ and $\mathcal{C}_z\subseteq \mathcal{D}_z$) with parameters $\left[n,m_x\right]$ and $\left[n,m_z\right]$ respectively. Let $\mathcal{C}_z^\perp\subseteq \mathcal{C}_x$. Then there exists a hybrid stabilizer code with inner stabilizer group $\left\langle X(\mathcal{C}_z^\perp),Z(\mathcal{C}_x^\perp)\right\rangle$, outer stabilizer group $\left\langle X(\mathcal{D}_z^\perp),Z(\mathcal{D}_x^\perp)\right\rangle$, and parameters $\codepar{n,k\!:\!m,d}$, where 
    \begin{align*}
        k & = k_x+k_z-n, \\
        m & = m_x+m_z-k_x-k_z,
    \end{align*}
    \vspace*{-1cm}
    \begin{align*}
        d_x & = \min\wt\left(\mathcal{D}_x\setminus \mathcal{C}_z^\perp\right), \\
        d_z & = \min\wt\left(\mathcal{D}_z\setminus \mathcal{C}_x^\perp\right), \\
        d & =\min\left\{d_x, d_z\right\}.
    \end{align*}
\end{theorem}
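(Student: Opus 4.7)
The plan is to directly verify that the data $\mathcal{S}_0=\langle X(\mathcal{C}_z^\perp),Z(\mathcal{C}_x^\perp)\rangle$ and $\mathcal{S}=\langle X(\mathcal{D}_z^\perp),Z(\mathcal{D}_x^\perp)\rangle$, interpreted respectively as the inner and outer stabilizer groups, fit the hybrid-code framework of \cref{subsubsec:hybrid_codes}, and then to count parameters. The inner group is a legitimate CSS stabilizer by \cref{thm:CSS} applied to the pair $\mathcal{C}_z^\perp\subseteq\mathcal{C}_x$. For the outer group, the chain $\mathcal{D}_z^\perp\subseteq\mathcal{C}_z^\perp\subseteq\mathcal{C}_x\subseteq\mathcal{D}_x$ (immediate from the hypotheses) yields $\mathbf{u}\cdot\mathbf{v}=0$ whenever $\mathbf{u}\in\mathcal{D}_z^\perp$ and $\mathbf{v}\in\mathcal{D}_x^\perp$, so $X(\mathcal{D}_z^\perp)$ and $Z(\mathcal{D}_x^\perp)$ commute and $\mathcal{S}$ is also a valid CSS stabilizer. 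The inclusions $\mathcal{D}_z^\perp\subseteq\mathcal{C}_z^\perp$ and $\mathcal{D}_x^\perp\subseteq\mathcal{C}_x^\perp$ then make $\mathcal{S}$ a subgroup of $\mathcal{S}_0$, as required.

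I would read off $k=k_x+k_z-n$ from \cref{thm:CSS} applied to the inner code $\mathcal{S}_0$. For $m$, the quotient $\mathcal{S}_0/\mathcal{S}$ splits into its $X$- and $Z$-parts and contributes
\begin{equation*}
\dim(\mathcal{C}_z^\perp/\mathcal{D}_z^\perp)+\dim(\mathcal{C}_x^\perp/\mathcal{D}_x^\perp)=(m_z-k_z)+(m_x-k_x)=m
\end{equation*}
independent classical stabilizer generators. To complete the hybrid-code data I would exhibit the translation subgroup $\mathcal{T}$ via a symplectic-extension argument: the bilinear form $(\mathbf{u},\mathbf{w})\mapsto\mathbf{u}\cdot\mathbf{w}$ descends to a non-degenerate pairing between $\mathcal{D}_x/\mathcal{C}_x$ and $\mathcal{C}_x^\perp/\mathcal{D}_x^\perp$ (both of dimension $m_x-k_x$), and similarly between $\mathcal{D}_z/\mathcal{C}_z$ and $\mathcal{C}_z^\perp/\mathcal{D}_z^\perp$. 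Choosing dual bases on each side produces $X$- and $Z$-type translation operators that each anticommute with exactly one generator of $\mathcal{S}_C$ and commute with everything in $\mathcal{S}$, yielding a valid $\mathcal{T}$ with $C(\mathcal{S})=\langle\mathcal{L},\mathcal{T},\mathcal{S}_C,\mathcal{S}_Q\rangle$.

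For the distance formula \cref{eq:dist_hybrid}, I would observe that $C(\mathcal{S})$ consists (up to phase) of operators $X(\mathbf{u})Z(\mathbf{v})$ with $\mathbf{u}\in\mathcal{D}_x$ and $\mathbf{v}\in\mathcal{D}_z$, while $\langle iI^{\otimes n},\mathcal{S}_0\rangle$ picks out the subset with $\mathbf{u}\in\mathcal{C}_z^\perp$ \emph{and} $\mathbf{v}\in\mathcal{C}_x^\perp$. A nontrivial element must therefore satisfy $\mathbf{u}\notin\mathcal{C}_z^\perp$ or $\mathbf{v}\notin\mathcal{C}_x^\perp$; since $\wt(X(\mathbf{u})Z(\mathbf{v}))=|\mathrm{supp}(\mathbf{u})\cup\mathrm{supp}(\mathbf{v})|\geq\max(\wt(\mathbf{u}),\wt(\mathbf{v}))$, this weight is bounded below by $\min(d_x,d_z)$, and equality is attained by a pure $X$- or $Z$-type operator of the stated form.

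The main step I expect to require care is the construction and counting of $\mathcal{T}$: one must verify simultaneously that each translation generator pairs with the right classical stabilizer, commutes with all of $\mathcal{S}$, and is independent modulo $\mathcal{S}_0$. Fortunately the CSS structure decouples the $X$- and $Z$-parts, so this reduces to the two independent non-degenerate pairings described above, each amenable to a direct dual-basis construction; the rest of the argument is a routine dimension count and a standard CSS weight comparison.
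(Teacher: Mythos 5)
Your proposal is correct and follows essentially the same route as the paper's proof: verify that both stabilizer groups are Abelian and nested, identify the centralizers $C(\mathcal{S}_0)=\left\langle iI^{\otimes n},X(\mathcal{C}_x),Z(\mathcal{C}_z)\right\rangle$ and $C(\mathcal{S})=\left\langle iI^{\otimes n},X(\mathcal{D}_x),Z(\mathcal{D}_z)\right\rangle$, count dimensions, and use the CSS decoupling of $X$- and $Z$-parts for the distance. The only cosmetic difference is that you count $m$ from the classical-stabilizer side $\dim(\mathcal{C}_z^\perp/\mathcal{D}_z^\perp)+\dim(\mathcal{C}_x^\perp/\mathcal{D}_x^\perp)$ and make the translation group explicit via the non-degenerate pairings between $\mathcal{D}_x/\mathcal{C}_x$ and $\mathcal{C}_x^\perp/\mathcal{D}_x^\perp$ (and the $z$-analogue), whereas the paper counts the translation cosets $\left\lvert\mathcal{D}_x\right\rvert/\left\lvert\mathcal{C}_x\right\rvert$ directly; these are the same dimension count viewed from dual sides, and your dual-basis construction is if anything slightly more explicit than the paper's.
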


\begin{theorem} \label{thm:CSS_hybrid_subsystem}
     Let $\mathcal{C}_x$ and $\mathcal{C}_z$ be classical linear codes with parameters $\left[n,k_x\right]$ and $\left[n,k_z\right]$ respectively. Let $\mathcal{D}_x$ and $\mathcal{D}_z$ be supercodes of $\mathcal{C}_x$ and $\mathcal{C}_z$ (i.e., $\mathcal{C}_x\subseteq \mathcal{D}_x$ and $\mathcal{C}_z\subseteq \mathcal{D}_z$) with parameters $\left[n,m_x\right]$ and $\left[n,m_z\right]$ respectively, such that $\left(\mathcal{D}_x\setminus \mathcal{C}_x\right)\cap \mathcal{C}_z^\perp=\left(\mathcal{D}_z\setminus \mathcal{C}_z\right)\cap \mathcal{C}_x^\perp=\emptyset$. Suppose that $\mathcal{C}_x+\mathcal{C}_z^\perp$ and $\mathcal{C}_z+\mathcal{C}_x^\perp$ are classical linear codes with parameters $\left[n,r_x\right]$ and $\left[n,r_z\right]$ respectively. Then there exists a hybrid subsystem code with inner gauge group $\left\langle iI^{\otimes n},X(\mathcal{C}_z^\perp),Z(\mathcal{C}_x^\perp)\right\rangle$, outer gauge group $\left\langle iI^{\otimes n},X(\mathcal{D}_z^\perp),Z(\mathcal{D}_x^\perp)\right\rangle$, inner stabilizer group $\left\langle X(\mathcal{C}_x\cap \mathcal{C}_z^\perp),Z(\mathcal{C}_z\cap \mathcal{C}_x^\perp)\right\rangle$, outer stabilizer group $\left\langle X(\mathcal{D}_x\cap \mathcal{D}_z^\perp),Z(\mathcal{D}_z\cap \mathcal{D}_x^\perp)\right\rangle$, and parameters $\codepar{n,k\!:\!m,r,d}$, where 
    \begin{align*}
        k & = r_x+k_z-n = r_z+k_x-n, \\
        m & = m_x+m_z-k_x-k_z, \\
        r & = r_x-k_x = r_z-k_z, \\
        d_x & = \min\wt\left(\left(\mathcal{D}_x+\mathcal{D}_z^\perp\right)\setminus \mathcal{C}_z^\perp\right), \\
        d_z & = \min\wt\left(\left(\mathcal{D}_z+\mathcal{D}_x^\perp\right)\setminus \mathcal{C}_x^\perp\right), \\
        d & =\min\left\{d_x, d_z\right\}.
    \end{align*}
\end{theorem}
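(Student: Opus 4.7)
The plan is to combine the CSS subsystem construction (\cref{thm:CSS_subsystem}) applied to the inner pair $(\mathcal{C}_x,\mathcal{C}_z)$ with a hybrid-style decomposition on the pair $(\mathcal{D}_x,\mathcal{D}_z)$, reconciling the two via the set-difference hypotheses.

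First I would establish $\mathcal{G}\subseteq\mathcal{G}_0$, which is immediate from $\mathcal{D}_x^\perp\subseteq\mathcal{C}_x^\perp$ and $\mathcal{D}_z^\perp\subseteq\mathcal{C}_z^\perp$. A direct centralizer computation gives $C(\mathcal{G}_0)=\{X(\mathbf{u})Z(\mathbf{v}):\mathbf{u}\in\mathcal{C}_x,\,\mathbf{v}\in\mathcal{C}_z\}$, whence $\mathcal{S}_0=\mathcal{G}_0\cap C(\mathcal{G}_0)=\langle X(\mathcal{C}_x\cap\mathcal{C}_z^\perp),Z(\mathcal{C}_z\cap\mathcal{C}_x^\perp)\rangle$; the analogous calculation for $\mathcal{G}$ gives $\mathcal{S}=\langle X(\mathcal{D}_x\cap\mathcal{D}_z^\perp),Z(\mathcal{D}_z\cap\mathcal{D}_x^\perp)\rangle$. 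The key use of the set-difference hypotheses is to establish $\mathcal{D}_x\cap\mathcal{C}_z^\perp=\mathcal{C}_x\cap\mathcal{C}_z^\perp$ (and the symmetric statement), which combined with $\mathcal{D}_z^\perp\subseteq\mathcal{C}_z^\perp$ further yields $\mathcal{D}_x\cap\mathcal{D}_z^\perp=\mathcal{C}_x\cap\mathcal{D}_z^\perp$.

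Next I would verify the hybrid decomposition $\mathcal{G}_0=\langle\mathcal{G},\mathcal{S}_C\rangle$ and produce anticommuting translation operators $\mathcal{T}$. Using the centralizer computations and the set-difference conditions, one checks $\mathcal{G}_0\cap C(\mathcal{G})=\mathcal{S}_0$ (since $X(\mathbf{a})$ with $\mathbf{a}\in\mathcal{C}_z^\perp$ lies in $C(\mathcal{G})$ iff $\mathbf{a}\in\mathcal{D}_x$, which under the hypothesis forces $\mathbf{a}\in\mathcal{C}_x$). The natural map $(\mathcal{G}_0\cap C(\mathcal{G}))/\mathcal{S}\hookrightarrow\mathcal{G}_0/\mathcal{G}$ is therefore injective, and a standard inclusion--exclusion count shows both sides have dimension $m=m_x+m_z-k_x-k_z$, so the map is an isomorphism and $\mathcal{S}_C$ may be chosen as coset representatives of $\mathcal{S}_0/\mathcal{S}$ inside $\mathcal{S}_0$. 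For each $X(\mathbf{a})\in\mathcal{S}_C$ coming from $\mathbf{a}\in(\mathcal{C}_x\cap\mathcal{C}_z^\perp)\setminus\mathcal{D}_z^\perp$, a partner $Z(\mathbf{w})\in C(\mathcal{G})$ with $\mathbf{a}\cdot\mathbf{w}=1$ exists because $\mathbf{a}\notin\mathcal{D}_z^\perp=(\mathcal{D}_z)^\perp$; symplectic Gram--Schmidt then yields $m$ disjoint anticommuting pairs, with the $Z$-type half handled symmetrically. Applying \cref{thm:CSS_subsystem} to the inner pair gives $k=r_x+k_z-n$ logical qubits and $r=r_x-k_x$ gauge qubits, with the consistency $r_x-k_x=r_z-k_z$ following from the general identity $(\mathcal{C}_x\cap\mathcal{C}_z^\perp)^\perp=\mathcal{C}_x^\perp+\mathcal{C}_z$.

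Finally, from \cref{eq:dist_hybrid_subsys} we have $d=\min\wt(C(\mathcal{S})\setminus\mathcal{G}_0)$. Dualising the stabilizer supports yields $C(\mathcal{S})=\{X(\mathbf{u})Z(\mathbf{v}):\mathbf{u}\in\mathcal{D}_x+\mathcal{D}_z^\perp,\,\mathbf{v}\in\mathcal{D}_z+\mathcal{D}_x^\perp\}$, while membership in $\mathcal{G}_0$ requires $\mathbf{u}\in\mathcal{C}_z^\perp$ and $\mathbf{v}\in\mathcal{C}_x^\perp$. By the standard CSS reduction, multiplying any dressed logical $X(\mathbf{u})Z(\mathbf{v})$ by an appropriate element of $\mathcal{G}_0$ to zero out either the $X$ or $Z$ part without increasing the weight, one may assume a minimum-weight nontrivial dressed logical is pure $X$ or pure $Z$, giving $d_x=\min\wt((\mathcal{D}_x+\mathcal{D}_z^\perp)\setminus\mathcal{C}_z^\perp)$, $d_z=\min\wt((\mathcal{D}_z+\mathcal{D}_x^\perp)\setminus\mathcal{C}_x^\perp)$, and $d=\min\{d_x,d_z\}$. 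I expect the main obstacle to be this pure-$X$/pure-$Z$ reduction in the subsystem setting: one must confirm that after purification the operator still fails to lie in $\mathcal{G}_0$, which is where the set-difference hypotheses enter by preventing the non-trivial tensor factor from being absorbed into $\mathcal{C}_z^\perp$ or $\mathcal{C}_x^\perp$.
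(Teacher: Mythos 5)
Your proposal is correct and follows essentially the same route as the paper's proof: verify the CSS group structure of the inner/outer stabilizer and gauge groups, count $m$, $r$, and $k$ from the dimensions of the classical codes (with the set-difference hypotheses ensuring the classical logical operators are not gauge operators), and obtain the distance from $\min\wt\left(C(\mathcal{S})\setminus\mathcal{G}_0\right)$ via the usual pure-$X$/pure-$Z$ reduction. Two small remarks: the paper's proof also derives the simplified distance formula \cref{eq:dist_hybrid_subsys} from the Dauphinais et al.\ definition \cref{eq:dauphinais_min_dist} (the footnote in \cref{subsubsec:hybrid_subsystem_codes} defers that derivation to this very proof, so it should not be taken as already established), and the pure-type reduction proceeds by dropping one tensor factor of $X(\mathbf{u})Z(\mathbf{v})$ --- which cannot increase the weight and leaves a non-gauge element because $\mathcal{G}_0$ is generated by pure-type operators --- rather than by multiplying by an element of $\mathcal{G}_0$ to ``zero out'' a part, since such a gauge element need not exist when $\mathbf{u}\notin\mathcal{C}_z^\perp$.
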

Proofs of \cref{thm:CSS_hybrid,thm:CSS_hybrid_subsystem} are provided in \cref{subsec:CSS_hybrid_proof,subsec:css_hybrid_subsystem_proof}, respectively.

We can verify that the constructions of the codes $\mathcal{Q}_1$, $\mathcal{Q}_5$, and $\mathcal{Q}_7$ in \cref{thm:QSC_unified} are consistent with \cref{thm:CSS_subsystem,thm:CSS_hybrid,thm:CSS_hybrid_subsystem} by observing the following: Let $\tilde{\mathbf{q}}_i,\tilde{\mathbf{t}}_j,\mathbf{s}_l^x,\mathbf{s}_l^z,\mathbf{t}_m^x,\mathbf{t}_m^z$ be defined as in \cref{thm:op_pairing}, where $i \in \{1,\dots,n-k_d\}$, $j \in \{1,\dots,k_d-k_c\}$, $l \in \{1,\dots,2k_c-n\}$, $m \in \{1,\dots,k_d-k_c\}$.
\begin{enumerate}
    \item By applying \cref{thm:CSS_subsystem} to the codes $\mathcal{C}_x=\langle \tilde{\mathbf{q}}_i,\mathbf{s}_l^x \rangle$, $\mathcal{C}_x^\perp=\langle \tilde{\mathbf{q}}_i,\tilde{\mathbf{t}}_j,\mathbf{t}_m^z \rangle$, $\mathcal{C}_z=\langle \tilde{\mathbf{q}}_i,\mathbf{s}_l^z \rangle$, $\mathcal{C}_z^\perp=\langle \tilde{\mathbf{q}}_i,\tilde{\mathbf{t}}_j,\mathbf{t}_m^x \rangle$, the subsystem code $\mathcal{Q}_1$ in \cref{thm:QSC_unified} can be obtained.
    \item By applying \cref{thm:CSS_hybrid} to the codes $\mathcal{C}_x=\mathcal{C}_z=\mathcal{C}=\langle \tilde{\mathbf{q}}_i,\tilde{\mathbf{t}}_j,\mathbf{s}_l^x \rangle = \langle \tilde{\mathbf{q}}_i,\tilde{\mathbf{t}}_j,\mathbf{s}_l^z \rangle$, $\mathcal{C}_x^\perp=\mathcal{C}_z^\perp=\mathcal{C}^\perp=\langle \tilde{\mathbf{q}}_i,\tilde{\mathbf{t}}_j \rangle$, $\mathcal{D}_x = \mathcal{D}_z = \mathcal{D} = \langle \tilde{\mathbf{q}}_i,\tilde{\mathbf{t}}_j,\mathbf{s}_l^x,\mathbf{t}_m^x \rangle = \langle \tilde{\mathbf{q}}_i,\tilde{\mathbf{t}}_j,\mathbf{s}_l^z,\mathbf{t}_m^z \rangle$, and $\mathcal{D}_x^\perp = \mathcal{D}_z^\perp = \mathcal{D}^\perp = \langle \tilde{\mathbf{q}}_i \rangle$, the hybrid code $\mathcal{Q}_5$ in \cref{thm:QSC_unified} can be obtained.
    \item By applying \cref{thm:CSS_hybrid_subsystem} to the codes $\mathcal{C}_x = \langle \tilde{\mathbf{q}}_i,\mathbf{s}_l^x \rangle$, $\mathcal{C}_x^\perp = \langle \tilde{\mathbf{q}}_i,\tilde{\mathbf{t}}_j,\mathbf{t}_m^z \rangle$, $\mathcal{C}_z = \langle \tilde{\mathbf{q}}_i,\tilde{\mathbf{t}}_j,\mathbf{s}_l^z \rangle$, $\mathcal{C}_z^\perp = \langle \tilde{\mathbf{q}}_i,\tilde{\mathbf{t}}_j \rangle$, $\mathcal{D}_x = \langle \tilde{\mathbf{q}}_i,\mathbf{s}_l^x,\mathbf{t}_m^x \rangle$, $\mathcal{D}_x^\perp = \langle \tilde{\mathbf{q}}_i,\mathbf{t}_m^z \rangle$, $\mathcal{D}_z = \langle \tilde{\mathbf{q}}_i,\tilde{\mathbf{t}}_j,\mathbf{s}_l^z \rangle$, $\mathcal{D}_z^\perp = \langle \tilde{\mathbf{q}}_i,\tilde{\mathbf{t}}_j \rangle$, the hybrid subsystem code $\mathcal{Q}_7$ in \cref{thm:QSC_unified} can be obtained.
\end{enumerate}

\section{Discussion and conclusions} \label{sec:discussion}

In this work, we propose a method to construct quantum codes in the synchronizable hybrid subsystem code family in \cref{thm:QSC_unified}, which can be viewed as a generalization of QSCs proposed by Fujiwara in \cite{Fujiwara2013}. We also establish trade-offs between the number of gauge qubits $r$, the number of encoded classical bits $m$, and the maximum synchronization distance $d_\mathrm{sync,max}$ in \cref{cor:QSC_tradeoff}. The sum of these numbers is equal to $2(k_d-k_c)$ for a synchronizable code (where $d_\mathrm{sync,max}>1$), and is equal to $2(k_d-k_c)+1$ for a non-synchronizable code (where $d_\mathrm{sync,max}=1$). We conjecture that the sum of the numbers obtained by our code construction is optimal.

To construct a quantum code in the synchronizable hybrid subsystem code family with good parameters, one may want to choose classical cyclic codes $\mathcal{C}$ and $\mathcal{D}$ satisfying $\mathcal{C}^\perp \subset \mathcal{C} \subset \mathcal{D}$ with large value of $k_d-k_c$. If $k_d$ is fixed, choosing a classical code $\mathcal{C}$ with lower $k_c$ can lead to a quantum code with higher $k_d-k_c$. However, we note that the number of logical qubits of a quantum code is determined by $2k_c-n$, and choosing $\mathcal{C}$ with lower $k_c$ can lead to a quantum code with fewer logical qubits. On the other hand, suppose that the code $\mathcal{C}$ is fixed. Choosing a classical code $\mathcal{D}$ with higher $k_d$ could lead to a quantum code with higher $k_d-k_c$, while the number of logical qubits remains constant. However, a code $\mathcal{D}$ with higher $k_d$ tends to have lower code distance $d_d$, which could lead to a quantum code that corrects fewer Pauli errors. Therefore, it is important to properly choose classical codes $\mathcal{C}$ and $\mathcal{D}$ so that the number of logical qubits, the code distance, and the maximum synchronization distance are well-balanced.

We point out that our construction of synchronizable hybrid subsystem codes is based on the QSC construction developed in \cite{Fujiwara2013}. There are other methods to construct QSCs for qubit systems, such as the ones proposed in \cite{Fujiwara2013-Jul, Fujiwara2013-Nov, Xie2014, Xie2016, Guenda2017}. One possible research direction would be extending our construction to cover such families of QSCs. A construction of QSCs for qudit systems has also been proposed \cite{Luo2018}. Another interesting direction would be generalizing our construction of synchronizable hybrid subsystem codes to qudit systems.

\cref{thm:CSS_subsystem,thm:CSS_hybrid,thm:CSS_hybrid_subsystem} in \cref{sec:hybrid_subsystem_construction} provide other methods to construct subsystem, hybrid, and hybrid subsystem codes of CSS type from general classical codes, which are not necessarily cyclic codes. We hope that these theorems could provide a way to construct quantum codes with more complicated structure, as well as synchronizable hybrid subsystem codes with better parameters.

It should be noted that this work assume that Pauli errors can happen on the data qubits only, and measurement and gate faults are absent (i.e., the code capacity noise model is assumed). That is, our proposed procedures for encoding, error correction, and decoding are not fault-tolerant. To handle measurement and gate faults in the syndrome extraction, one needs to apply syndrome extraction circuits such as the ones in the fault-tolerant error correction (FTEC) schemes proposed by Shor \cite{Shor96,DA07,TPB23}, Steane \cite{Steane96b}, or Knill \cite{Knill05a}. A flag technique \cite{CR17a,CR20} provides another way extract syndromes with fewer ancilla qubits. Since the codes in synchronizable hybrid subsystem codes are CSS codes constructed from classical cyclic codes, one may utilize flag techniques designed specifically for such code families \cite{TCL20,DMLWD24}. A technique to connect several flag circuits together \cite{AM22} can also be applied to further reduce the number of required ancilla qubits.  

The usage of synchronizable hybrid subsystem codes in quantum computation could be protecting quantum information against Pauli and synchronization errors transferred between modules in the modular architectures. However, the main code for quantum computation might be another code with better properties for a given architecture, such as surface code \cite{Kitaev03,BK98,DKLP02}, color codes \cite{BM06}, or good quantum LDPC codes \cite{BE21,HHJO21,EKZ22,LZ22,PK22a,PK22b,DHLV23}. One interesting research direction would be finding an efficient way to convert between the main code for computation and a synchronizable hybrid subsystem code for communication, or constructing a new code from the main code and a synchronizable hybrid subsystem code so that the resulting code inherits properties of the main code and has ability to correct synchronization errors. We leave this for future work.

\section{Acknowledgements}

We thank Keisuke Fujii and members of Center for Quantum Information and Quantum Biology (QIQB) for helpful discussions. T.T. is supported by JST Moonshot R\&D Grant No. JPMJMS2061. A.N. was supported in part by the NSF QLCI program.

\appendix

\section{Proofs of theorems} \label{sec:all_proofs}

In this section, we provide proofs of \cref{thm:gen_decomp,thm:op_pairing,thm:sync_proof_subspace,thm:sync_proof_subsystem,thm:CSS_hybrid,thm:CSS_hybrid_subsystem}.

\subsection{Proof of \cref{thm:gen_decomp}} \label{subsec:gen_decomp_proof}

\begin{proof}
The first representations of the codes $\mathcal{D}^\perp$, $\mathcal{C}^\perp$, $\mathcal{C}$, $\mathcal{D}$ follow the definitions of $\mathbf{p}_i$, $\mathbf{q}_i$, $\tilde{\mathbf{p}}_i$, and $\tilde{\mathbf{q}}_i$ in \cref{subsec:pre_cyclic}, so we will only prove the second representations.

We start by showing that $\mathcal{C}^\perp = \langle \tilde{\mathbf{q}}_i,\tilde{\mathbf{p}}_j \rangle$ where $i \in \{1,\dots,n-k_d\}$ and $j \in \{1,\dots,k_d-k_c\}$. Recall that $\tilde{\mathbf{p}}_{i}=\mathcal{V}\left(x^{{i}-1}\tilde{p}_\mathcal{R}(x)\right)$ where $i \in \{1,\dots,n-k_c\}$ are generators of $\mathcal{C}^\perp$. Therefore, any polynomial $c(x) \in I_{\mathcal{C}^\perp}$ is of the form $g_1(x)\tilde{p}_\mathcal{R}(x)$ where $\mathrm{deg}\left(g_1(x)\right)<n-k_c$. Also, $\tilde{\mathbf{q}}_{i'}=\mathcal{V}\left(x^{{i'}-1}\tilde{q}_\mathcal{R}(x)\right)$ where $i' \in \{1,\dots,n-k_d\}$ are generators of $\mathcal{D}^\perp$. So any polynomial $d(x) \in I_{\mathcal{D}^\perp}$ is of the form $g_2(x)\tilde{q}_\mathcal{R}(x)$ where $\mathrm{deg}\left(g_2(x)\right)<n-k_d$. Since $I_{\mathcal{D}^\perp} \subset I_{\mathcal{C}^\perp}$, we can write $\tilde{q}_\mathcal{R}(x) = f(x)\tilde{p}_\mathcal{R}(x)$ for some polynomial $f(x)$ of degree $k_d-k_c$. 

Consider a polynomial of the form $g_2(x)\tilde{q}_\mathcal{R}(x)+\tilde{p}_\mathcal{R}(x) = \left(g_2(x)f(x)+1\right)\tilde{p}_\mathcal{R}(x)$. Since $\mathrm{deg}\left(\tilde{p}_\mathcal{R}(x)\right)=k_c < k_d = \mathrm{deg}\left(\tilde{q}_\mathcal{R}(x)\right)$, dividing $g_2(x)\tilde{q}_\mathcal{R}(x)+\tilde{p}_\mathcal{R}(x)$ by $\tilde{q}_\mathcal{R}(x)$ gives a remainder $\tilde{p}_\mathcal{R}(x) \neq 0$. That is, $g_2(x)\tilde{q}_\mathcal{R}(x)+\tilde{p}_\mathcal{R}(x)$ is a polynomial in $I_{\mathcal{C}^\perp}$ but not in $I_{\mathcal{D}^\perp}$. Using similar ideas, we can define
\begin{equation}
    \mathcal{A}_\mathbf{w} = \left\{g_2(x)\tilde{q}_\mathcal{R}(x)+\left(\sum_{j=1}^{k_d-k_c}w_{j-1}x^{j-1}\right)\tilde{p}_\mathcal{R}(x)\; \Bigg| \;\mathrm{deg}\left(g_2(x)\right)<n-k_d\right\},
\end{equation}
where $\mathbf{w}=(w_0,\dots,w_{k_d-k_c-1})$ is a binary vector in $\mathbb{Z}_2^{k_d-k_c}$. From this definition, $\mathcal{A}_\mathbf{0}=I_{\mathcal{D}^\perp}$, and $\mathcal{A}_\mathbf{w} \cap \mathcal{A}_\mathbf{w'} = \emptyset$ when $\mathbf{w} \neq \mathbf{w'}$. This comes from the fact that $\{\tilde{\mathbf{p}}_j\}$ (as well as $\{x^{j-1}\tilde{p}_\mathcal{R}(x)\}$) where $j \in \{1,\dots,k_d-k_c\}$ are linearly independent. We also have that 
\begin{equation}
    \bigcup_{\mathbf{w}\in \mathbb{Z}_2^{k_d-k_c}} \mathcal{A}_\mathbf{w} = \{g_2(x)\tilde{q}_\mathcal{R}(x)+g_3(x)\tilde{p}_\mathcal{R}(x)\;|\;\mathrm{deg}\left(g_2(x)\right)<n-k_d,\mathrm{deg}\left(g_3(x)\right)<k_d-k_c\},
\end{equation}
which corresponds to $\langle \tilde{\mathbf{q}}_i,\tilde{\mathbf{p}}_j \rangle$ where $i \in \{1,\dots,n-k_d\}$ and $j \in \{1,\dots,k_d-k_c\}$.

We aim to prove that $I_{\mathcal{C}^\perp} = \bigcup_{\mathbf{w}\in \mathbb{Z}_2^{k_d-k_c}} \mathcal{A}_\mathbf{w}$ by showing that $\bigcup_{\mathbf{w}\in \mathbb{Z}_2^{k_d-k_c}} \mathcal{A}_\mathbf{w} \subseteq I_{\mathcal{C}^\perp}$ and $I_{\mathcal{C}^\perp} \subseteq \bigcup_{\mathbf{w}\in \mathbb{Z}_2^{k_d-k_c}} \mathcal{A}_\mathbf{w}$. First, observe that any polynomial $g_2(x)\tilde{q}_\mathcal{R}(x)+g_3(x)\tilde{p}_\mathcal{R}(x)$ in which $\mathrm{deg}\left(g_2(x)\right)<n-k_d$ and $\mathrm{deg}\left(g_3(x)\right)<k_d-k_c$ can be written as $\left(g_2(x)f(x)+g_3(x)\right)\tilde{p}_\mathcal{R}(x)$. The polynomial $g_2(x)f(x)+g_3(x)$ has degree $< n-k_c$, meaning that $\left(g_2(x)f(x)+g_3(x)\right)\tilde{p}_\mathcal{R}(x)=g_1(x)\tilde{p}_\mathcal{R}(x)$ for some $g_1(x)$ of degree $<n-k_c$. Thus, $\bigcup_{\mathbf{w}\in \mathbb{Z}_2^{k_d-k_c}} \mathcal{A}_\mathbf{w} \subseteq I_{\mathcal{C}^\perp}$.

Next, consider any polynomial $g_1(x)\tilde{p}_\mathcal{R}(x)$ in which $\mathrm{deg}\left(g_1(x)\right)=D<n-k_c$. If $D < k_d-k_c$, it is obvious that $g_1(x)\tilde{p}_\mathcal{R}(x)$ is in $\bigcup_{\mathbf{w}\in \mathbb{Z}_2^{k_d-k_c}} \mathcal{A}_\mathbf{w}$. If $k_d-k_c \leq D \leq n-k_c-1$, we can write $g_1(x)=x^{D-k_d+k_c}f(x)+z_{D-k_d+k_c-1}x^{D-k_d+k_c-1}f(x)+\dots+z_0f(x)+g_3(x)$ for some $(z_0,\dots,z_{D-k_d+k_c-1})\in \mathbb{Z}_2^{(D-k_d+k_c)}$ and for some polynomial $g_3(x)$ of degree $<k_d-k_c$. Hence,
\begin{align}
    g_1(x)\tilde{p}_\mathcal{R}(x) =&x^{D-k_d+k_c}\tilde{q}_\mathcal{R}(x)+z_{D-k_d+k_c-1}x^{D-k_d+k_c-1}\tilde{q}_\mathcal{R}(x)+\dots+z_0\tilde{q}_\mathcal{R}(x) \nonumber \\ &+g_3(x)\tilde{p}_\mathcal{R}(x) \\
    =&g_2(x)\tilde{q}_\mathcal{R}(x)+g_3(x)\tilde{p}_\mathcal{R}(x),
\end{align}
for some $g_2(x)$ of degree $<n-k_d$ and for some $g_3(x)$ of degree $<k_d-k_c$. Therefore, $I_{\mathcal{C}^\perp} \subseteq \bigcup_{\mathbf{w}\in \mathbb{Z}_2^{k_d-k_c}} \mathcal{A}_\mathbf{w}$. Together with the previous argument, we find that $I_{\mathcal{C}^\perp} = \bigcup_{\mathbf{w}\in \mathbb{Z}_2^{(k_d-k_c)}} \mathcal{A}_\mathbf{w}$, or equivalently, $\mathcal{C}^\perp = \langle \tilde{\mathbf{q}}_i,\tilde{\mathbf{p}}_j \rangle$ where $i \in \{1,\dots,n-k_d\}$ and $j \in \{1,\dots,k_d-k_c\}$.

Using similar ideas, we can compare $\mathcal{C}^\perp$ and $\mathcal{C}$ to show that $\mathcal{C} = \langle \tilde{\mathbf{p}}_j,\mathbf{p}_l \rangle$ where $j \in \{1,\dots,n-k_c\}$ and $l \in \{1,\dots,2k_c-n\}$, and compare $\mathcal{C}$ and $\mathcal{D}$ to show that $\mathcal{D} = \langle \mathbf{p}_l,\mathbf{q}_m \rangle$ where $l \in \{1,\dots,k_c\}$ and $m \in \{1,\dots,k_d-k_c\}$. These imply that $\mathcal{C} = \langle \tilde{\mathbf{q}}_i,\tilde{\mathbf{p}}_j,\mathbf{p}_l \rangle$ and $\mathcal{D} = \langle \tilde{\mathbf{q}}_i,\tilde{\mathbf{p}}_j,\mathbf{p}_l,\mathbf{q}_m \rangle$ where $i \in \{1,\dots,n-k_d\}$, $j \in \{1,\dots,k_d-k_c\}$, $l \in \{1,\dots,2k_c-n\}$, and $m \in \{1,\dots,k_d-k_c\}$.
\end{proof}

\subsection{Proof of \cref{thm:op_pairing}} \label{subsec:op_pairing_proof}

\begin{proof}
The main goal of \cref{thm:op_pairing} is to show that we can pick new sets of generators of the classical codes $\mathcal{C}$ and $\mathcal{D}$ that correspond to stabilizer generators, gauge operators and logical operators of the stabilizer codes $\mathcal{Q}_\mathcal{C}$ and $\mathcal{Q}_\mathcal{D}$. Here we adopt Theorem 0.1 in Supplementary Material of \cite{BDH06}, which is also provided in this proof, to construct pairs of anticommuting Pauli operators (see also \cite{Wilde09b} for a similar construction).

We start by introducing the symplectic representation of Pauli operators. Any $n$-qubit Pauli operator in the Pauli group $\mathcal{P}_n$ can be represented by a $2n$-bit row vector $(\mathbf{x}|\mathbf{z}) \in \mathbb{Z}^{2n}_2$ where $\mathbf{x}$ and $\mathbf{z}$ correspond to the $X$-part and the $Z$-part of the Pauli operator (ignoring the phase factor). The symplectic product of $\mathbf{u}=(\mathbf{x}|\mathbf{z})$ and $\mathbf{v}=(\mathbf{x}'|\mathbf{z}')$ is given by,
\begin{equation}
    \mathbf{u} \odot \mathbf{v} = \mathbf{x} \cdot \mathbf{z}' + \mathbf{z} \cdot \mathbf{x}'.
\end{equation} 
$\mathbb{Z}^{2n}_2$ is generated by $n$ \emph{hyperbolic pairs} $(\mathbf{g}_i,\mathbf{h}_i)$ such that only the symplectic product between the hyperbolic partner is nonzero; that is, $\mathbf{g}_i \odot \mathbf{g}_j = 0$ for all $i,j$, $\mathbf{h}_i \odot \mathbf{h}_j = 0$ for all $i,j$, $\mathbf{g}_i \odot \mathbf{h}_i = 1$ for all $i$, and $\mathbf{g}_i \odot \mathbf{h}_j = 0$ for all $i\neq j$, where $i,j \in \{1,\dots,n\}$. These $n$ hyperbolic pairs correspond to $n$ pairs of anticommuting Pauli operators that together with $iI^{\otimes n}$ generate the Pauli group $\mathcal{P}_n$. Any basis of $\mathbb{Z}^{2n}_2$ that consists of $n$ hyperbolic pairs is called \emph{symplectic basis}. 

Consider a subspace $V$ of $\mathbb{Z}^{2n}_2$. $V$ is called \emph{isotropic} if for all $\mathbf{v} \in V$, $\mathbf{v} \odot \mathbf{u} =0$ for all $\mathbf{u} \in V$. $V$ is called \emph{symplectic} if there is no $\mathbf{v} \in V$ such that $\mathbf{v} \odot \mathbf{u} =0$ for all $\mathbf{u} \in V$. If $V$ is neither isotropic nor symplectic, the theorem below (which is Theorem 0.1 in Supplementary Material of \cite{BDH06}) can be applied.
\begin{theorem} \cite{BDH06} \label{thm:0.1}
    Let $V$ be an $r$-dimensional subspace of $\mathbb{Z}_2^{2n}$. There exists a symplectic basis of $\mathbb{Z}_2^{2n}$ consisting hyperbolic pair $(\mathbf{u}_i,\mathbf{v}_i), i\in\{1,\dots,n\}$ such that $\{\mathbf{u}_1,\dots,\mathbf{u}_{p+q},\mathbf{v}_1,\dots,\mathbf{v}_{p}\}$ are basis vectors of $V$, where $2p+q=r$. Equivalently, $V= \mathrm{symp}(V)\oplus\mathrm{iso}(V)$, where $\mathrm{symp}(V)=\mathrm{span}\{\mathbf{u}_1,\dots,\mathbf{u}_p,\mathbf{v}_1,\dots,\mathbf{v}_p\}$ is symplectic and $\mathrm{iso}(V)=\mathrm{span}\{\mathbf{u}_{p+1},\dots,\mathbf{u}_{p+q}\}$ is isotropic.
\end{theorem}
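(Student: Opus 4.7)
The plan is to first carve $V$ into its symplectic and isotropic components via its radical, and then build the symplectic basis of $\mathbb{Z}_2^{2n}$ in stages so that the required basis vectors of $V$ appear among the chosen hyperbolic pairs. Define the radical $W := V \cap V^{\odot}$, where $V^{\odot} = \{\mathbf{u} \in \mathbb{Z}_2^{2n} : \mathbf{u}\odot\mathbf{v}=0 \text{ for all } \mathbf{v}\in V\}$. By construction $W$ is isotropic; let $q := \dim W$. Pick any linear complement $S$ so that $V = S \oplus W$. Then $S$ is symplectic with respect to the restricted form: any $\mathbf{s}\in S$ that paired to zero with every element of $S$ would also pair to zero with every element of $W$ (since $W\subseteq V^{\odot}$), hence would lie in $V\cap V^{\odot}=W$, forcing $\mathbf{s}=\mathbf{0}$ by the direct-sum decomposition. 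Because symplectic subspaces have even dimension, write $\dim S = 2p$, giving $r = 2p + q$.

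The next step is to construct hyperbolic pairs inside $S$ by induction. Pick any nonzero $\mathbf{u}_1 \in S$; symplecticity of $S$ supplies some $\mathbf{v}_1\in S$ with $\mathbf{u}_1\odot\mathbf{v}_1=1$. The subspace $\langle\mathbf{u}_1,\mathbf{v}_1\rangle$ is itself symplectic, so $S = \langle\mathbf{u}_1,\mathbf{v}_1\rangle \oplus \bigl(\langle\mathbf{u}_1,\mathbf{v}_1\rangle^{\odot}\cap S\bigr)$, and the second summand is again symplectic of dimension $2(p-1)$. Iterating produces hyperbolic pairs $(\mathbf{u}_1,\mathbf{v}_1),\dots,(\mathbf{u}_p,\mathbf{v}_p)$ that together span $S$ and are mutually symplectically orthogonal across pairs. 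Now choose any basis $\mathbf{u}_{p+1},\dots,\mathbf{u}_{p+q}$ of $W$; by the definition of $W$ these vectors are $\odot$-orthogonal to all of $S$ and to one another, so $\{\mathbf{u}_1,\dots,\mathbf{u}_{p+q},\mathbf{v}_1,\dots,\mathbf{v}_p\}$ is a basis of $V$ with the desired commutation structure within $V$.

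What remains is to produce hyperbolic partners $\mathbf{v}_{p+1},\dots,\mathbf{v}_{p+q}$ for the isotropic generators and then fill out the rest of $\mathbb{Z}_2^{2n}$. For each $i\in\{p+1,\dots,p+q\}$, nondegeneracy of $\odot$ on $\mathbb{Z}_2^{2n}$ guarantees the existence of some vector pairing to $1$ with $\mathbf{u}_i$; one obtains a valid partner by projecting this candidate into the symplectic complement of $\langle \mathbf{u}_1,\mathbf{v}_1,\dots,\mathbf{u}_p,\mathbf{v}_p,\mathbf{u}_{p+1},\dots,\mathbf{u}_{i-1},\mathbf{v}_{p+1},\dots,\mathbf{v}_{i-1},\mathbf{u}_{i+1},\dots,\mathbf{u}_{p+q}\rangle$, which is itself a symplectic subspace containing $\mathbf{u}_i$. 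After all $q$ partners are constructed, the $\odot$-orthogonal complement of $\langle\mathbf{u}_1,\mathbf{v}_1,\dots,\mathbf{u}_{p+q},\mathbf{v}_{p+q}\rangle$ has dimension $2(n-p-q)$ and is symplectic, so the same inductive procedure as in the second step yields the remaining pairs $(\mathbf{u}_{p+q+1},\mathbf{v}_{p+q+1}),\dots,(\mathbf{u}_n,\mathbf{v}_n)$, completing the symplectic basis.

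The main obstacle is the partnering step for the isotropic generators: a naive choice of $\mathbf{v}_i$ will in general have spurious nonzero symplectic products with previously chosen $\mathbf{u}_j,\mathbf{v}_j$, and one must correct it by adding a linear combination of those basis vectors without disturbing either $\mathbf{u}_i\odot\mathbf{v}_i=1$ or the partners of later isotropic generators. Verifying that such a correction always exists comes down to observing that the symplectic pairings with the already-fixed hyperbolic pairs define independent linear functionals whose values can be cancelled by adding suitable multiples of the corresponding $\mathbf{u}_j$ or $\mathbf{v}_j$ (each such addition is invisible to $\mathbf{u}_i$ precisely because $\mathbf{u}_i\in W$ is $\odot$-orthogonal to $V$ and, by the inductive invariant, to all other hyperbolic partners chosen so far); careful bookkeeping of which adjustments interact is the only delicate point.
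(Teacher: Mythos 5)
A framing note first: the paper does not prove this statement itself --- it imports it verbatim from the supplementary material of \cite{BDH06} and only adapts the accompanying algorithm (an explicit symplectic Gram--Schmidt) in its Algorithms 1 and 2. Your architecture --- split off the radical $W=V\cap V^{\odot}$, extract hyperbolic pairs from a nondegenerate complement $S$, then extend to a symplectic basis of all of $\mathbb{Z}_2^{2n}$ --- is the standard structural version of that same idea, and your first two paragraphs are correct as written: the argument that $S$ is symplectic, the even-dimension count $r=2p+q$, and the inductive extraction of the pairs $(\mathbf{u}_1,\mathbf{v}_1),\dots,(\mathbf{u}_p,\mathbf{v}_p)$ all check out.

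The gap is in the partnering step for the isotropic generators. The subspace $U$ whose ``symplectic complement'' you propose to project into contains the unpartnered vectors $\mathbf{u}_{i+1},\dots,\mathbf{u}_{p+q}$, and (under your own inductive invariant) each of these lies in $U\cap U^{\odot}$; so for $q\geq 2$ the radical of $U$ is nontrivial, $U^{\odot}$ is \emph{not} symplectic, $\mathbb{Z}_2^{2n}\neq U\oplus U^{\odot}$, and the projection you invoke is not defined. More importantly, the repair sketched in your last paragraph --- cancel spurious pairings by adding multiples of the ``corresponding'' $\mathbf{u}_j$ or $\mathbf{v}_j$ --- only works against \emph{completed} hyperbolic pairs; it cannot enforce $\mathbf{v}_i\odot\mathbf{u}_j=0$ for an isotropic generator $\mathbf{u}_j$ that has no partner yet, since adding a multiple of $\mathbf{u}_j$ itself leaves $\mathbf{v}_i\odot\mathbf{u}_j$ unchanged (as $\mathbf{u}_j\odot\mathbf{u}_j=0$) and there is no $\mathbf{v}_j$ available to add. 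This is exactly the delicate point you flag, but the mechanism you describe does not resolve it. A short correct argument: for any linearly independent set $B\subset\mathbb{Z}_2^{2n}$ the functionals $\mathbf{x}\mapsto\mathbf{x}\odot\mathbf{b}$, $\mathbf{b}\in B$, are linearly independent by nondegeneracy of $\odot$, hence the map $\mathbf{x}\mapsto(\mathbf{x}\odot\mathbf{b})_{\mathbf{b}\in B}$ is surjective onto $\mathbb{Z}_2^{\lvert B\rvert}$, so one may directly choose $\mathbf{v}_i$ with $\mathbf{v}_i\odot\mathbf{u}_i=1$ and $\mathbf{v}_i\odot\mathbf{b}=0$ for every other element $\mathbf{b}$ of the current partial basis, partnered or not. (Alternatively, one may keep a naive $\mathbf{v}_i$ and instead replace each unpartnered $\mathbf{u}_j$ by $\mathbf{u}_j+(\mathbf{v}_i\odot\mathbf{u}_j)\mathbf{u}_i$, which stays inside $W$.) With either replacement the proof goes through; as written, the extension step fails.
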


Consider the classical cyclic code $\mathcal{C}$ and let $V$ be the subspace of $(\mathbf{v}|\mathbf{0})$ and $(\mathbf{0}|\mathbf{v})$ for all $\mathbf{v} \in \mathcal{C}$. Since $\mathcal{C} = \langle \tilde{\mathbf{q}}_i,\tilde{\mathbf{p}}_j,\mathbf{p}_l \rangle$ by \cref{thm:gen_decomp}, we can write 
\begin{equation}
    V = \mathrm{span}\{(\tilde{\mathbf{q}}_i|\mathbf{0}),(\tilde{\mathbf{p}}_j|\mathbf{0}),(\mathbf{p}_l|\mathbf{0}),(\mathbf{0}|\tilde{\mathbf{q}}_i),(\mathbf{0}|\tilde{\mathbf{p}}_j),(\mathbf{0}|\mathbf{p}_l)\}.
\end{equation}
The dimension of $V$ is $r=2k_c$. 

Observe that for all $i,j$, $(\tilde{\mathbf{q}}_i|\mathbf{0}) \odot \mathbf{u}= (\mathbf{0}|\tilde{\mathbf{q}}_i) \odot \mathbf{u}=(\tilde{\mathbf{p}}_j|\mathbf{0}) \odot \mathbf{u} = (\mathbf{0}|\tilde{\mathbf{p}}_j) \odot \mathbf{u}=0$ for all $\mathbf{u} \in V$; this comes from the fact that $\mathcal{C}^\perp = \langle \tilde{\mathbf{q}}_i,\tilde{\mathbf{p}}_j\rangle$ is the dual code of $\mathcal{C}$. Also, there is no $\mathbf{p}_l$ such that $(\mathbf{p}_l|\mathbf{0}) \odot \mathbf{u} = 0$ or $ (\mathbf{0}|\mathbf{p}_l) \odot \mathbf{u}=0$ for all $\mathbf{u} \in V$. Therefore, $\mathrm{iso}(V)=\mathrm{span}\{(\tilde{\mathbf{q}}_i|\mathbf{0}),(\tilde{\mathbf{p}}_j|\mathbf{0}),(\mathbf{0}|\tilde{\mathbf{q}}_i),(\mathbf{0}|\tilde{\mathbf{p}}_j)\}$ and $\mathrm{symp}(V)=\mathrm{span}\{(\mathbf{p}_l|\mathbf{0}),(\mathbf{0}|\mathbf{p}_l)\}$. Here we have $p=2k_c-n$ and $q=2(n-k_c)$, which satisfy $2p+q=r$. $\mathrm{iso}(V)$ corresponds to the stabilizer group of $\mathcal{Q}_\mathcal{C}$, and $\mathrm{symp}(V)$ corresponds to the generators of logical Pauli operators of $\mathcal{Q}_\mathcal{C}$. 

In a special case where $\mathbf{p}_l \cdot \mathbf{p}_l=1$ for all $l$ and $\mathbf{p}_l \cdot \mathbf{p}_{l'}=0$ for all $l \neq l'$ where $l,l' \in \{1,\dots,2k_c-n\}$, $\left((\mathbf{p}_l|\mathbf{0}),(\mathbf{0}|\mathbf{p}_l)\right)$ are the hyperbolic pairs that describe $\mathrm{symp}(V)$. However, this is not always true. In general, hyperbolic pairs that describe $\mathrm{symp}(V)$ can be constructed using the algorithm provided in the proof of Theorem 0.1 in Supplementary Material of \cite{BDH06}. In this work, we slightly modify the algorithm so that (1) logical $X$ and logical $Z$ operators that correspond to the hyperbolic pairs describing $\mathrm{symp}(V)$ are purely $X$-type and purely $Z$-type, respectively, and (2) for each logical $X$ operator, its support is the same as the support of some logical $Z$ operator (and vice versa). The algorithm is presented below.

\begin{algorithm} \label{alg:for_logical}
The algorithm starts by letting $\{\mathbf{w}_1,\dots,\mathbf{w}_{2k_c-n}\} = \{\mathbf{p}_1,\dots,\mathbf{p}_{2k_c-n}\}$, $s=2k_c-n$, and $i=1$. At Round $i$, consider $\{\mathbf{w}_1,\dots,\mathbf{w}_s\}$ and find the smallest $j$ such that $\mathbf{w}_j \cdot \mathbf{w}_j = 1$.
\begin{enumerate}[leftmargin=1cm] 
    \item If the smallest $j$ such that $\mathbf{w}_j \cdot \mathbf{w}_j = 1$ can be found, then do the following.
    \begin{enumerate}
        \item Assign $\mathbf{s}_i^x:=\mathbf{w}_j$ and $\mathbf{s}_i^z:=\mathbf{w}_j$.
        \item For $k=1,\dots,j-1$, assign $\mathbf{w}'_{k}:=\mathbf{w}_k+(\mathbf{w}_k\cdot \mathbf{w}_j)\mathbf{w}_j$. \\
        For $k=j+1,\dots,s$, assign $\mathbf{w}'_{k-1}:=\mathbf{w}_k+(\mathbf{w}_k\cdot \mathbf{w}_j)\mathbf{w}_j$\\
        (This is to ensure that $\mathbf{w}'_k \cdot \mathbf{w}_j=0$ for all $k=1,\dots,s-1$.)
        \item Assign $s := s-1$, $i:=i+1$, and $\{\mathbf{w}_1,\dots,\mathbf{w}_{s}\}:= \{\mathbf{w}'_1,\dots,\mathbf{w}'_{s}\}$, then continue to the next round.
    \end{enumerate}
    \item If the smallest $j$ such that $\mathbf{w}_j \cdot \mathbf{w}_j = 1$ cannot be found, then do the following.
    \begin{enumerate}
        \item Find the smallest $j$ such that $\mathbf{w}_1 \cdot \mathbf{w}_j = 1$. 
        \item Assign $\mathbf{s}_i^x:=\mathbf{w}_1$, $\mathbf{s}_i^z:=\mathbf{w}_j$, $\mathbf{s}_{i+1}^x:=\mathbf{w}_j$, and $\mathbf{s}_{i+1}^z:=\mathbf{w}_1$.
        \item For $k=2,\dots,j-1$, assign $\mathbf{w}'_{k-1}:=\mathbf{w}_k+(\mathbf{w}_k\cdot \mathbf{w}_j)\mathbf{w}_1$. \\
            For $k=j+1,\dots,s$, assign $\mathbf{w}'_{k-2}:=\mathbf{w}_k+(\mathbf{w}_k\cdot \mathbf{w}_1)\mathbf{w}_j+(\mathbf{w}_k\cdot \mathbf{w}_j)\mathbf{w}_1$.\\
            (This is to ensure that $\mathbf{w}'_k \cdot \mathbf{w}_1=\mathbf{w}'_k \cdot \mathbf{w}_j=0$ for all $k=1,\dots,s-2$.)
        \item Assign $s := s-2$, $i:=i+2$, and $\{\mathbf{w}_1,\dots,\mathbf{w}_{s}\}:= \{\mathbf{w}'_1,\dots,\mathbf{w}'_{s}\}$, then continue to the next round.
    \end{enumerate}
\end{enumerate}
After the loop terminates at $s=0$, output $\{\mathbf{s}_l^x\}$ and $\{\mathbf{s}_l^z\}$ where $l=1,\dots,2k_c-n$.
\end{algorithm}

By the construction of \cref{alg:for_logical}, we have $\langle \mathbf{s}_l^x \rangle = \langle \mathbf{s}_l^z \rangle = \langle \mathbf{p}_l \rangle$. Also, note that there is no $\mathbf{w}_j$ such that $\mathbf{w}_j \cdot \mathbf{w}_k =0$ for all $k$; otherwise, such a vector belongs to $\mathcal{C}^\perp$. Therefore, \cref{alg:for_logical} outputs $2k_c-n$ pairs of $(\mathbf{s}_l^x,\mathbf{s}_l^z)$ such that $\mathbf{s}_l^x \cdot \mathbf{s}_{l}^z=1$ for all $l$, and $\mathbf{s}_l^x \cdot \mathbf{s}_{l'}^z=\mathbf{s}_l^z \cdot \mathbf{s}_{l'}^x=0$ for all $l'\neq l$. The hyperbolic pairs that describe $\mathrm{symp}(V)$ are $\left((\mathbf{s}_l^x|\mathbf{0}),(\mathbf{0}|\mathbf{s}_l^z)\right)$, or equivalently, the anticommuting pairs $\{X(\mathbf{s}_l^x),Z(\mathbf{s}_l^z)\}$ define logical Pauli operators of $\mathcal{Q}_\mathcal{C}$.

Next, we define $\mathbf{q}'_m=\mathbf{q}_m+\sum_l (\mathbf{q}_m \cdot \mathbf{s}_l^x)\mathbf{s}_{l}^z$ (which is equal to $\mathbf{q}_m+\sum_l (\mathbf{q}_m \cdot \mathbf{s}_l^z)\mathbf{s}_{l}^x$ since there is one-to-one correspondence between $\mathbf{s}_l^x$ and $\mathbf{s}_{l'}^z$ such that $\mathbf{s}_l^x=\mathbf{s}_{l'}^z$ where $l$ is not necessarily equal to $l'$). We have that $\langle\mathbf{s}_l^x,\mathbf{q}_m\rangle = \langle\mathbf{s}_l^z,\mathbf{q}_m\rangle=\langle\mathbf{s}_l^x,\mathbf{q}'_m\rangle=\langle\mathbf{s}_l^z,\mathbf{q}'_m\rangle$ and $\mathbf{s}_l^x\cdot \mathbf{q}'_m=\mathbf{s}_l^z\cdot \mathbf{q}'_m=0$ for all $l,m$. Now, let us consider the classical cyclic code $\mathcal{D}$ and let $V$ be the subspace of $(\mathbf{v}|\mathbf{0})$ and $(\mathbf{0}|\mathbf{v})$ for all $\mathbf{v} \in \mathcal{D}$. Using the facts that $\mathcal{D} = \langle \tilde{\mathbf{q}}_i,\tilde{\mathbf{p}}_j,\mathbf{p}_l,\mathbf{q}_m \rangle$ (by \cref{thm:gen_decomp}), $\langle \mathbf{s}_l^x \rangle = \langle \mathbf{s}_l^z \rangle = \langle \mathbf{p}_l \rangle$, and the expressions of $\mathbf{s}_l^x$, $\mathbf{s}_l^z$, $\mathbf{q}_m$, and $\mathbf{q}'_m$, we can write $V = \mathrm{span}\{(\tilde{\mathbf{q}}_i|\mathbf{0}),(\tilde{\mathbf{p}}_j|\mathbf{0}),(\mathbf{s}_l^x|\mathbf{0}),(\mathbf{q}'_m|\mathbf{0}),(\mathbf{0}|\tilde{\mathbf{q}}_i),(\mathbf{0}|\tilde{\mathbf{p}}_j),(\mathbf{0}|\mathbf{s}_l^z),(\mathbf{0}|\mathbf{q}'_m)\}$. The dimension of $V$ is $r=2k_d$. 

Observe that for all $i$, $(\tilde{\mathbf{q}}_i|\mathbf{0}) \odot \mathbf{u}= (\mathbf{0}|\tilde{\mathbf{q}}_i) \odot \mathbf{u}=0$ for all $\mathbf{u} \in V$, which comes from the fact that $\mathcal{D}^\perp = \langle \tilde{\mathbf{q}}_i\rangle$ is the dual code of $\mathcal{D}$. Thus, $\mathrm{iso}(V)=\mathrm{span}\{(\tilde{\mathbf{q}}_i|\mathbf{0}),(\mathbf{0}|\tilde{\mathbf{q}}_i)\}$. Also, there is no $\tilde{\mathbf{p}}_j$ such that $(\tilde{\mathbf{p}}_j|\mathbf{0}) \odot \mathbf{u} = 0$ or $ (\mathbf{0}|\tilde{\mathbf{p}}_j) \odot \mathbf{u}=0$ for all $\mathbf{u} \in V$ (otherwise, $\tilde{\mathbf{p}}_j$ is in $\mathcal{D}^\perp$, which is not true from \cref{thm:gen_decomp}). Similar arguments can be applied to $\mathbf{s}_l^x$, $\mathbf{s}_l^z$, and $\mathbf{q}'_m$. Therefore, $\mathrm{symp}(V)=\mathrm{span}\{(\tilde{\mathbf{p}}_j|\mathbf{0}),(\mathbf{s}_l^x|\mathbf{0}),(\mathbf{q}'_m|\mathbf{0}),(\mathbf{0}|\tilde{\mathbf{p}}_j),(\mathbf{0}|\mathbf{s}_l^z),(\mathbf{0}|\mathbf{q}'_m)\}$. Here we have $p=2k_d-n$ and $q=2(n-k_d)$, which satisfy $2p+q=r$. $\mathrm{iso}(V)$ corresponds to the stabilizer group of $\mathcal{Q}_\mathcal{D}$, and $\mathrm{symp}(V)$ corresponds to the generators of logical Pauli operators of $\mathcal{Q}_\mathcal{D}$. 

We aim to find hyperbolic pairs that describe $\mathrm{symp}(V)$. We already found the hyperbolic pairs $\left((\mathbf{s}_l^x|\mathbf{0}),(\mathbf{0}|\mathbf{s}_l^z)\right)$ from the previous part of the proof and want to find the rest. Again, we modify the algorithm in the proof of Theorem 0.1 in Supplementary Material of \cite{BDH06}. This time we modify it so that (1) anticommuting pairs of Pauli operators corresponding to the hyperbolic pairs describing $\mathrm{symp}(V)$ are purely $X$-type or purely $Z$-type, and (2) $\langle \tilde{\mathbf{p}}_j \rangle$ can be generated by a subset of binary vectors that describe $X$-type (or $Z$-type) Pauli operators in the anticommuting pairs. Note that the sets of $X$-type and $Z$-type Pauli operators might not be the same. The algorithm is presented below.

\begin{algorithm} \label{alg:for_gauge} 
We start by letting $\{\mathbf{w}_1,\dots,\mathbf{w}_{2(k_d-k_c)}\} = \{\tilde{\mathbf{p}}_1,\dots,\tilde{\mathbf{p}}_{k_d-k_c},\mathbf{q}'_1,\dots,\mathbf{q}'_{k_d-k_c}\}$, $s=2(k_d-k_c)$, and $i=1$. At Round $i$, consider $\{\mathbf{w}_1,\dots,\mathbf{w}_s\}$ and find the smallest $j$ such that $\mathbf{w}_1 \cdot \mathbf{w}_j = 1$.
\begin{enumerate}[leftmargin=1cm] 
    \item If $\mathbf{w}_j \cdot \mathbf{w}_j = 0$, do the following.
    \begin{enumerate}
        \item Assign $\tilde{\mathbf{t}}_i:=\mathbf{w}_1$, $\mathbf{t}_i^x:=\mathbf{w}_j$, and $\mathbf{t}_i^z:=\mathbf{w}_j$.
        \item For $k=2,\dots,j-1$, assign $\mathbf{w}'_{k-1}:=\mathbf{w}_k+(\mathbf{w}_k\cdot \mathbf{w}_j)\mathbf{w}_1$. \\
    For $k=j+1,\dots,s$, assign $\mathbf{w}'_{k-2}:=\mathbf{w}_k+(\mathbf{w}_k\cdot \mathbf{w}_1)\mathbf{w}_j+(\mathbf{w}_k\cdot \mathbf{w}_j)\mathbf{w}_1$.\\
        (This is to ensure that $\mathbf{w}'_k \cdot \mathbf{w}_1=\mathbf{w}'_k \cdot \mathbf{w}_j=0$ for all $k=1,\dots,s-2$.)
        \item Assign $s := s-2$, $i:=i+1$, and $\{\mathbf{w}_1,\dots,\mathbf{w}_{s}\}= \{\mathbf{w}'_1,\dots,\mathbf{w}'_{s}\}$, then continue to the next round.
    \end{enumerate}
    \item If $\mathbf{w}_j \cdot \mathbf{w}_j = 1$, do the following.
    \begin{enumerate}
        \item Assign $\tilde{\mathbf{t}}_i:=\mathbf{w}_1$, $\mathbf{t}_i^x:=\mathbf{w}_j$, and $\mathbf{t}_i^z:=\mathbf{w}_1+\mathbf{w}_j$
        \item For $k=2,\dots,j-1$, assign $\mathbf{w}'_{k-1}:=\mathbf{w}_k+(\mathbf{w}_k\cdot \mathbf{w}_j)\mathbf{w}_1+(\mathbf{w}_k\cdot \mathbf{w}_1)(\mathbf{w}_1+\mathbf{w}_j)$. \\
        For $k=j+1,\dots,s$, assign $\mathbf{w}'_{k-2}:=\mathbf{w}_k+(\mathbf{w}_k\cdot \mathbf{w}_j)\mathbf{w}_1+(\mathbf{w}_k\cdot \mathbf{w}_1)(\mathbf{w}_1+\mathbf{w}_j)$. \\
        (This is to ensure that $\mathbf{w}'_k \cdot \mathbf{w}_1=\mathbf{w}'_k \cdot \mathbf{w}_j=\mathbf{w}'_k \cdot (\mathbf{w}_1+\mathbf{w}_j)=0$ for all $k=1,\dots,s-2$.)
        \item Assign $s := s-2$, $i:=i+1$, and $\{\mathbf{w}_1,\dots,\mathbf{w}_{s}\}= \{\mathbf{w}'_1,\dots,\mathbf{w}'_{s}\}$, then continue to the next round.
    \end{enumerate}
\end{enumerate}
After the loop terminates at $s=0$, output $\{\tilde{\mathbf{t}}_j\}$, $\{\mathbf{t}_j^x\}$, and $\{\mathbf{t}_j^z\}$ where $j=1,\dots,k_d-k_c$.
\end{algorithm}

Consider the first round where $\mathbf{w}_1=\tilde{\mathbf{p}}_1$. $\mathbf{w}_j$ with the smallest $j$ such that $\mathbf{w}_1 \cdot \mathbf{w}_j = 1$ is some $\mathbf{q}'_{m'}$. By the vector modification rules, $\mathbf{w}_k$ that corresponds to some $\tilde{\mathbf{p}}_j$ is modified by $\tilde{\mathbf{p}}_1$ only, while $\mathbf{w}_k$ that corresponds to some $\mathbf{q}'_m$ is modified by $\tilde{\mathbf{p}}_1$ and $\mathbf{q}'_{m'}$. Similarly, we find that at any round, $\mathbf{w}_1$ is a linear combination of $\{\tilde{\mathbf{p}}_j\}$, and $\mathbf{w}_j$ is a linear combination of $\{\tilde{\mathbf{p}}_j,\mathbf{q}'_m\}$. At the end, we have that $\langle \tilde{\mathbf{t}}_j \rangle = \langle \tilde{\mathbf{p}}_j \rangle$ where $j \in \{1,\dots,k_d-k_c\}$, and $\langle\tilde{\mathbf{t}}_j,\mathbf{t}_m^x\rangle = \langle\tilde{\mathbf{t}}_j,\mathbf{t}_m^z\rangle = \langle\tilde{\mathbf{p}}_j,\mathbf{q}'_m\rangle$ where $j,m \in \{1,\dots,k_d-k_c\}$. The definition of $\mathbf{q}'_m$ ensures that $\mathbf{s}_l^x \cdot \mathbf{t}_{j}^z = \mathbf{s}_l^z \cdot \mathbf{t}_{j}^x =0$ for all $l,j$. The assignments of $\tilde{\mathbf{t}}_i$, $\mathbf{t}_i^x$, and $\mathbf{t}_i^z$ ensure that $\tilde{\mathbf{t}}_{j} \cdot \mathbf{t}_{j}^x=\tilde{\mathbf{t}}_{j} \cdot \mathbf{t}_{j}^z=1$ for all $j$, $\tilde{\mathbf{t}}_{j} \cdot \mathbf{t}_{j'}^x=\tilde{\mathbf{t}}_{j} \cdot \mathbf{t}_{j'}^z=0$ for all $j'\neq j$, and $\mathbf{t}_{j}^x \cdot \mathbf{t}_{j'}^z=0$ for all $j,j'$.

The hyperbolic pairs that describe $\mathrm{symp}(V)$ are $\left((\mathbf{s}_l^x|\mathbf{0}),(\mathbf{0}|\mathbf{s}_l^z)\right)$, $\left((\tilde{\mathbf{t}}_j|\mathbf{0}),(\mathbf{0}|\mathbf{t}_j^z)\right)$, and $\left((\mathbf{t}_j^x|\mathbf{0}),(\mathbf{0}|\tilde{\mathbf{t}}_j)\right)$. Equivalently, the anticommuting pairs $\{X(\mathbf{s}_l^x),Z(\mathbf{s}_l^z)\}$, $\{X(\tilde{\mathbf{t}}_j),Z(\mathbf{t}_j^z)\}$, and $\{X(\mathbf{t}_j^x),Z(\tilde{\mathbf{t}}_j)\}$ define logical Pauli operators of $\mathcal{Q}_\mathcal{D}$. As $\mathcal{D}^\perp \subset \mathcal{C}^\perp$ and thus $\mathcal{Q}_\mathcal{D} \subset \mathcal{Q}_\mathcal{C}$, $\{X(\tilde{\mathbf{t}}_j),Z(\mathbf{t}_j^z)\}$ and $\{X(\mathbf{t}_j^x),Z(\tilde{\mathbf{t}}_j)\}$ which are logical Pauli operators of $\mathcal{Q}_\mathcal{D}$ but not logical Pauli operators of $\mathcal{Q}_\mathcal{C}$ define gauge operators of $\mathcal{Q}_\mathcal{C}$.

The classical codes $\mathcal{C}$ and $\mathcal{D}$ can also be represented by $\tilde{\mathbf{t}}_j,\mathbf{s}_l^x,\mathbf{s}_l^z,\mathbf{t}_m^x,\mathbf{t}_m^z$ as $\mathcal{C}^\perp = \langle \tilde{\mathbf{q}}_i,\tilde{\mathbf{t}}_j \rangle$, $\mathcal{C} = \langle \tilde{\mathbf{q}}_i,\tilde{\mathbf{t}}_j,\mathbf{s}_l^x \rangle = \langle \tilde{\mathbf{q}}_i,\tilde{\mathbf{t}}_j,\mathbf{s}_l^z \rangle$, and $\mathcal{D} = \langle \tilde{\mathbf{q}}_i,\tilde{\mathbf{t}}_j,\mathbf{s}_l^x,\mathbf{t}_m^x \rangle = \langle \tilde{\mathbf{q}}_i,\tilde{\mathbf{t}}_j,\mathbf{s}_l^z,\mathbf{t}_m^z \rangle$. Using these representations and the fact that $\mathcal{D}^\perp$ (or $\mathcal{C}^\perp$) is the dual code of $\mathcal{D}$ (or $\mathcal{C}$), we find that $\tilde{\mathbf{q}}_i \cdot \tilde{\mathbf{q}}_{i'} = \tilde{\mathbf{q}}_i \cdot \tilde{\mathbf{t}}_{j} = \tilde{\mathbf{q}}_i \cdot \mathbf{s}_{l}^x = \tilde{\mathbf{q}}_i \cdot \mathbf{s}_{l}^z = \tilde{\mathbf{q}}_i \cdot \mathbf{t}_{m}^x = \tilde{\mathbf{q}}_i \cdot \mathbf{t}_{m}^z =0$ and
$\tilde{\mathbf{t}}_j \cdot \tilde{\mathbf{t}}_{j'} = \tilde{\mathbf{t}}_j \cdot \mathbf{s}_{l}^x=\tilde{\mathbf{t}}_j \cdot \mathbf{s}_{l}^z=0$
for all $i,i',j,j',l,m$. This completes the proof.
\end{proof}

\subsection{Proof of \cref{thm:sync_proof_subspace}} \label{subsec:sync_subspace_proof}

\begin{proof}
We will consider $X$-type and $Z$-type operators separately. First, observe that the $X$-type stabilizer generators $\{X(\overbracket[0.5pt]{\underbracket[0.5pt]{\,\tilde{\mathbf{p}}_j\,}_{\mathclap{a_l}}}^{\text{last}}|\underbracket[0.5pt]{\,\tilde{\mathbf{p}}_j\,}_{\mathclap{n}}|\overbracket[0.5pt]{\underbracket[0.5pt]{\,\tilde{\mathbf{p}}_j\,}_{\mathclap{a_r}}}^{\text{first}})(-1)^{\tilde{\mathbf{p}}_j\cdot \mathbf{v}}\}$ generate the set, 
\begin{equation}
    \mathcal{S}_1^x=\{X(\overbracket[0.5pt]{\underbracket[0.5pt]{\,\mathbf{c}\,}_{\mathclap{a_l}}}^{\text{last}}|\underbracket[0.5pt]{\,\mathbf{c}\,}_{\mathclap{n}}|\overbracket[0.5pt]{\underbracket[0.5pt]{\,\mathbf{c}\,}_{\mathclap{a_r}}}^{\text{first}})(-1)^{\mathbf{c}\cdot \mathbf{v},}\;|\;\mathbf{c} \in \mathcal{C}^\perp\}.
\end{equation}
For each element of $\mathcal{S}_1^x$, we have that
\begin{equation}
    X(\overbracket[0.5pt]{\underbracket[0.5pt]{\,\mathbf{c}\,}_{\mathclap{a_l}}}^{\text{last}}|\underbracket[0.5pt]{\,\mathbf{c}\,}_{\mathclap{n}}|\overbracket[0.5pt]{\underbracket[0.5pt]{\,\mathbf{c}\,}_{\mathclap{a_r}}}^{\text{first}})(-1)^{\mathbf{c}\cdot \mathbf{v}}
    =
    X(\overbracket[0.5pt]{\underbracket[0.5pt]{\,\mathcal{O}(\mathbf{c},-\alpha)\,}_{\mathclap{a_l+\alpha}}}^{\text{last}}|\underbracket[0.5pt]{\,\mathcal{O}(\mathbf{c},-\alpha)\,}_{\mathclap{n}}|\overbracket[0.5pt]{\underbracket[0.5pt]{\,\mathcal{O}(\mathbf{c},-\alpha)\,}_{\mathclap{a_r-\alpha}}}^{\text{first}})(-1)^{\mathcal{O}(\mathbf{c},-\alpha)\cdot \mathcal{O}(\mathbf{v},-\alpha)}.
\end{equation}
Therefore, $\mathcal{S}_1^x=\{X(\overbracket[0.5pt]{\underbracket[0.5pt]{\,\mathcal{O}(\mathbf{c},-\alpha)\,}_{\mathclap{a_l+\alpha}}}^{\text{last}}|\underbracket[0.5pt]{\,\mathcal{O}(\mathbf{c},-\alpha)\,}_{\mathclap{n}}|\overbracket[0.5pt]{\underbracket[0.5pt]{\,\mathcal{O}(\mathbf{c},-\alpha)\,}_{\mathclap{a_r-\alpha}}}^{\text{first}})(-1)^{\mathcal{O}(\mathbf{c},-\alpha)\cdot \mathcal{O}(\mathbf{v},-\alpha)}\;|\;\mathbf{c} \in \mathcal{C}^\perp\}$.

Next, consider a set $\mathcal{S}_2^x$ generated by $\{X(\overbracket[0.5pt]{\underbracket[0.5pt]{\,\tilde{\mathbf{p}}_j\,}_{\mathclap{a_l+\alpha}}}^{\text{last}}|\underbracket[0.5pt]{\,\tilde{\mathbf{p}}_j\,}_{\mathclap{n}}|\overbracket[0.5pt]{\underbracket[0.5pt]{\,\tilde{\mathbf{p}}_j\,}_{\mathclap{a_r-\alpha}}}^{\text{first}})(-1)^{\tilde{\mathbf{p}}_j\cdot \mathcal{O}\left(\mathbf{v},-\alpha\right)}\}$ for some integer $\alpha$. $\mathcal{S}_2^x$ is of the form,
\begin{align}
    \mathcal{S}_2^x&=\{X(\overbracket[0.5pt]{\underbracket[0.5pt]{\,\mathbf{c}'\,}_{\mathclap{a_l+\alpha}}}^{\text{last}}|\underbracket[0.5pt]{\,\mathbf{c}'\,}_{\mathclap{n}}|\overbracket[0.5pt]{\underbracket[0.5pt]{\,\mathbf{c}'\,}_{\mathclap{a_r-\alpha}}}^{\text{first}})(-1)^{\mathbf{c}'\cdot \mathcal{O}\left(\mathbf{v},-\alpha\right)}\;|\;\mathbf{c}' \in \mathcal{C}^\perp\}  \\
    &=\{X(\overbracket[0.5pt]{\underbracket[0.5pt]{\,\mathcal{O}(\mathbf{c},-\alpha)\,}_{\mathclap{a_l+\alpha}}}^{\text{last}}|\underbracket[0.5pt]{\,\mathcal{O}(\mathbf{c},-\alpha)\,}_{\mathclap{n}}|\overbracket[0.5pt]{\underbracket[0.5pt]{\,\mathcal{O}(\mathbf{c},-\alpha)\,}_{\mathclap{a_r-\alpha}}}^{\text{first}})(-1)^{\mathcal{O}(\mathbf{c},-\alpha)\cdot \mathcal{O}\left(\mathbf{v},-\alpha\right)}\;|\;\mathcal{O}(\mathbf{c},-\alpha) \in \mathcal{C}^\perp\}.
\end{align}

$\mathcal{S}_1^x$ and $\mathcal{S}_2^x$ are the same set since $\{\mathcal{O}\left(\mathbf{c},-\alpha\right) \in \mathcal{C}^\perp\} = \mathcal{O}\left(\mathcal{C}^\perp,\alpha\right) = \mathcal{C}^\perp$, where $\mathcal{O}\left(\mathcal{C}^\perp,\alpha\right) = \{\mathcal{O}\left(\mathbf{c},\alpha\right)\;|\;\mathbf{c}\in \mathcal{C}^\perp\}$.  Therefore, for any integer $\alpha$, $\{X(\overbracket[0.5pt]{\underbracket[0.5pt]{\,\tilde{\mathbf{p}}_j\,}_{\mathclap{a_l+\alpha}}}^{\text{last}}|\underbracket[0.5pt]{\,\tilde{\mathbf{p}}_j\,}_{\mathclap{n}}|\overbracket[0.5pt]{\underbracket[0.5pt]{\,\tilde{\mathbf{p}}_j\,}_{\mathclap{a_r-\alpha}}}^{\text{first}})(-1)^{\tilde{\mathbf{p}}_j\cdot \mathcal{O}\left(\mathbf{v},-\alpha\right)}\}$ generates the same group of $X$-type stabilizers.

Next, let us consider the $Z$-type stabilizer generators that correspond to the $a_l+a_r$ ancilla qubits, 
\begin{equation}
    Z\left( \begin{array}{c|ccc|c}
    \mathbb{1}_{a_l} & \mathbb{0}_{a_l,a_r} & \mathbb{0}_{a_l,(n-a_l-a_r)} & \mathbb{1}_{a_l} & \mathbb{0}_{a_l,a_r} \\
    \mathbb{0}_{a_r,a_l} & \mathbb{1}_{a_r} & \mathbb{0}_{a_r,(n-a_l-a_r)} & \mathbb{0}_{a_r,a_l} & \mathbb{1}_{a_r}
    \end{array}
    \right). \label{eq:stab_ancilla}
\end{equation}
We can multiply a $Z$-type generator $Z(\underbracket[0.5pt]{\vphantom{\tilde{\mathbf{p}}_j}\,\mathbf{0}}_{\mathclap{a_l}}|\underbracket[0.5pt]{\tilde{\mathbf{p}}_j}_{\mathclap{n}}|\underbracket[0.5pt]{\vphantom{\tilde{\mathbf{p}}_j}\mathbf{0}\,}_{\mathclap{a_r}})(-1)^{\tilde{\mathbf{p}}_j \cdot \mathbf{w}}$ with some generators from \cref{eq:stab_ancilla} to show that it is logically equivalent to the following.
\begin{align}
    Z(\underbracket[0.5pt]{\vphantom{\tilde{\mathbf{p}}_j}\,\mathbf{0}\,}_{\mathclap{a_l}}|\underbracket[0.5pt]{\,\tilde{\mathbf{p}}_j\,}_{\mathclap{n}}|\underbracket[0.5pt]{\vphantom{\tilde{\mathbf{p}}_j}\,\mathbf{0}\,}_{\mathclap{a_r}})(-1)^{\tilde{\mathbf{p}}_j \cdot \mathbf{w}} & \simeq
    Z(\underbracket[0.5pt]{\vphantom{\tilde{\mathbf{p}}_j}\,\mathbf{0}\,}_{\mathclap{a_l+\alpha}}|\underbracket[0.5pt]{\,\mathcal{O}(\tilde{\mathbf{p}}_j,-\alpha)\,}_{\mathclap{n}}|\underbracket[0.5pt]{\vphantom{\tilde{\mathbf{p}}_j}\,\mathbf{0}\,}_{\mathclap{a_r-\alpha}})(-1)^{\tilde{\mathbf{p}}_j \cdot \mathbf{w}} \\
    & =
    Z(\underbracket[0.5pt]{\vphantom{\tilde{\mathbf{p}}_j}\,\mathbf{0}\,}_{\mathclap{a_l+\alpha}}|\underbracket[0.5pt]{\,\mathcal{O}(\tilde{\mathbf{p}}_j,-\alpha)\,}_{\mathclap{n}}|\underbracket[0.5pt]{\vphantom{\tilde{\mathbf{p}}_j}\,\mathbf{0}\,}_{\mathclap{a_r-\alpha}})(-1)^{\mathcal{O}(\tilde{\mathbf{p}}_j,-\alpha) \cdot \mathcal{O}(\mathbf{w},-\alpha)}.
\end{align}

Consider a set $\mathcal{S}_1^z$ generated by 
$\{Z(\underbracket[0.5pt]{\vphantom{\tilde{\mathbf{p}}_j}\,\mathbf{0}\,}_{\mathclap{a_l+\alpha}}|\underbracket[0.5pt]{\,\mathcal{O}(\tilde{\mathbf{p}}_j,-\alpha)\,}_{\mathclap{n}}|\underbracket[0.5pt]{\vphantom{\tilde{\mathbf{p}}_j}\,\mathbf{0}\,}_{\mathclap{a_r-\alpha}})(-1)^{\mathcal{O}(\tilde{\mathbf{p}}_j,-\alpha) \cdot \mathcal{O}(\mathbf{w},-\alpha)}\}$. We find that,
\begin{align}
\mathcal{S}_1^z&=\{Z(\underbracket[0.5pt]{\vphantom{\mathbf{c}}\,\mathbf{0}\,}_{\mathclap{a_l+\alpha}}|\underbracket[0.5pt]{\,\mathcal{O}(\mathbf{c}',-\alpha)\,}_{\mathclap{n}}|\underbracket[0.5pt]{\vphantom{\mathbf{c}}\,\mathbf{0}\,}_{\mathclap{a_r-\alpha}})(-1)^{\mathcal{O}(\mathbf{c}',-\alpha) \cdot \mathcal{O}(\mathbf{w},-\alpha)}\}\;|\;\mathbf{c}' \in \mathcal{C}^\perp\}  \\
&=\{Z(\underbracket[0.5pt]{\vphantom{\mathbf{c}}\,\mathbf{0}\,}_{\mathclap{a_l+\alpha}}|\underbracket[0.5pt]{\,\mathbf{c}\,}_{\mathclap{n}}|\underbracket[0.5pt]{\vphantom{\mathbf{c}}\,\mathbf{0}\,}_{\mathclap{a_r-\alpha}})(-1)^{\mathbf{c} \cdot \mathcal{O}(\mathbf{w},-\alpha)}\}\;|\;\mathcal{O}(\mathbf{c},\alpha) \in \mathcal{C}^\perp\}.
\end{align}

Next, consider a set $\mathcal{S}_2^z$ generated by $\{Z(\underbracket[0.5pt]{\vphantom{\tilde{\mathbf{p}}_j}\,\mathbf{0}\,}_{\mathclap{a_l+\alpha}}|\underbracket[0.5pt]{\,\tilde{\mathbf{p}}_j\,}_{\mathclap{n}}|\underbracket[0.5pt]{\vphantom{\tilde{\mathbf{p}}_j}\,\mathbf{0}\,}_{\mathclap{a_r-\alpha}})(-1)^{\tilde{\mathbf{p}}_j \cdot \mathcal{O}(\mathbf{w},-\alpha)}\}$, which is the set,
\begin{align}
\mathcal{S}_2^z&=\{Z(\underbracket[0.5pt]{\vphantom{\mathbf{c}}\,\mathbf{0}\,}_{\mathclap{a_l+\alpha}}|\underbracket[0.5pt]{\,\mathbf{c}'\,}_{\mathclap{n}}|\underbracket[0.5pt]{\vphantom{\mathbf{c}}\,\mathbf{0}\,}_{\mathclap{a_r-\alpha}})(-1)^{\mathbf{c}' \cdot \mathcal{O}(\mathbf{w},-\alpha)}\}\;|\;\mathbf{c}' \in \mathcal{C}^\perp\}.
\end{align}

$\mathcal{S}_1^z$ and $\mathcal{S}_2^z$ are the same set since $\{\mathcal{O}\left(\mathbf{c},\alpha\right) \in \mathcal{C}^\perp\} \!=\! \mathcal{O}\left(\mathcal{C}^\perp,-\alpha\right) \!=\! \mathcal{C}^\perp$, where $\mathcal{O}\left(\mathcal{C}^\perp,-\alpha\right) \!=\! \{\mathcal{O}\left(\mathbf{c},-\alpha\right)\;|\;\mathbf{c}\in \mathcal{C}^\perp\}$. Therefore, for any integer $\alpha$, $\{Z(\underbracket[0.5pt]{\vphantom{\tilde{\mathbf{p}}_j}\,\mathbf{0}\,}_{\mathclap{a_l+\alpha}}|\underbracket[0.5pt]{\,\tilde{\mathbf{p}}_j\,}_{\mathclap{n}}|\underbracket[0.5pt]{\vphantom{\tilde{\mathbf{p}}_j}\,\mathbf{0}\,}_{\mathclap{a_r-\alpha}})(-1)^{\tilde{\mathbf{p}}_j \cdot \mathcal{O}(\mathbf{w},-\alpha)}\}$ together with the generators in \cref{eq:stab_ancilla} generate the same group of $Z$-type stabilizers.
\end{proof}

\subsection{Proof of \cref{thm:sync_proof_subsystem}} \label{subsec:sync_subsystem_proof}

\begin{proof}
To show that for any integer $\alpha$, $\{X(\overbracket[0.5pt]{\underbracket[0.5pt]{\,\tilde{\mathbf{q}}_i\,}_{\mathclap{a_l+\alpha}}}^{\text{last}}|\underbracket[0.5pt]{\,\tilde{\mathbf{q}}_i\,}_{\mathclap{n}}|\overbracket[0.5pt]{\underbracket[0.5pt]{\,\tilde{\mathbf{q}}_i\,}_{\mathclap{a_r-\alpha}}}^{\text{first}})\}$ generates the same group of $X$-type stabilizers, we use the fact that $\mathcal{D}^\perp=\langle \tilde{\mathbf{q}}_i \rangle$ is a cyclic code and follow the proof of \cref{thm:sync_proof_subspace}.

To show that for any integer $\alpha$,
\begin{align}
    &Z(\underbracket[0.5pt]{\vphantom{\tilde{\mathbf{p}}_j}\,\mathbf{0}\,}_{\mathclap{a_l+\alpha}}|\underbracket[0.5pt]{\,\tilde{\mathbf{p}}_j\,}_{\mathclap{n}}|\underbracket[0.5pt]{\vphantom{\tilde{\mathbf{p}}_j}\,\mathbf{0}\,}_{\mathclap{a_r-\alpha}})(-1)^{\tilde{\mathbf{p}}_j\cdot \mathcal{O}\left(\mathbf{w},-\alpha\right)},\\
    &Z\left( \begin{array}{c|ccc|c}
    \mathbb{1}_{a_l} & \mathbb{0}_{a_l,a_r} & \mathbb{0}_{a_l,(n-a_l-a_r)} & \mathbb{1}_{a_l} & \mathbb{0}_{a_l,a_r} \\
    \mathbb{0}_{a_r,a_l} & \mathbb{1}_{a_r} & \mathbb{0}_{a_r,(n-a_l-a_r)} & \mathbb{0}_{a_r,a_l} & \mathbb{1}_{a_r}
    \end{array}  
    \right), \label{eq:stab_ancilla2} 
\end{align}
generate the same group of $Z$-type stabilizers, we use the fact that $\mathcal{C}^\perp=\langle \tilde{\mathbf{p}}_j \rangle$ is a cyclic code and follow the proof of \cref{thm:sync_proof_subspace}.

To show that for any integer $\alpha$, $\{X(\overbracket[0.5pt]{\underbracket[0.5pt]{\,\tilde{\mathbf{p}}_j\,}_{\mathclap{a_l+\alpha}}}^{\text{last}}|\underbracket[0.5pt]{\,\tilde{\mathbf{p}}_j\,}_{\mathclap{n}}|\overbracket[0.5pt]{\underbracket[0.5pt]{\,\tilde{\mathbf{p}}_j\,}_{\mathclap{a_r-\alpha}}}^{\text{first}})\}$ generates the same group of $X$-type gauge operators, we use the fact that $\mathcal{C}^\perp=\langle \tilde{\mathbf{p}}_j \rangle$ is a cyclic code and follow the proof of \cref{thm:sync_proof_subspace}.

To show that for any integer $\alpha$,
\begin{align}
    &Z(\underbracket[0.5pt]{\vphantom{\tilde{\mathbf{p}}_j}\,\mathbf{0}\,}_{\mathclap{a_l+\alpha}}|\underbracket[0.5pt]{\,\tilde{\mathbf{p}}_j\,}_{\mathclap{n}}|\underbracket[0.5pt]{\vphantom{\tilde{\mathbf{p}}_j}\,\mathbf{0}\,}_{\mathclap{a_r-\alpha}})(-1)^{\tilde{\mathbf{p}}_j \cdot \mathcal{O}\left(\mathbf{w},-\alpha\right)},\\
    &Z(\underbracket[0.5pt]{\vphantom{\mathbf{q}'_m}\,\mathbf{0}\,}_{\mathclap{a_l+\alpha}}|\underbracket[0.5pt]{\,\mathbf{q}'_m\,}_{\mathclap{n}}|\underbracket[0.5pt]{\vphantom{\mathbf{q}'_m}\,\mathbf{0}\,}_{\mathclap{a_r-\alpha}})(-1)^{\mathbf{q}'_m \cdot \mathcal{O}\left(\mathbf{w},-\alpha\right)},
\end{align}
and the operators in \cref{eq:stab_ancilla2} generate the same group of $Z$-type gauge operators, we start by considering $\mathcal{C}=\langle \mathbf{p}_{l'} \rangle$ and $\mathcal{D}=\langle \mathbf{q}_{m'} \rangle$, where $l' \in \{1,\dots,k_c\}, m' \in \{1,\dots,k_d\}$. Let $\mathcal{S}^{z,\alpha}_\mathcal{D}$, $\mathcal{S}^{z,\alpha}_\mathcal{C}$, and $\mathcal{S}^{z,\alpha}_{\mathcal{C}^\perp}$ be the following groups, 
\begin{align}
    \mathcal{S}^{z,\alpha}_\mathcal{D}&=\langle Z(\underbracket[0.5pt]{\vphantom{\mathbf{q}_{m'}}\,\mathbf{0}\,}_{\mathclap{a_l+\alpha}}|\underbracket[0.5pt]{\,\mathbf{q}_{m'}\,}_{\mathclap{n}}|\underbracket[0.5pt]{\vphantom{\mathbf{q}_{m'}}\,\mathbf{0}\,}_{\mathclap{a_r-\alpha}})(-1)^{\mathbf{q}_{m'} \cdot \mathcal{O}\left(\mathbf{w},-\alpha\right)}\rangle,\\
    \mathcal{S}^{z,\alpha}_\mathcal{C}&=\langle Z(\underbracket[0.5pt]{\vphantom{\mathbf{p}_{l'}}\,\mathbf{0}\,}_{\mathclap{a_l+\alpha}}|\underbracket[0.5pt]{\,\mathbf{p}_{l'}\,}_{\mathclap{n}}|\underbracket[0.5pt]{\vphantom{\mathbf{p}_{l'}}\,\mathbf{0}\,}_{\mathclap{a_r-\alpha}})(-1)^{\mathbf{p}_{l'} \cdot \mathcal{O}\left(\mathbf{w},-\alpha\right)}\rangle,\\
    \mathcal{S}^{z,\alpha}_{\mathcal{C}^\perp}&=\langle Z(\underbracket[0.5pt]{\vphantom{\tilde{\mathbf{p}}_j}\,\mathbf{0}\,}_{\mathclap{a_l+\alpha}}|\underbracket[0.5pt]{\,\tilde{\mathbf{p}}_j\,}_{\mathclap{n}}|\underbracket[0.5pt]{\vphantom{\tilde{\mathbf{p}}_j}\,\mathbf{0}\,}_{\mathclap{a_r-\alpha}})(-1)^{\tilde{\mathbf{p}}_j \cdot \mathcal{O}\left(\mathbf{w},-\alpha\right)} \rangle.
\end{align}
We can show that $\mathcal{S}^{z,\alpha}_\mathcal{D}$ (or $\mathcal{S}^{z,\alpha}_\mathcal{C}$ or $\mathcal{S}^{z,\alpha}_{\mathcal{C}^\perp}$) are the same group for all $\alpha$ up to a multiplication of some operators in \cref{eq:stab_ancilla2} using the fact that $\mathcal{D}$ (or $\mathcal{C}$ or $\mathcal{C}^\perp$) is a cyclic code and follow the proof of \cref{thm:sync_proof_subspace}.

Consider $\mathcal{C}^\perp=\langle \tilde{\mathbf{p}}_j \rangle$, $\mathcal{C}=\langle \tilde{\mathbf{p}}_j,\mathbf{s}_l^x \rangle$, $\mathcal{D}=\langle \tilde{\mathbf{p}}_j,\mathbf{s}_l^x,\mathbf{q}'_m \rangle$, where $j \in \{1,\dots, n-k_c\}$, $l \in \{1,\dots, 2k_c-n\}$, $m \in \{1,\dots, k_d-k_c\}$ (which is possible by \cref{thm:op_pairing}). We find that $\langle \tilde{\mathbf{p}}_j,\mathbf{q}'_m \rangle$ is the group $\left\langle\mathcal{D}/\mathcal{C},\mathcal{C}^\perp\right\rangle$. Therefore, the group
\begin{equation}
    \left\langle Z(\underbracket[0.5pt]{\vphantom{\tilde{\mathbf{p}}_j}\,\mathbf{0}\,}_{\mathclap{a_l+\alpha}}|\underbracket[0.5pt]{\,\tilde{\mathbf{p}}_j\,}_{\mathclap{n}}|\underbracket[0.5pt]{\vphantom{\tilde{\mathbf{p}}_j}\,\mathbf{0}\,}_{\mathclap{a_r-\alpha}})(-1)^{\tilde{\mathbf{p}}_j \cdot \mathcal{O}\left(\mathbf{w},-\alpha\right)},Z(\underbracket[0.5pt]{\vphantom{\mathbf{q}'_m}\,\mathbf{0}\,}_{\mathclap{a_l+\alpha}}|\underbracket[0.5pt]{\,\mathbf{q}'_m\,}_{\mathclap{n}}|\underbracket[0.5pt]{\vphantom{\mathbf{q}'_m}\,\mathbf{0}\,}_{\mathclap{a_r-\alpha}})(-1)^{\mathbf{q}'_m \cdot \mathcal{O}\left(\mathbf{w},-\alpha\right)}\right\rangle,
\end{equation}
is $\left\langle\mathcal{S}^{z,\alpha}_\mathcal{D} / \mathcal{S}^{z,\alpha}_\mathcal{C} ,\mathcal{S}^{z,\alpha}_{\mathcal{C}^\perp}\right\rangle$.

Since $\mathcal{S}^{z,\alpha}_\mathcal{D}$ (or $\mathcal{S}^{z,\alpha}_\mathcal{C}$ or $\mathcal{S}^{z,\alpha}_{\mathcal{C}^\perp}$) are the same group for all $\alpha$ up to a multiplication of some operators in \cref{eq:stab_ancilla2}, we find that for any integer $\alpha$, 
\begin{align}
    &Z(\underbracket[0.5pt]{\vphantom{\tilde{\mathbf{p}}_j}\,\mathbf{0}\,}_{\mathclap{a_l+\alpha}}|\underbracket[0.5pt]{\,\tilde{\mathbf{p}}_j\,}_{\mathclap{n}}|\underbracket[0.5pt]{\vphantom{\tilde{\mathbf{p}}_j}\,\mathbf{0}\,}_{\mathclap{a_r-\alpha}})(-1)^{\tilde{\mathbf{p}}_j \cdot \mathcal{O}\left(\mathbf{w},-\alpha\right)}, \\
    &Z(\underbracket[0.5pt]{\vphantom{\mathbf{q}'_m}\,\mathbf{0}\,}_{\mathclap{a_l+\alpha}}|\underbracket[0.5pt]{\,\mathbf{q}'_m\,}_{\mathclap{n}}|\underbracket[0.5pt]{\vphantom{\mathbf{q}'_m}\,\mathbf{0}\,}_{\mathclap{a_r-\alpha}})(-1)^{\mathbf{q}'_m \cdot \mathcal{O}\left(\mathbf{w},-\alpha\right)},
\end{align}
and the operators in \cref{eq:stab_ancilla2} generate the same group of $Z$-type gauge operators.
\end{proof}

\subsection{Proof of \cref{thm:CSS_hybrid}} \label{subsec:CSS_hybrid_proof}

\begin{proof}
    Similar to the original CSS construction, both the inner stabilizer group 
    \begin{equation}
        \mathcal{S}_0=\left\langle X(\mathcal{C}_z^\perp),Z(\mathcal{C}_x^\perp)\right\rangle
    \end{equation}
    and outer stabilizer group 
    \begin{equation}
        \mathcal{S}=\left\langle X(\mathcal{D}_z^\perp),Z(\mathcal{D}_x^\perp)\right\rangle
    \end{equation}
    are Abelian groups, as the support of any $X$-type stabilizer (i.e., its non-identity tensor components) will intersect with the support of any $Z$-type stabilizer at an even number of locations, causing them to commute. Additionally we have $\mathcal{S}\subseteq\mathcal{S}_0$.

    We will show that the hybrid code defined by the two stabilizer groups generate a code with the desired properties. The centralizers of the two stabilizer groups can be written in terms of the classical codes:
    \begin{align*}
        C(\mathcal{S}_0) & = \left\langle iI^{\otimes n},X(\mathcal{C}_x),Z(\mathcal{C}_z)\right\rangle, \\
        C(\mathcal{S}) & = \left\langle iI^{\otimes n},X(\mathcal{D}_x),Z(\mathcal{D}_z)\right\rangle.
    \end{align*}
    The quantum information is transmitted by the inner code and its translations, so the number of logical qubits is $n-(n-k_x)-(n-k_z)=k_x+k_z-n$. The number of logical classical messages that can be sent using an $X$-type translation operator is given by $\left\lvert \mathcal{D}_x\right\rvert / \left\lvert \mathcal{C}_x\right\rvert=2^{m_x-k_x}$, or $m_x-k_x$ logical classical bits. The same holds for $Z$-type translation operators, so the total number of logical classical bits is $m=m_x+m_z-k_x-k_z$.

    The minimum distance of a hybrid stabilizer code is given by $d = \min\wt\left(C(\mathcal{S})\setminus\left\langle iI^{\otimes n}, \mathcal{S}_0\right\rangle\right)$. Since the product of $X$ and $Z$-type Pauli elements will always have a greater weight than the smaller of the two $X$ and $Z$-type Pauli elements, the minimum distance will be the minimum of $d_x$ and $d_z$, where $d_x$ (resp. $d_z$) is the minimum weight of a nontrivial $X$-type (resp. $Z$-type) logical operator. The $X$-type logical operators of $C(\mathcal{S})$ correspond to the codewords of $\mathcal{D}_x$, while the $X$-type trivial operators of $\mathcal{S}_0$ correspond to the codewords of $\mathcal{C}_z^\perp$, so we have 
    \begin{equation}
        d_x = \min\wt\left(\mathcal{D}_x\setminus \mathcal{C}_z^\perp\right).
    \end{equation}
    The same argument applied to the $Z$-type logical operators recovers $d_z$.
\end{proof}

\subsection{Proof of \cref{thm:CSS_hybrid_subsystem}}\label{subsec:css_hybrid_subsystem_proof}

\begin{proof}
    It is easy to check that the $X$ and $Z$-type operators in the inner stabilizer group $\mathcal{S}_0$ and the outer stabilizer group $\mathcal{S}$ commute with each other. Similar to the hybrid CSS code construction, we have $\mathcal{S}\subseteq\mathcal{S}_0$. Note that the constraints $\left(\mathcal{D}_x\setminus \mathcal{C}_x\right)\cap \mathcal{C}_z^\perp=\left(\mathcal{D}_z\setminus \mathcal{C}_z\right)\cap \mathcal{C}_x^\perp=\emptyset$ guarantee that the classical logical operators are not also gauge operators.

    As in the hybrid CSS code construction, the number of classical logical operators is $m = m_x+m_z-k_x-k_z$. The code $\mathcal{C}_x\cap \mathcal{C}_z^\perp$ is the dual of $\mathcal{C}_z + \mathcal{C}_x^\perp$, and thus $\lvert \mathcal{C}_x\cap \mathcal{C}_z^\perp\rvert = 2^{n-r_z}$. The number of $X$-type gauge generators that are not also stabilizers is given by $\log_2\left(\left\lvert \mathcal{C}_z^\perp \right\rvert / \left\lvert \mathcal{C}_x\cap \mathcal{C}_z^\perp\right\rvert\right)=r_z-k_z$. Since the same must hold for the $Z$-type operators, we obtain $r = r_x-k_x = r_z-k_z$. A similar argument tells us there are $\log_2\left( \left\lvert \mathcal{C}_x \right\rvert / \left\lvert \mathcal{C}_x\cap \mathcal{C}_z^\perp \right\rvert\right)=r_z+k_x-n$ generators of $X$-type quantum logical operators. As with the gauge case, the same must hold for the $Z$-type operators, so we obtain $k = r_x+k_z-n = r_z+k_x-n$.

    From Dauphinais et al. \cite{Dauphinais2024}, the minimum distance of a hybrid subsystem code is
    \begin{equation}\label{eq:dauphinais_min_dist}
        d=\min\wt\left(\left(C(\mathcal{S}_0)\setminus\mathcal{G}_0\right)\cup\left(\bigcup\limits_{i\neq j}t_iC(\mathcal{S}_0)t_j^\dagger\right)\right),
    \end{equation}
    where $t_i$ are the classical logical operators that represent the cosets of $C(\mathcal{S})/C(\mathcal{S}_0)$. Since we are restricted to the stabilizer case, we can rewrite this set as
    \begin{align}
        \left(C(\mathcal{S}_0) \setminus\mathcal{G}_0\right)\cup\left(\bigcup\limits_{i\neq j}t_iC(\mathcal{S}_0)t_j^\dagger\right) & = \left(C(\mathcal{S}_0) \setminus\mathcal{G}_0\right)\cup\left(\bigcup\limits_{i\neq j}t_it_j^\dagger C(\mathcal{S}_0)\right) \\
        & = C(\mathcal{S}) \setminus\mathcal{G}_0.
    \end{align}
    Therefore, as in the CSS code construction, the minimum distance will be the minimum of $d_x$ and $d_z$, which are the minimum weight $X$ and $Z$-type logical operators in $C(\mathcal{S})$ that are not gauge operators, that is
    \begin{align}
        d_x & = \min\wt\left(\left(\mathcal{D}_x+\mathcal{D}_z^\perp\right)\setminus \mathcal{C}_z^\perp\right), \\
        d_z & = \min\wt\left(\left(\mathcal{D}_z+\mathcal{D}_x^\perp\right)\setminus \mathcal{C}_x^\perp\right).
    \end{align}
\end{proof}

\bibliographystyle{quantum}

\bibliography{refs}

\begin{thebibliography}{10}

\bibitem{BE21}
Nikolas~P. Breuckmann and Jens~N. Eberhardt.
\newblock ``Balanced product quantum codes''.
\newblock \href{https://dx.doi.org/10.1109/TIT.2021.3097347}{IEEE Transactions on Information Theory {\bf 67}, 6653--6674}~(2021).

\bibitem{HHJO21}
Matthew~B Hastings, Jeongwan Haah, and Ryan O'Donnell.
\newblock ``Fiber bundle codes: breaking the $n^{1/2} \mathrm{polylog}(n)$ barrier for quantum {LDPC} codes''.
\newblock In Proceedings of the 53rd Annual ACM SIGACT Symposium on Theory of Computing.
\newblock \href{https://dx.doi.org/10.1145/3406325.3451005}{Pages 1276--1288}.
\newblock ~(2021).

\bibitem{EKZ22}
Shai Evra, Tali Kaufman, and Gilles Z{\'e}mor.
\newblock ``Decodable quantum {LDPC} codes beyond the $\sqrt{n}$ distance barrier using high-dimensional expanders''.
\newblock \href{https://dx.doi.org/10.1137/20M1383689}{SIAM Journal on Computing {\bf 0}, FOCS20--276}~(2022).

\bibitem{LZ22}
Anthony Leverrier and Gilles Zémor.
\newblock ``Quantum tanner codes''.
\newblock In 2022 IEEE 63rd Annual Symposium on Foundations of Computer Science (FOCS).
\newblock \href{https://dx.doi.org/10.1109/FOCS54457.2022.00117}{Pages 872--883}.
\newblock ~(2022).

\bibitem{PK22a}
Pavel Panteleev and Gleb Kalachev.
\newblock ``Quantum {LDPC} codes with almost linear minimum distance''.
\newblock \href{https://dx.doi.org/10.1109/TIT.2021.3119384}{IEEE Transactions on Information Theory {\bf 68}, 213--229}~(2022).

\bibitem{PK22b}
Pavel Panteleev and Gleb Kalachev.
\newblock ``Asymptotically good quantum and locally testable classical {LDPC} codes''.
\newblock In Proceedings of the 54th Annual ACM SIGACT Symposium on Theory of Computing.
\newblock \href{https://dx.doi.org/10.1145/3519935.3520017}{Pages 375--388}.
\newblock ~(2022).

\bibitem{DHLV23}
Irit Dinur, Min-Hsiu Hsieh, Ting-Chun Lin, and Thomas Vidick.
\newblock ``Good quantum {LDPC} codes with linear time decoders''.
\newblock In Proceedings of the 55th annual ACM symposium on theory of computing.
\newblock \href{https://dx.doi.org/10.1145/3564246.3585101}{Pages 905--918}.
\newblock ~(2023).

\bibitem{FGL20}
Omar Fawzi, Antoine Grospellier, and Anthony Leverrier.
\newblock ``Constant overhead quantum fault tolerance with quantum expander codes''.
\newblock \href{https://dx.doi.org/10.1145/3434163}{Commun. ACM {\bf 64}, 106–114}~(2020).

\bibitem{CKBB22}
Lawrence~Z. Cohen, Isaac~H. Kim, Stephen~D. Bartlett, and Benjamin~J. Brown.
\newblock ``Low-overhead fault-tolerant quantum computing using long-range connectivity''.
\newblock \href{https://dx.doi.org/10.1126/sciadv.abn1717}{Science Advances {\bf 8}, eabn1717}~(2022).

\bibitem{TDB22}
Maxime~A. Tremblay, Nicolas Delfosse, and Michael~E. Beverland.
\newblock ``Constant-overhead quantum error correction with thin planar connectivity''.
\newblock \href{https://dx.doi.org/10.1103/PhysRevLett.129.050504}{Phys. Rev. Lett. {\bf 129}, 050504}~(2022).

\bibitem{BCGM+24}
Sergey Bravyi, Andrew~W Cross, Jay~M Gambetta, Dmitri Maslov, Patrick Rall, and Theodore~J Yoder.
\newblock ``High-threshold and low-overhead fault-tolerant quantum memory''.
\newblock \href{https://dx.doi.org/10.1038/s41586-024-07107-7}{Nature {\bf 627}, 778--782}~(2024).

\bibitem{XBPR+24}
Qian Xu, J~Pablo Bonilla~Ataides, Christopher~A Pattison, Nithin Raveendran, Dolev Bluvstein, Jonathan Wurtz, Bane Vasi{\'c}, Mikhail~D Lukin, Liang Jiang, and Hengyun Zhou.
\newblock ``Constant-overhead fault-tolerant quantum computation with reconfigurable atom arrays''.
\newblock \href{https://dx.doi.org/10.1038/s41567-024-02479-z}{Nature Physics {\bf 20}, 1084--1090}~(2024).

\bibitem{YK24}
Hayata Yamasaki and Masato Koashi.
\newblock ``Time-efficient constant-space-overhead fault-tolerant quantum computation''.
\newblock \href{https://dx.doi.org/10.1038/s41567-023-02325-8}{Nature Physics {\bf 20}, 247--253}~(2024).

\bibitem{ZZCB+24}
Hengyun Zhou, Chen Zhao, Madelyn Cain, Dolev Bluvstein, Casey Duckering, Hong-Ye Hu, Sheng-Tao Wang, Aleksander Kubica, and Mikhail~D. Lukin.
\newblock ``Algorithmic fault tolerance for fast quantum computing''~(2024).
\newblock  \href{http://arxiv.org/abs/2406.17653}{arXiv:2406.17653}.

\bibitem{EDNR+21}
Laird Egan, Dripto~M Debroy, Crystal Noel, Andrew Risinger, Daiwei Zhu, Debopriyo Biswas, Michael Newman, Muyuan Li, Kenneth~R Brown, Marko Cetina, et~al.
\newblock ``Fault-tolerant control of an error-corrected qubit''.
\newblock \href{https://dx.doi.org/10.1038/s41586-021-03928-y}{Nature {\bf 598}, 281--286}~(2021).

\bibitem{EPMP+21}
Alexander Erhard, Hendrik Poulsen~Nautrup, Michael Meth, Lukas Postler, Roman Stricker, Martin Stadler, Vlad Negnevitsky, Martin Ringbauer, Philipp Schindler, Hans~J Briegel, et~al.
\newblock ``Entangling logical qubits with lattice surgery''.
\newblock \href{https://dx.doi.org/https://doi.org/10.1038/s41586-020-03079-6}{Nature {\bf 589}, 220--224}~(2021).

\bibitem{RBLG+21}
C.~Ryan-Anderson, J.~G. Bohnet, K.~Lee, D.~Gresh, A.~Hankin, J.~P. Gaebler, D.~Francois, A.~Chernoguzov, D.~Lucchetti, N.~C. Brown, T.~M. Gatterman, S.~K. Halit, K.~Gilmore, J.~A. Gerber, B.~Neyenhuis, D.~Hayes, and R.~P. Stutz.
\newblock ``Realization of real-time fault-tolerant quantum error correction''.
\newblock \href{https://dx.doi.org/10.1103/PhysRevX.11.041058}{Phys. Rev. X {\bf 11}, 041058}~(2021).

\bibitem{PHPR+22}
Lukas Postler, Sascha Heu$\beta$en, Ivan Pogorelov, Manuel Rispler, Thomas Feldker, Michael Meth, Christian~D Marciniak, Roman Stricker, Martin Ringbauer, Rainer Blatt, et~al.
\newblock ``Demonstration of fault-tolerant universal quantum gate operations''.
\newblock \href{https://dx.doi.org/10.1038/s41586-022-04721-1}{Nature {\bf 605}, 675--680}~(2022).

\bibitem{RBAA+22}
C.~Ryan-Anderson, N.~C. Brown, M.~S. Allman, B.~Arkin, G.~Asa-Attuah, C.~Baldwin, J.~Berg, J.~G. Bohnet, S.~Braxton, N.~Burdick, et~al.
\newblock ``Implementing fault-tolerant entangling gates on the five-qubit code and the color code''~(2022).
\newblock  \href{http://arxiv.org/abs/2208.01863}{arXiv:2208.01863}.

\bibitem{google23}
Rajeev Acharya, Igor Aleiner, Richard Allen, Trond~I. Andersen, Markus Ansmann, Frank Arute, Kunal Arya, Abraham Asfaw, Juan Atalaya, Ryan Babbush, et~al.
\newblock ``Suppressing quantum errors by scaling a surface code logical qubit''.
\newblock \href{https://dx.doi.org/10.1038/s41586-022-05434-1}{Nature {\bf 614}, 676--681}~(2023).

\bibitem{SERS+23}
VV~Sivak, Alec Eickbusch, Baptiste Royer, Shraddha Singh, Ioannis Tsioutsios, Suhas Ganjam, Alessandro Miano, BL~Brock, AZ~Ding, Luigi Frunzio, et~al.
\newblock ``Real-time quantum error correction beyond break-even''.
\newblock \href{https://dx.doi.org/10.1038/s41586-023-05782-6}{Nature {\bf 616}, 50--55}~(2023).

\bibitem{BEGL+24}
Dolev Bluvstein, Simon~J Evered, Alexandra~A Geim, Sophie~H Li, Hengyun Zhou, Tom Manovitz, Sepehr Ebadi, Madelyn Cain, Marcin Kalinowski, Dominik Hangleiter, et~al.
\newblock ``Logical quantum processor based on reconfigurable atom arrays''.
\newblock \href{https://dx.doi.org/10.1038/s41586-023-06927-3}{Nature {\bf 626}, 58--65}~(2024).

\bibitem{DRBC+24}
M.~P. da~Silva, C.~Ryan-Anderson, J.~M. Bello-Rivas, A.~Chernoguzov, J.~M. Dreiling, C.~Foltz, F.~Frachon, J.~P. Gaebler, T.~M. Gatterman, L.~Grans-Samuelsson, D.~Hayes, N.~Hewitt, et~al.
\newblock ``Demonstration of logical qubits and repeated error correction with better-than-physical error rates''~(2024).
\newblock  \href{http://arxiv.org/abs/2404.02280}{arXiv:2404.02280}.

\bibitem{Google2024}
Rajeev Acharya, Laleh Aghababaie-Beni, Igor Aleiner, Trond~I. Andersen, Markus Ansmann, Frank Arute, Kunal Arya, Abraham Asfaw, Nikita Astrakhantsev, Juan Atalaya, et~al.
\newblock ``Quantum error correction below the surface code threshold''~(2024).
\newblock  \href{http://arxiv.org/abs/2408.13687}{arXiv:2408.13687}.

\bibitem{MRRB+14}
C.~Monroe, R.~Raussendorf, A.~Ruthven, K.~R. Brown, P.~Maunz, L.-M. Duan, and J.~Kim.
\newblock ``Large-scale modular quantum-computer architecture with atomic memory and photonic interconnects''.
\newblock \href{https://dx.doi.org/10.1103/PhysRevA.89.022317}{Phys. Rev. A {\bf 89}, 022317}~(2014).

\bibitem{HIVC+15}
David Hucul, Ismail~V Inlek, Grahame Vittorini, Clayton Crocker, Shantanu Debnath, Susan~M Clark, and Christopher Monroe.
\newblock ``Modular entanglement of atomic qubits using photons and phonons''.
\newblock \href{https://dx.doi.org/10.1038/nphys3150}{Nature Physics {\bf 11}, 37--42}~(2015).

\bibitem{BKM16}
Kenneth~R Brown, Jungsang Kim, and Christopher Monroe.
\newblock ``Co-designing a scalable quantum computer with trapped atomic ions''.
\newblock \href{https://dx.doi.org/10.1038/npjqi.2016.34}{npj Quantum Information {\bf 2}, 1--10}~(2016).

\bibitem{BCRW+19}
Brandon Buonacorsi, Zhenyu Cai, Eduardo~B Ramirez, Kyle~S Willick, Sean~M Walker, Jiahao Li, Benjamin~D Shaw, Xiaosi Xu, Simon~C Benjamin, and Jonathan Baugh.
\newblock ``Network architecture for a topological quantum computer in silicon''.
\newblock \href{https://dx.doi.org/10.1088/2058-9565/aaf3c4}{Quantum Science and Technology {\bf 4}, 025003}~(2019).

\bibitem{BKLN+21}
Hector Bombin, Isaac~H. Kim, Daniel Litinski, Naomi Nickerson, Mihir Pant, Fernando Pastawski, Sam Roberts, and Terry Rudolph.
\newblock ``Interleaving: Modular architectures for fault-tolerant photonic quantum computing''~(2021).
\newblock  \href{http://arxiv.org/abs/2103.08612}{arXiv:2103.08612}.

\bibitem{GPSR+21}
Alysson Gold, JP~Paquette, Anna Stockklauser, Matthew~J Reagor, M~Sohaib Alam, Andrew Bestwick, Nicolas Didier, Ani Nersisyan, Feyza Oruc, Armin Razavi, et~al.
\newblock ``Entanglement across separate silicon dies in a modular superconducting qubit device''.
\newblock \href{https://dx.doi.org/10.1038/s41534-021-00484-1}{npj Quantum Information {\bf 7}, 142}~(2021).

\bibitem{BDGG22}
Sergey Bravyi, Oliver Dial, Jay~M. Gambetta, Dar{\'\i}o Gil, and Zaira Nazario.
\newblock ``{The future of quantum computing with superconducting qubits}''.
\newblock \href{https://dx.doi.org/10.1063/5.0082975}{Journal of Applied Physics {\bf 132}, 160902}~(2022).

\bibitem{BBBC+23}
Sara Bartolucci, Patrick Birchall, Hector Bombin, Hugo Cable, Chris Dawson, Mercedes Gimeno-Segovia, Eric Johnston, Konrad Kieling, Naomi Nickerson, Mihir Pant, et~al.
\newblock ``Fusion-based quantum computation''.
\newblock \href{https://dx.doi.org/10.1038/s41467-023-36493-1}{Nature Communications {\bf 14}, 912}~(2023).

\bibitem{NZLQ+23}
Jingjing Niu, Libo Zhang, Yang Liu, Jiawei Qiu, Wenhui Huang, Jiaxiang Huang, Hao Jia, Jiawei Liu, Ziyu Tao, Weiwei Wei, et~al.
\newblock ``Low-loss interconnects for modular superconducting quantum processors''.
\newblock \href{https://dx.doi.org/10.1038/s41928-023-00925-z}{Nature Electronics {\bf 6}, 235--241}~(2023).

\bibitem{SB23}
Armands Strikis and Lucas Berent.
\newblock ``Quantum low-density parity-check codes for modular architectures''.
\newblock \href{https://dx.doi.org/10.1103/PRXQuantum.4.020321}{PRX Quantum {\bf 4}, 020321}~(2023).

\bibitem{RSBV24}
Joshua Ramette, Josiah Sinclair, Nikolas~P Breuckmann, and Vladan Vuleti{\'c}.
\newblock ``Fault-tolerant connection of error-corrected qubits with noisy links''.
\newblock \href{https://dx.doi.org/10.1038/s41534-024-00855-4}{npj Quantum Information {\bf 10}, 58}~(2024).

\bibitem{sklar2021}
Bernard Sklar.
\newblock ``Digital communications: fundamentals and applications''.
\newblock Pearson. ~(2021).

\bibitem{bregni2002}
Stefano Bregni.
\newblock ``Synchronization of digital telecommunications networks''.
\newblock John Wiley \& Sons, Inc. USA~(2002).

\bibitem{Fujiwara2013}
Yuichiro Fujiwara.
\newblock ``Block synchronization for quantum information''.
\newblock \href{https://dx.doi.org/10.1103/PhysRevA.87.022344}{Phys. Rev. A {\bf 87}, 022344}~(2013).

\bibitem{Calderbank1996}
A.~R. Calderbank and Peter~W. Shor.
\newblock ``Good quantum error-correcting codes exist''.
\newblock \href{https://dx.doi.org/10.1103/PhysRevA.54.1098}{Phys. Rev. A {\bf 54}, 1098--1105}~(1996).

\bibitem{Steane1996}
A.~M. Steane.
\newblock ``Simple quantum error-correcting codes''.
\newblock \href{https://dx.doi.org/10.1103/PhysRevA.54.4741}{Phys. Rev. A {\bf 54}, 4741--4751}~(1996).

\bibitem{Fujiwara2013-Jul}
Yuichiro Fujiwara, Vladimir~D. Tonchev, and Tony W.~H. Wong.
\newblock ``Algebraic techniques in designing quantum synchronizable codes''.
\newblock \href{https://dx.doi.org/10.1103/PhysRevA.88.012318}{Phys. Rev. A {\bf 88}, 012318}~(2013).

\bibitem{Fujiwara2013-Nov}
Yuichiro Fujiwara and Peter Vandendriessche.
\newblock ``Quantum synchronizable codes from finite geometries''.
\newblock \href{https://dx.doi.org/10.1109/TIT.2014.2357029}{IEEE Transactions on Information Theory {\bf 60}, 7345--7354}~(2014).

\bibitem{Xie2014}
Yixuan Xie, Jinhong Yuan, and Yuichiro Fujiwara.
\newblock ``Quantum synchronizable codes from quadratic residue codes and their supercodes''.
\newblock In 2014 IEEE Information Theory Workshop (ITW 2014).
\newblock \href{https://dx.doi.org/10.1109/ITW.2014.6970815}{Pages 172--176}.
\newblock ~(2014).

\bibitem{Xie2016}
Yixuan Xie, Lei Yang, and Jinhong Yuan.
\newblock ``q-ary chain-containing quantum synchronizable codes''.
\newblock \href{https://dx.doi.org/10.1109/LCOMM.2015.2512261}{IEEE Communications Letters {\bf 20}, 414--417}~(2016).

\bibitem{Guenda2017}
K.~Guenda, G.~G. La~Guardia, and T.~A. Gulliver.
\newblock ``Algebraic quantum synchronizable codes''.
\newblock \href{https://dx.doi.org/10.1007/s12190-016-1042-7}{J. Appl. Math. Comput. {\bf 55}, 393--407}~(2017).

\bibitem{Luo2018}
Lan Luo and Zhi Ma.
\newblock ``Non-binary quantum synchronizable codes from repeated-root cyclic codes''.
\newblock \href{https://dx.doi.org/10.1109/TIT.2018.2795479}{IEEE Transactions on Information Theory {\bf 64}, 1461--1470}~(2018).

\bibitem{Kribs2005}
David Kribs, Raymond Laflamme, and David Poulin.
\newblock ``Unified and generalized approach to quantum error correction''.
\newblock \href{https://dx.doi.org/10.1103/PhysRevLett.94.180501}{Phys. Rev. Lett. {\bf 94}, 180501}~(2005).

\bibitem{Poulin2005}
David Poulin.
\newblock ``Stabilizer formalism for operator quantum error correction''.
\newblock \href{https://dx.doi.org/10.1103/PhysRevLett.95.230504}{Phys. Rev. Lett. {\bf 95}, 230504}~(2005).

\bibitem{Bacon2006}
Dave Bacon.
\newblock ``Operator quantum error-correcting subsystems for self-correcting quantum memories''.
\newblock \href{https://dx.doi.org/10.1103/PhysRevA.73.012340}{Phys. Rev. A {\bf 73}, 012340}~(2006).

\bibitem{Kremsky2008}
Isaac Kremsky, Min-Hsiu Hsieh, and Todd~A. Brun.
\newblock ``Classical enhancement of quantum-error-correcting codes''.
\newblock \href{https://dx.doi.org/10.1103/PhysRevA.78.012341}{Phys. Rev. A {\bf 78}, 012341}~(2008).

\bibitem{Grassl2017}
Markus Grassl, Sirui Lu, and Bei Zeng.
\newblock ``Codes for simultaneous transmission of quantum and classical information''.
\newblock In 2017 IEEE International Symposium on Information Theory (ISIT).
\newblock \href{https://dx.doi.org/10.1109/ISIT.2017.8006823}{Pages 1718--1722}.
\newblock ~(2017).

\bibitem{Nemec2018}
Andrew Nemec and Andreas Klappenecker.
\newblock ``Hybrid codes''.
\newblock In 2018 IEEE International Symposium on Information Theory (ISIT).
\newblock \href{https://dx.doi.org/10.1109/ISIT.2018.8437497}{Pages 796--800}.
\newblock ~(2018).

\bibitem{Nemec2021}
Andrew Nemec and Andreas Klappenecker.
\newblock ``Infinite families of quantum-classical hybrid codes''.
\newblock \href{https://dx.doi.org/10.1109/TIT.2021.3051037}{IEEE Transactions on Information Theory {\bf 67}, 2847--2856}~(2021).

\bibitem{Nemec2022}
Andrew Nemec and Andreas Klappenecker.
\newblock ``Encoding classical information in gauge subsystems of quantum codes''.
\newblock \href{https://dx.doi.org/10.1142/S0219749921500416}{International Journal of Quantum Information {\bf 20}, 2150041}~(2022).

\bibitem{Dauphinais2024}
Guillaume Dauphinais, David~W. Kribs, and Michael Vasmer.
\newblock ``Stabilizer formalism for operator algebra quantum error correction''.
\newblock \href{https://dx.doi.org/10.22331/q-2024-02-21-1261}{Quantum {\bf 8}, 1261}~(2024).

\bibitem{Gottesman97}
Daniel Gottesman.
\newblock ``{Stabilizer Codes and Quantum Error Correction}''.
\newblock \href{https://dx.doi.org/10.7907/rzr7-dt72}{PhD thesis}.
\newblock California Institute of Technology.
\newblock ~(1997).

\bibitem{CS96}
A~Robert Calderbank and Peter~W Shor.
\newblock ``Good quantum error-correcting codes exist''.
\newblock \href{https://dx.doi.org/10.1103/PhysRevA.54.1098}{Physical Review A {\bf 54}, 1098}~(1996).

\bibitem{Steane96b}
Andrew Steane.
\newblock ``Multiple-particle interference and quantum error correction''.
\newblock \href{https://dx.doi.org/10.1098/rspa.1996.0136}{Proceedings of the Royal Society of London. Series A: Mathematical, Physical and Engineering Sciences {\bf 452}, 2551--2577}~(1996).

\bibitem{Aly2006}
Salah~A. Aly, Andreas Klappenecker, and Pradeep~Kiran Sarvepalli.
\newblock ``Subsystem codes''.
\newblock In Proceedings of the 44th Annual Allerton Conference on Communication, Control, and Computing (Allerton).
\newblock \href{https://dx.doi.org/10.48550/arXiv.quant-ph/0610153}{Pages 528--535}.
\newblock ~(2006).

\bibitem{Liu2024}
Michael~Liaofan Liu, Nathanan Tantivasadakarn, and Victor~V. Albert.
\newblock ``Subsystem {CSS} codes, a tighter stabilizer-to-{CSS} mapping, and {G}oursat's lemma''.
\newblock \href{https://dx.doi.org/10.22331/q-2024-07-10-1403}{Quantum {\bf 8}, 1403}~(2024).

\bibitem{Gottesman98}
Daniel Gottesman.
\newblock ``The {H}eisenberg representation of quantum computers''~(1998).
\newblock  \href{http://arxiv.org/abs/quant-ph/9807006}{arXiv:quant-ph/9807006}.

\bibitem{GB03}
Markus Grassl and Thomas Beth.
\newblock ``Cyclic quantum error–correcting codes and quantum shift registers''.
\newblock \href{https://dx.doi.org/10.1098/rspa.2000.0633}{Proceedings of the Royal Society of London. Series A: Mathematical, Physical and Engineering Sciences {\bf 456}, 2689--2706}~(2000).

\bibitem{Wilde09}
Mark~M. Wilde.
\newblock ``Quantum-shift-register circuits''.
\newblock \href{https://dx.doi.org/10.1103/PhysRevA.79.062325}{Phys. Rev. A {\bf 79}, 062325}~(2009).

\bibitem{PR13}
Adam Paetznick and Ben~W. Reichardt.
\newblock ``Universal fault-tolerant quantum computation with only transversal gates and error correction''.
\newblock \href{https://dx.doi.org/10.1103/PhysRevLett.111.090505}{Phys. Rev. Lett. {\bf 111}, 090505}~(2013).

\bibitem{ADP14}
Jonas~T. Anderson, Guillaume Duclos-Cianci, and David Poulin.
\newblock ``Fault-tolerant conversion between the {S}teane and {R}eed-{M}uller quantum codes''.
\newblock \href{https://dx.doi.org/10.1103/PhysRevLett.113.080501}{Phys. Rev. Lett. {\bf 113}, 080501}~(2014).

\bibitem{Shor96}
P.W. Shor.
\newblock ``Fault-tolerant quantum computation''.
\newblock In Proceedings of 37th Conference on Foundations of Computer Science.
\newblock \href{https://dx.doi.org/10.1109/SFCS.1996.548464}{Pages 56--65}.
\newblock ~(1996).

\bibitem{DA07}
David~P DiVincenzo and Panos Aliferis.
\newblock ``Effective fault-tolerant quantum computation with slow measurements''.
\newblock \href{https://dx.doi.org/10.1103/PhysRevLett.98.020501}{Physical Review Letters {\bf 98}, 020501}~(2007).

\bibitem{TPB23}
Theerapat Tansuwannont, Balint Pato, and Kenneth~R. Brown.
\newblock ``Adaptive syndrome measurements for {S}hor-style error correction''.
\newblock \href{https://dx.doi.org/10.22331/q-2023-08-08-1075}{{Quantum} {\bf 7}, 1075}~(2023).

\bibitem{Knill05a}
Emanuel Knill.
\newblock ``Scalable quantum computing in the presence of large detected-error rates''.
\newblock \href{https://dx.doi.org/10.1103/PhysRevA.71.042322}{Physical Review A {\bf 71}, 042322}~(2005).

\bibitem{CR17a}
Rui Chao and Ben~W Reichardt.
\newblock ``Quantum error correction with only two extra qubits''.
\newblock \href{https://dx.doi.org/10.1103/PhysRevLett.121.050502}{Physical Review Letters {\bf 121}, 050502}~(2018).

\bibitem{CR20}
Rui Chao and Ben~W Reichardt.
\newblock ``Flag fault-tolerant error correction for any stabilizer code''.
\newblock \href{https://dx.doi.org/10.1103/PRXQuantum.1.010302}{PRX Quantum {\bf 1}, 010302}~(2020).

\bibitem{TCL20}
Theerapat Tansuwannont, Christopher Chamberland, and Debbie Leung.
\newblock ``Flag fault-tolerant error correction, measurement, and quantum computation for cyclic {C}alderbank-{S}hor-{S}teane codes''.
\newblock \href{https://dx.doi.org/10.1103/PhysRevA.101.012342}{Physical Review A {\bf 101}, 012342}~(2020).

\bibitem{DMLWD24}
Chao Du, Zhi Ma, Yiting Liu, Hong Wang, and Qianheng Duan.
\newblock ``Low-depth flagged syndrome extraction for {C}alderbank-{S}hor-{S}teane codes of distance 3''.
\newblock \href{https://dx.doi.org/10.1109/TIT.2024.3425853}{IEEE Transactions on Information Theory {\bf 70}, 6326--6349}~(2024).

\bibitem{AM22}
Benjamin Anker and Milad Marvian.
\newblock ``Flag gadgets based on classical codes''~(2024).
\newblock  \href{http://arxiv.org/abs/2212.10738}{arXiv:2212.10738}.

\bibitem{Kitaev03}
A.Yu. Kitaev.
\newblock ``Fault-tolerant quantum computation by anyons''.
\newblock \href{https://dx.doi.org/10.1016/S0003-4916(02)00018-0}{Annals of Physics {\bf 303}, 2--30}~(2003).

\bibitem{BK98}
S.~B. Bravyi and A.~Yu. Kitaev.
\newblock ``Quantum codes on a lattice with boundary''~(1998).
\newblock  \href{http://arxiv.org/abs/quant-ph/9811052}{arXiv:quant-ph/9811052}.

\bibitem{DKLP02}
Eric Dennis, Alexei Kitaev, Andrew Landahl, and John Preskill.
\newblock ``{Topological quantum memory}''.
\newblock \href{https://dx.doi.org/10.1063/1.1499754}{Journal of Mathematical Physics {\bf 43}, 4452--4505}~(2002).

\bibitem{BM06}
Hector Bombin and Miguel~Angel Martin-Delgado.
\newblock ``Topological quantum distillation''.
\newblock \href{https://dx.doi.org/10.1103/PhysRevLett.97.180501}{Physical Review Letters {\bf 97}, 180501}~(2006).

\bibitem{BDH06}
Todd Brun, Igor Devetak, and Min-Hsiu Hsieh.
\newblock ``Correcting quantum errors with entanglement''.
\newblock \href{https://dx.doi.org/10.1126/science.1131563}{Science {\bf 314}, 436--439}~(2006).

\bibitem{Wilde09b}
Mark~M. Wilde.
\newblock ``Logical operators of quantum codes''.
\newblock \href{https://dx.doi.org/10.1103/PhysRevA.79.062322}{Phys. Rev. A {\bf 79}, 062322}~(2009).

\end{thebibliography}

\end{document}